\newtheorem{ex}{Example}
\newcommand{\bex}{\begin{ex}}
\newcommand{\eex}{\end{ex}}
\newtheorem{definition}{Definition}
\newtheorem{lemma}{Lemma}
\newtheorem{proposition}{Proposition}
\newtheorem{assum}{Assumption}
\newtheorem{corollary}{Corollary}
\newcommand{\rr}{\mathbb{R}}
\newcommand{\ma}{\max\limits}
\begin{document}
\title{Wariness and Poverty Traps
 }
\author{Hai Ha PHAM\footnote{Department of Mathematics, International University, Hochiminh city  and Vietnam National University Hochiminh city, Vietnam. \textit{Email}: phha@hcmiu.edu.vn. Address:  Linh Trung, Thu Duc, Hochiminh, Vietnam.}\quad \quad \quad   Ngoc-Sang PHAM \footnote{Corresponding author. EM Normandie Business School, M\'etis Lab. {\it Emails}: pns.pham@gmail.com, npham@em-normandie.fr. Phone:  +33 2 50 32 04 08. Address:  EM Normandie (campus Caen), 9 Rue Claude Bloch, 14000 Caen, France.}}


\date{\today}
\maketitle
\begin{abstract}
We investigate the effects of wariness (defined as individuals' concern for their minimum utility over time) on poverty traps and equilibrium multiplicity in an overlapping generations (OLG) model. We explore conditions under which (i) wariness amplifies or mitigates the likelihood of poverty traps in the economy and (ii) it gives rise to multiple intertemporal equilibria. Furthermore, we conduct comparative statics to characterize these effects and to examine how the interplay between wariness, productivity, and factor substitutability influences the dynamics of the economy.\\
\newline
{\bf Keywords:} wariness, overlapping generations,  economic growth, multiple equilibria, poverty trap, CES production function, capital intensity, elasticity of factor substitution.\\
\textbf{JEL Classifications}: D14, D5, E71, O41.
\end{abstract}

\section{Introduction}

Poverty trap is one of the fundamental issues in economics, in particular in development economics. The existing literature \citep{Azariadis1996,as05} provides several mechanisms to explain why some countries remain poor, including (1) weak, corrupt, or predatory institutions, (2) weak financial system, (3) poor countries often fail to adopt modern technologies, (4) lack of human capital (poor health and low schooling), (5) presence of high-fixed costs, ...

However, the literature pays little attention to the role of individual's preferences. The standard macroeconomic models \citep{cm02} assume that individuals maximize the discounted sum of their utilities. However, there is evidence, not only in economics but also in other fields, suggesting that individuals may care not only about the discounted sum of utilities but also about the worst outcome experienced over time. For instance, in economics, \cite{gs89} study the maximin expected utility. In psychological science, \citet{Kahnemanetal93}, \cite{RedelmeierKahneman96} show that individuals evaluating past experiences (e.g., medical procedures) put strong emphasis on the worst moment (the "peak") and the final moment, rather than integrating utility smoothly across time. In health contexts, \cite{DolanKahneman2008} report that overall well-being assessments are dominated by experiences of severe pain.


Motivated by this evidence and following \cite{pp24}, we consider an overlapping generations (OLG) model in which individuals take care not only of the discounted sum of utilities, but also of the minimum utility over time and then explore the effects of this behavior on poverty traps. Formally, when a consumer lives for two periods, we assume that her(his) intertemporal utility is given by 
\begin{align}\label{modeling}
\text{Intertemporal utility: }U(c,d)&=(1-\lambda){\big(u(c)+\beta u(d)\big)}+\lambda \min\big\{u(c),u(d)\big\},
\end{align}
where $c, d$ represent the consumer's consumption in the present and the future respectively, $\beta$ is the rate of time preference.\footnote{The modeling of wariness in (\ref{modeling}) can be considered as a finite-horizon version of the utility function $\sum_{t\geq 0}\beta^tu(c_t)+a \inf_{t\geq 0}u(c_t)$ described in Section 3.1.1 in \cite{wariness} or  Example 1 in \cite{wariness4} and \cite{wariness2}, where wariness can be viewed as a kind of ambiguity aversion on a set of discount factors.} The parameter $\lambda\in [0,1]$ can be interpreted as the wariness degree of the household: the higher the value of $\lambda$, the more the household cares about her(his) minimum utility across time. When $\lambda = 0$, we recover the standard case \citep{cm02}. When $\lambda=1$, the agent cares only about the minimum of her consumptions over time.



Our paper aims to explore the role of wariness on the dynamics of capital stocks, poverty traps, and possibility of multiple equilibria. 

The first part of our paper focuses on the convergence and multiplicity of equilibrium. We provide conditions under which there exists a unique equilibrium and the capital path converges. This happens when, for example, the utility function $u$ has the intertemporal elasticity of substitution bounded below by $1$ (i.e.,  $xu'(x)$ is increasing) and the production function $f$ satisfies the condition that the function $x(1+f'(x))$ is increasing. However, these conditions may be violated under CES production functions with the elasticity of factor substitution (EFS hereafter) lower than $1$.\footnote{Empirically, \cite{KlumpIrmen07} estimated, by using data of the U.S. economy from 1953 to 1998, that the elasticity of factor substitution is significantly below unity. See also \cite{Klumpetal12},  \cite{Knoblachetal2020} more complete reviews.}

In a standard OLG model without wariness, it is known (see \cite{cm02}'s Section 1.5.2 among others) that the uniqueness of intertemporal equilibrium is guaranteed if the intertemporal elasticity of substitution is bounded below by $1$. We contribute by arguing that, under the presence of wariness, this insight is no longer true. The intuition is that wariness affects the saving behavior of households and then the dynamics of capital stocks, which creates a room for multiple equilibria. With the same parameters, there may exist two different equilibria: an equilibrium where economic activities collapse (in the sense that the capital path converges to zero) and another equilibrium whose capital path converges to a stable steady-state. 

In the second part of our article, we show how wariness affects poverty traps. Due to the presence of wariness as in (\ref{modeling}), the dynamics of capital is represented by a nonlinear, piecewise-smooth dynamical system. There are several situations depending on the interplay between wariness and capital return.

In the first situation (characterized by low productivity and low wariness), we show that wariness exerts a positive effect on the dynamics of the economy: the higher the degree of wariness, the smaller the set of poverty traps, and the greater the likelihood of escaping them. The intuition is as follows. When productivity is low, the return on capital is also low. Consequently, households expect lower income in old age relative to their income when young. Under low levels of wariness, households place greater weight on consumption in old age. This leads them to reduce current consumption and increase savings when young, thus  boosting investment and mitigating the risk of falling into a poverty trap. Our result has an important policy implication: for a developing country, a moderate wariness would be good for the economic dynamics.

In the second situation (high productivity and low wariness) wariness instead has a negative effect on the dynamics of the economy: the higher the level of wariness, the larger the set of poverty traps. In other situations (intermediate or high productivity and wariness), we also prove that a poverty trap exists. In addition, we show that when the productivity is very low, the economy collapses.

 Although the literature \citep{Azariadis1996,as05, cm02}  discussed the role of the elasticity of factor substitution in  poverty traps,  it did not explicitly provide comparative statics. Our contribution is to fill this gap. Under logarithmic utility and CES (constant-elasticity-of-substitution) production functions, we manage to conduct comparative statics to show how the factor substitutability affects poverty traps. 
\begin{enumerate}
\item We find that the higher the capital intensity (or fraction of goods that have been automated by machines), the larger the set of poverty traps. This leads to an interesting implication: when more goods can be automated, we need more capital to avoid poverty trap. 

\item\label{introductionpoint2} However, the effects of the elasticity of factor substitution (EFS) on the poverty trap can be negative or positive, depending on the interaction between several economic variables (see Lemmas \ref{CesFull}, Propositions \ref{cesll}, \ref{ces_l2}). When the EFS is quite high or the productivity is quite low, we prove that the higher the EFS, the smaller the set of poverty traps, the better chance to prevent a poverty trap. However,  with  intermediate EFS and productivity, a higher elasticity of substitution can enhance the possibility of poverty trap.
\end{enumerate}
We complement our theoretical results by running several numerical simulations, which helps us to better understand the role of wariness and factor substitutability on the dynamics of the economy.



\paragraph{Related literatures.}

To our knowledge, \cite{pp24} is the first introducing the modeling  (\ref{modeling}) in an OLG model and studying its effects on  economic growth.  However,  they did not study poverty traps and excluded CES production functions. We extend \cite{pp24}   by studying the possibility of poverty traps and multiple equilibria. Our analyses, including comparative statics, cover the case of CES production functions.

Our paper is related to the literature on endogenous discounting.  Indeed, we can rewrite (\ref{modeling1}) as follows: 
 \begin{align}
    U(c,d)=  \beta_1(c,d) u(c) +\beta_2(c,d) u(d), 
 \end{align}
 where the endogenous discount factors $\beta_1(c,d), \beta_2(c,d)$ are given by:  If $c\leq d$, then $\beta_1(c,d)=1,\beta_2(c,d)=(1-\lambda)\beta $. If $c>d$, then $\beta_1(c,d)= (1-\lambda),\beta_2(c,d)=(1-\lambda)\beta+\lambda$. It means that the discount factors are functions of consumptions.\footnote{Note that the functions $\beta_1(c,d), \beta_2(c,d)$ are not differentiable.}
 
 \cite{wariness, wariness3, wariness2}, \cite{wariness4} consider the utility function of the form $\sum_{t\geq 0}\beta^tu(c_t)+a \inf_{t\geq 0}u(c_t)$ in general equilibrium models with infinitely-lived agents but without production. However, they focus on the effects of parameter $a$ on asset bubbles.  \cite{tm22} study the optimal capital path of an infinite-horizon model with Ramsey-Rawls criterion with the objective function is $\sum_{t\geq 0}\beta^tu(c_t)+a \inf_{t\geq 0}u(c_t)$. 
\cite{tm22} provide conditions under which the optimal capital stock is constant over time or coincides with the solution to the Ramsey problem (that is, when the parameter $a=0$). Unlike these articles, we use an OLG model and explore the interplay between wariness and poverty traps.

Our paper concerns the literature of the role of endogenous discounting on economic dynamics. \cite{levanJME2011}, \cite{bhh25}, \cite{5authors2025}  focus on optimal growth models with endogenous discount rates (see more references in these papers).\footnote{For example, \cite{levanJME2011} assume that the infinitely-lived agent has the utility function $\sum_{t\geq 1}\beta(x_1)\cdots \beta(x_t)u(c_t)$ while \cite{bhh25} consider $\sum_{t\geq 1}\beta(c_1)\cdots \beta(c_{t-1})u(c_t)$, where $x_t, c_t$ are the capital stock and consumption at date $t$, and $\beta(\cdot)$ is the discount function.}  They show that the optimal path is monotonic over time (see Proposition 8 in \cite{levanJME2011}, Proposition 2.5 in \cite{bhh25}, Theorem 3.1 in \cite{5authors2025}). Under mild conditions, they prove the existence of a poverty trap.  By contrast, we work with an OLG model, which is, in general, more tractable and allows us to obtain more detailed analyses, including comparative statics, equilibrium multiplicity, and especially to explore the role of wariness on poverty traps.

Our paper is related to the role of elasticity of factor substitution and economic growth.\footnote{See \cite{Klumpetal12}, \cite{Knoblachetal2020} for surveys on CES production functions and elasticity of factor substitution.}  Several papers \citep{KlumpPreissler00, KlumpdeLaGrandville00, KlumpSaam08, KlumpIrmen07, Klumpetal08} study the effect of the EFS on the growth rates and the per capita income. For example,  in a neoclassical growth model \`a la Solow, \cite{KlumpdeLaGrandville00} prove that the elasticity of substitution has a positive impact on the capital-labor and income
per head. However, in an OLG model, \cite{Miyagiwa1Papageorgiou03} show that  there exists no such monotonic relationship between factor substitutability and growth. We contribute by showing  the (non-monotonic) effects of the EFS on poverty traps (see point \ref{introductionpoint2} above).

Our article has a link with the literature on the impacts of uncertainty on economic development because uncertainty may generate a wariness for economic agents.  Several empirical papers \citep{kumar23,bloometal22} document a negative relationship between high uncertainty and firms' investment.  From a theoretical point of view, \cite{fukuda08} uses a OLG model and assumes that producers face a Knightian uncertainty in their technologies. Considering a logarithmic utility function, he shows that a poverty trap can arise.  
While these papers focus on the behavior of firms, we study how wariness in households' preferences affects poverty trap. Our novel insight is that when the productivity is low, a low level of wariness in the households'  preferences may be beneficial to the economy and reduce the possibility of poverty trap because it motivates households to save more and, by the way, improve the investment.

The remainder of the paper is organized as follows. In Section \ref{model}, we introduce wariness in an OLG model and provide basic properties of intertemporal equilibrium.  Section \ref{section3} studies the existence, the uniqueness, the multiplicity, and the convergence of equilibrium. Section \ref{section4trap} focuses on the impacts of wariness on poverty traps under general settings and provides numerical simulations. Section \ref{conclu} concludes. Formal proofs are gathered in the appendix section.

\section{An OLG Model with wariness}\label{model}

\subsection{Household and wariness}
 At period $t$, $N_{t}$ individuals are born. We assume that the population growth rate is  constant over time and denote $n\equiv N_{t+1}/N_{t}$. Each consumer-worker lives two
periods. When young, he(she) supplies one unit of labor, earns a labor income, consumes $c_{t}$ and saves $s_{t}$.  When old, he(she) receives the income from her saving and consumes $d_{t+1}$. 

Following \cite{pp24}, we introduce wariness in a standard two-period OLG model \citep{cm02} by assuming that the utility of each household born at date $t$ is given by
\begin{align}\label{modeling1}
(1-\lambda) {\big(u(c_t)+\beta u(d_{t+1})\big)}+\lambda \min\big\{u(c_t),u(d_{t+1})\big\}
\end{align}
 where $\lambda\in [0,1]$ represents the wariness of this individual.  
When $\lambda = 0$, we recover the standard case. When $\lambda=1$, the agent only cares about the minimum of her(his) consumption $\min (u(c_t), u(d_{t+1}))$.

When $\lambda<1$, we denote $\gamma=\frac{\lambda}{1-\lambda}$. Then, $\gamma$ varies between $0$ and $+\infty$. It also represents the wariness of the household. The maximization problem of household born at date $t$ is given by
\begin{align}
(P_{c,t}): \quad &\ma_{(c_t,d_{t+1},s_t)}\Big[U(c_t,d_t)\equiv u(c_t)+\beta u(d_{t+1})+\gamma \min (u(c_t), u(d_{t+1}))\Big]\nonumber\\
\label{bc1}&c_{t}+s_{t}\leq  w_{t}, \quad d_{t+1} \leq R_{t+1}s_{t}, \quad  c_t,d_{t+1},s_t\geq 0,
\end{align}
where $w_t$ is the wage at date $t$ while $R_{t+1}$ represents
the capital return between time $t$ and $t+1$.

 We require standard assumptions as in \cite{cm02}.
\begin{assum}\label{HU}The function $u$ is twice continuously differentiable, strictly increasing, strictly concave and $u'(0)=\infty$.
\end{assum}
Under Assumption \ref{HU}, the function $U(c,d)$ is strictly concave. So, for given $w_t,R_{t+1}>0$, the maximization problem of the household born at date $t$ has a unique solution, and then we can define the saving function. 

\begin{definition}\label{sbeta}
(1) For $w_t,R_{t+1}>0$,  denote $s_t=s(w_t,R_{t+1})$ the optimal saving of the household problem $(P_{c,t})$.

(2) Given $\beta >0$, $w>0, R>0$, we define $s_{\beta}(w, R)$  the unique solution of the following equation $u'\left(w - s\right) = \beta R  u'\left(R s\right).$
\end{definition}


Under Assumptions \ref{HU}, $s_{\beta}(w, R)$ is uniquely well-defined. 
Since $u$ is strictly concave, the function $u'\left(w - s\right)- \beta R  u'\left(R s\right)$ is strictly increasing in $s$. By consequence, $s_{\beta}(w, R)$ is strictly increasing in $\beta$. So, we have the following result.

\begin{lemma}\label{rmk_saving}Let Assumption \ref{HU} be satisfied. Let $w>0,R>0$ be given. If $\beta>\beta'>0$, then $s_{\beta}(w, R)>s_{\beta'}(w, R)$.
\end{lemma}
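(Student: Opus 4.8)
This is a monotone comparative statics statement, and the plan is to argue directly from the defining equation rather than differentiating an implicit function (which would be awkward since $u$ is only $C^2$ and nothing forces $s_\beta$ to vary smoothly in $\beta$). Fix $w>0$, $R>0$ and define, for $s\in(0,w)$,
\begin{align*}
g(s,\beta)\;\equiv\; u'(w-s)-\beta R\,u'(Rs).
\end{align*}
By Definition \ref{sbeta}(2), $s_\beta=s_\beta(w,R)$ is the unique zero of $g(\cdot,\beta)$. First I would record that this zero lies in the open interval $(0,w)$: since $u'(0)=\infty$ under Assumption \ref{HU}, we have $g(s,\beta)\to -\infty$ as $s\downarrow 0$ (the term $-\beta R u'(Rs)$ blows up) and $g(s,\beta)\to +\infty$ as $s\uparrow w$ (the term $u'(w-s)$ blows up, while $\beta R u'(Rw)$ stays finite), so in particular $s_\beta>0$ and $s_{\beta'}>0$. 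This positivity is the one point that really needs Assumption \ref{HU}, and it is what makes the later inequality strict rather than weak.

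Next I would note the two monotonicity properties of $g$. Because $u$ is strictly concave, $u'$ is strictly decreasing; hence $s\mapsto u'(w-s)$ is strictly increasing and $s\mapsto -\beta R u'(Rs)$ is strictly increasing, so $g(\cdot,\beta)$ is strictly increasing on $(0,w)$ — this is exactly the observation already made in the text preceding the lemma. On the other hand, for any fixed $s\in(0,w)$ we have $R\,u'(Rs)>0$, so $\beta\mapsto g(s,\beta)$ is strictly decreasing.

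Now combine the two. Since $\beta>\beta'$ and $s_{\beta'}\in(0,w)$, strict monotonicity in $\beta$ gives
\begin{align*}
g(s_{\beta'},\beta)\;<\;g(s_{\beta'},\beta')\;=\;0\;=\;g(s_{\beta},\beta).
\end{align*}
Since $g(\cdot,\beta)$ is strictly increasing on $(0,w)$, the inequality $g(s_{\beta'},\beta)<g(s_\beta,\beta)$ forces $s_{\beta'}<s_\beta$, which is the claim. I do not expect any genuine obstacle here; the only thing to be careful about is justifying $s_{\beta'}>0$ (so that $R u'(Rs_{\beta'})>0$ strictly) before invoking strict monotonicity in $\beta$, which is handled by the $u'(0)=\infty$ boundary condition as above. $\qquad\blacksquare$
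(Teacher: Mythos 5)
Your argument is correct and is essentially the paper's own: the paper notes that $u'(w-s)-\beta R u'(Rs)$ is strictly increasing in $s$ and deduces monotonicity of the zero in $\beta$ from the fact that the function shifts strictly downward as $\beta$ increases. Your extra care in locating $s_{\beta'}\in(0,w)$ via $u'(0)=\infty$, so that $Ru'(Rs_{\beta'})>0$ and the inequality is strict, is a welcome tightening of a step the paper leaves implicit.
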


Denote\textcolor{blue}{
\begin{align}
\beta_1\equiv \beta+\gamma,  \beta_2\equiv \frac{\beta}{1+\gamma}.
\end{align}} As in \cite{pp24}, we have the following result showing the optimal solution of households.
\begin{proposition}\label{proposition1} Let Assumption \ref{HU} be satisfied. Let $w_t,R_{t+1}>0$.
	\begin{enumerate}
		\item[(1)]For $\lambda\in [0,1)$ (or, equivalently, $\gamma \in [0,\infty)$), the optimal saving of the household problem $(P_{c,t})$, denoted by $s_t=s(w_t,R_{t+1})$, is given by 
		\begin{align}\label{sfunction}
			s_t=s(w_t,R_{t+1}) = \begin{cases}
				s_{\beta_1}(w_t, R_{t+1}) & \text{ if } R_{t+1} < \frac{1}{\gamma+\beta}\\
				\frac{w_t}{1+R_{t+1}} &\text{ if }\frac{1}{\gamma+\beta} \leq R_{t+1}\leq \frac{1+\gamma}{\beta}\\
				s_{\beta_2}(w_t, R_{t+1})&\text{ if } R_{t+1} >  \frac{1+\gamma}{\beta}
			\end{cases}.
		\end{align}
		Moreover, we observe that
		\begin{enumerate}
		
		\item  $ R_{t+1} \lesseqgtr \frac{1}{\gamma+\beta} \Leftrightarrow s_{\beta_1}(w_t, R_{t+1}) \lesseqgtr	\frac{w_t}{1+R_{t+1}}$. 
	\item If $\frac{1}{\gamma+\beta} \leq R_{t+1}\leq \frac{1+\gamma}{\beta}$, then $s_{\beta_1}(w_t, R_{t+1})\geq 	\frac{w_t}{1+R_{t+1}} \geq s_{\beta_2}(w_t, R_{t+1})$. 
		\item $R_{t+1}\lesseqgtr  \frac{1+\gamma}{\beta} \Leftrightarrow s_{\beta_2}(w_t, R_{t+1})\lesseqgtr	\frac{w_t}{1+R_{t+1}}$. 
		\end{enumerate}
				\item[(2)]When households only care about the  lowest level of consumption over the life-cycle (i.e. when $\gamma=+\infty$, or equivalently, $\lambda=1$), we have $s_t=\frac{w_t}{1+R_{t+1}}$.
	\end{enumerate}

\end{proposition}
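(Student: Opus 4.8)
The plan is to collapse $(P_{c,t})$ to a one-dimensional, globally concave maximization and then read off the maximizer from the piecewise structure of the objective.

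\textbf{Reduction.} Since $u$ is strictly increasing, both budget constraints in (\ref{bc1}) bind at any optimum, so $c_t=w_t-s_t$, $d_{t+1}=R_{t+1}s_t$, and $(P_{c,t})$ becomes $\max_{s\in[0,w_t]}V(s)$ with
$$V(s):=u(w_t-s)+\beta u(R_{t+1}s)+\gamma\,u\!\left(\min\{w_t-s,\,R_{t+1}s\}\right),$$
using $\min\{u(c),u(d)\}=u(\min\{c,d\})$. The map $s\mapsto\min\{w_t-s,R_{t+1}s\}$ is the infimum of two affine functions, hence concave; composing with the increasing concave $u$ preserves concavity, and adding the strictly concave term $u(w_t-s)$ makes $V$ strictly concave on $[0,w_t]$, so the maximizer is unique (this re-proves that $s(w_t,R_{t+1})$ is well defined). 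Let $\bar s:=\frac{w_t}{1+R_{t+1}}$, the unique point where $w_t-s=R_{t+1}s$, i.e. $c_t=d_{t+1}$. On $[0,\bar s]$ we have $R_{t+1}s\le w_t-s$, so $V(s)=u(w_t-s)+(\gamma+\beta)u(R_{t+1}s)$, the standard objective with discount factor $\beta_1$; on $[\bar s,w_t]$ we have $V(s)=(1+\gamma)u(w_t-s)+\beta u(R_{t+1}s)$, which is $(1+\gamma)$ times the standard objective with discount factor $\beta_2$.

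\textbf{Observations (a)--(c).} For $j\in\{1,2\}$ set $g_j(s):=u'(w_t-s)-\beta_jR_{t+1}u'(R_{t+1}s)$. Strict concavity of $u$ makes $g_j$ strictly increasing on $(0,w_t)$, with $g_j(0^+)=-\infty$ and $g_j(w_t^-)=+\infty$ (here $u'(0)=\infty$ is used), so $g_j$ has a unique zero, which is $s_{\beta_j}(w_t,R_{t+1})\in(0,w_t)$. Evaluating at $\bar s$, where $w_t-\bar s=R_{t+1}\bar s$, gives $g_j(\bar s)=u'(R_{t+1}\bar s)\,(1-\beta_jR_{t+1})$, so $\operatorname{sign}g_j(\bar s)=\operatorname{sign}(1-\beta_jR_{t+1})$. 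Since $\beta_1=\gamma+\beta$, monotonicity of $g_1$ yields $R_{t+1}\lesseqgtr\frac{1}{\gamma+\beta}\iff g_1(\bar s)\gtreqless 0\iff s_{\beta_1}(w_t,R_{t+1})\lesseqgtr\bar s$, which is (a); likewise, $\beta_2=\frac{\beta}{1+\gamma}$ gives (c). Claim (b) follows by combining (a) and (c) and noting $\frac{1}{\gamma+\beta}\le\frac{1+\gamma}{\beta}$, so the interval in (b) is nonempty.

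\textbf{Locating the maximizer.} On $(0,\bar s)$ one computes $V'(s)=-g_1(s)$ and on $(\bar s,w_t)$ one computes $V'(s)=-(1+\gamma)g_2(s)$, so on each piece the sign of $V'$ is dictated by the position of $s$ relative to $s_{\beta_1}$, resp. $s_{\beta_2}$. If $R_{t+1}<\frac{1}{\gamma+\beta}$, then $s_{\beta_1}<\bar s$ by (a) and also $s_{\beta_2}<\bar s$ by (c) (since $R_{t+1}<\frac{1}{\gamma+\beta}\le\frac{1+\gamma}{\beta}$), so $V$ increases on $(0,s_{\beta_1})$ and decreases on $(s_{\beta_1},w_t)$, giving $s_t=s_{\beta_1}(w_t,R_{t+1})$. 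If $\frac{1}{\gamma+\beta}\le R_{t+1}\le\frac{1+\gamma}{\beta}$, then $s_{\beta_1}\ge\bar s\ge s_{\beta_2}$ by (b), so $V$ is nondecreasing on $(0,\bar s)$ and nonincreasing on $(\bar s,w_t)$, giving $s_t=\bar s=\frac{w_t}{1+R_{t+1}}$. If $R_{t+1}>\frac{1+\gamma}{\beta}$, then $s_{\beta_1}>\bar s$ by (a) and $s_{\beta_2}>\bar s$ by (c), so $V$ increases on $(0,\bar s)$ and on $(\bar s,w_t)$ rises to $s_{\beta_2}$ then falls, giving $s_t=s_{\beta_2}(w_t,R_{t+1})$. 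This establishes (\ref{sfunction}). For part (2), when $\lambda=1$ the objective is $\min\{u(c_t),u(d_{t+1})\}=u(\min\{w_t-s,R_{t+1}s\})$, and $\min\{w_t-s,R_{t+1}s\}$ is maximized exactly where its increasing and decreasing branches cross, i.e. at $s=\frac{w_t}{1+R_{t+1}}$. The only delicate point is the kink of $V$ at $\bar s$, which precludes a single global first-order condition; the argument sidesteps it by working piecewise with the strictly increasing $g_1,g_2$ and invoking global strict concavity to certify the located point as the unique global maximizer, the remaining bookkeeping being the strict bounds $0<s_{\beta_j}(w_t,R_{t+1})<w_t$ that make an interior peak of a piece a genuine maximizer of $V$ on that piece.
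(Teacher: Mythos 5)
Your proof is correct. Note that the paper itself does not prove this proposition: it is stated as imported from Pham and Pham (2024) ("As in \cite{pp24}, we have the following result"), so your argument is a self-contained substitute rather than a variant of an in-paper proof. The structure you use is the natural one: binding the budget constraints, writing the objective as $V(s)=u(w_t-s)+\beta u(R_{t+1}s)+\gamma u(\min\{w_t-s,R_{t+1}s\})$, observing strict concavity (min of affine functions composed with increasing concave $u$, plus a strictly concave term), and splitting at $\bar s=\frac{w_t}{1+R_{t+1}}$ so that $V$ coincides with the $\beta_1$-objective on $[0,\bar s]$ and with $(1+\gamma)$ times the $\beta_2$-objective on $[\bar s,w_t]$. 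The sign computation $g_j(\bar s)=u'(R_{t+1}\bar s)(1-\beta_j R_{t+1})$ cleanly delivers observations (a)--(c), and the piecewise sign analysis of $V'$ (with $V'=-g_1$ to the left of the kink and $V'=-(1+\gamma)g_2$ to the right) correctly handles the nondifferentiability at $\bar s$, which is indeed the only delicate point; strict concavity then certifies global optimality of the located critical point in each of the three regimes, and the $\lambda=1$ case follows from single-peakedness of $\min\{w_t-s,R_{t+1}s\}$. The use of $u'(0)=\infty$ to place $s_{\beta_j}$ strictly inside $(0,w_t)$ is the right way to rule out corner solutions.
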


When there is no wariness, the optimal saving is $s_t=s_{\beta}(w_t, R_{t+1})$.  Since $\beta_1\geq \beta\geq \beta_2$, Lemma \ref{rmk_saving} implies that $
s_{\beta_1}(w_t, R_{t+1})\geq s_{\beta}(w_t, R_{t+1})\geq s_{\beta_2}(w_t, R_{t+1}).$ 
It means that the saving of the household under the presence of wariness can be higher or lower than that in the case without wariness.  It depends on the relationship between the interest rate $R_t$ and the wariness level $\gamma$.  

Let us explain the intuition in (\ref{sfunction}). The first regime is when $R_{t+1} < \frac{1}{\gamma+\beta}$. This condition can be rewritten as $R_{t+1} < \frac{1}{\beta}$ (the capital return is low) and $\gamma<\frac{1}{R_{t+1}}-\beta$ (the wariness is low). In this case, the saving is $s_{\beta_1}(w_t, R_{t+1}) $ which is higher than the saving in the case without wariness $s_{\beta}(w_t, R_{t+1})$. Indeed, when the capital return is low, the household's expected income when old $d_{t+1}$ would be low while her income when young $c_t$ would be high. So, in the presence of low wariness, the household cares more about her consumption when old. This implies that the household consumes less and saves more when young.

A similar interpretation applies for the third regime, i.e., when $R_{t+1} >  \frac{1+\gamma}{\beta}$ (the capital return is high $\beta R_{t+1}-1>0$ and the wariness is low $\gamma<\beta R_{t+1}-1$).

The intermediate regime (i.e., when $\frac{1}{\gamma+\beta} \leq R_{t+1}\leq \frac{1+\gamma}{\beta}$ which is equivalent to $\gamma\geq \max(\beta R_{t+1}-1,1-\beta R_{t+1})$) can be interpreted as the {\it high wariness}. In this case, the consumptions when young and old are the same, and the saving equals $\frac{w_t}{1+R_{t+1}}$. 

Observe that if the degree of wariness increases, the intermediate regime enlarges and the difference between the two saving functions $s_{\beta_1}$ and $s_{\beta_2}$ raises.

According to Proposition \ref{proposition1} and \cite{pp24}, we present the following results exploring the effects of the wariness on the optimal saving.  

 \begin{corollary}
 \label{wariness_saving}  Let Assumption \ref{HU} be satisfied.  Given $\omega_t>0$ and $R_{t+1}>0$. We denote $s_t(\gamma)$ the optimal saving of the household with the wariness level $\gamma$. Let  $\gamma_1<\gamma_2$. We have different situations.
 	\begin{enumerate}
\item If $ R_{t+1} <\frac{1}{\gamma_1 +\beta }$, then $s_t{(\gamma_1) }  <s_t{(\gamma_2)}$.
\item If $\frac{1}{\gamma_1 + \beta}\leq 
R_{t+1} \leq \frac{1+\gamma_1}{\beta}$,  then $s_t{(\gamma_1) }  = s_t{(\gamma_2)} =\frac{w_t}{1+R_{t+1}}$
\item If $\frac{1+\gamma_1}{\beta}<R_{t+1}$, then $s_t{(\gamma_1)}>s_t{(\gamma_2)}$.
 	\end{enumerate}
 \end{corollary}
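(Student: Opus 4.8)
The plan is to read the result off the closed-form saving function in Proposition~\ref{proposition1} together with the monotonicity of $s_\beta$ in $\beta$ given by Lemma~\ref{rmk_saving}. The single structural fact I would use throughout is that, as $\gamma$ increases, the lower threshold $\frac{1}{\gamma+\beta}$ appearing in~\eqref{sfunction} decreases while the upper threshold $\frac{1+\gamma}{\beta}$ increases; hence the ``intermediate'' interval $\big[\frac{1}{\gamma+\beta},\frac{1+\gamma}{\beta}\big]$ widens with $\gamma$ and always contains $\frac1\beta$. Given this, the proof is a case analysis on the position of the fixed number $R_{t+1}$, organized by the three hypotheses of the corollary, with a further split inside cases~1 and~3 because the hypothesis on $\gamma_1$ does not pin down which regime $\gamma_2$ falls in.

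First I would dispose of case~2. If $\frac{1}{\gamma_1+\beta}\le R_{t+1}\le\frac{1+\gamma_1}{\beta}$, then $\frac{1}{\gamma_2+\beta}<\frac{1}{\gamma_1+\beta}\le R_{t+1}\le\frac{1+\gamma_1}{\beta}<\frac{1+\gamma_2}{\beta}$, so $R_{t+1}$ also lies in the intermediate interval for $\gamma_2$, and Proposition~\ref{proposition1} gives $s_t(\gamma_1)=s_t(\gamma_2)=\frac{w_t}{1+R_{t+1}}$.

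For case~1, the hypothesis $R_{t+1}<\frac{1}{\gamma_1+\beta}$ gives $s_t(\gamma_1)=s_{\beta+\gamma_1}(w_t,R_{t+1})$ and forces $R_{t+1}<\frac1\beta<\frac{1+\gamma_2}{\beta}$. If in addition $R_{t+1}<\frac{1}{\gamma_2+\beta}$, then $s_t(\gamma_2)=s_{\beta+\gamma_2}(w_t,R_{t+1})$, and $\beta+\gamma_2>\beta+\gamma_1$ together with Lemma~\ref{rmk_saving} yields $s_t(\gamma_1)<s_t(\gamma_2)$. Otherwise $\frac{1}{\gamma_2+\beta}\le R_{t+1}\le\frac{1+\gamma_2}{\beta}$, so $s_t(\gamma_2)=\frac{w_t}{1+R_{t+1}}$, while the equivalence ``$R_{t+1}<\frac{1}{\gamma_1+\beta}\iff s_{\beta+\gamma_1}(w_t,R_{t+1})<\frac{w_t}{1+R_{t+1}}$'' from Proposition~\ref{proposition1} makes this a strict inequality $s_t(\gamma_1)<\frac{w_t}{1+R_{t+1}}=s_t(\gamma_2)$. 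Case~3 is the mirror image: $R_{t+1}>\frac{1+\gamma_1}{\beta}$ gives $s_t(\gamma_1)=s_{\beta/(1+\gamma_1)}(w_t,R_{t+1})$ and forces $R_{t+1}>\frac1\beta>\frac{1}{\gamma_2+\beta}$; if moreover $R_{t+1}>\frac{1+\gamma_2}{\beta}$, then $s_t(\gamma_2)=s_{\beta/(1+\gamma_2)}(w_t,R_{t+1})$ and, since $\frac{\beta}{1+\gamma_2}<\frac{\beta}{1+\gamma_1}$, Lemma~\ref{rmk_saving} gives $s_t(\gamma_2)<s_t(\gamma_1)$; otherwise $s_t(\gamma_2)=\frac{w_t}{1+R_{t+1}}$ while the equivalence ``$R_{t+1}>\frac{1+\gamma_1}{\beta}\iff s_{\beta/(1+\gamma_1)}(w_t,R_{t+1})>\frac{w_t}{1+R_{t+1}}$'' gives $s_t(\gamma_1)>\frac{w_t}{1+R_{t+1}}=s_t(\gamma_2)$. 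I would also note that the endpoint $\gamma_2=+\infty$ (i.e. $\lambda=1$) is subsumed in the above: then $s_t(\gamma_2)=\frac{w_t}{1+R_{t+1}}$ by part~(2) of Proposition~\ref{proposition1}, and since $\frac{1}{\gamma_2+\beta}=0$ and $\frac{1+\gamma_2}{\beta}=+\infty$ this is exactly the sub-case already treated in cases~1 and~3.

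I do not expect a genuine obstacle here: the statement is essentially a restatement of~\eqref{sfunction}. The only points requiring care are (i) keeping the sub-case split in cases~1 and~3 exhaustive — which is why the inequalities $R_{t+1}<\frac1\beta$ (in case~1) and $R_{t+1}>\frac1\beta$ (in case~3) must be written out explicitly to locate $\gamma_2$'s regime relative to the remaining threshold — and (ii) invoking the appropriate one of the two ``$\lesseqgtr$'' equivalences of Proposition~\ref{proposition1}, so that the comparison of $s_t(\gamma_1)$ with $\frac{w_t}{1+R_{t+1}}$ comes out strict rather than merely weak.
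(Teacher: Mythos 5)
Your case analysis is correct and complete, and it is exactly the argument the paper intends: the corollary is stated as a direct consequence of Proposition~\ref{proposition1} (the formula~(\ref{sfunction}) together with the $\lesseqgtr$ equivalences) and Lemma~\ref{rmk_saving}, with the paper leaving the routine sub-case bookkeeping (including the widening of the intermediate interval in $\gamma$ and the $\gamma_2=+\infty$ endpoint) implicit. Nothing is missing and no step fails.
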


By comparing $\beta R_{t+1}$ with $1$, we obtain the following result showing the monotonicity of the saving function with respect to the wariness level $\gamma$. 
\begin{corollary} \label{coro1} Let Assumption \ref{HU} be satisfied. 
	\begin{enumerate}
		\item [(1)]
If $\beta R_{t+1}<1$ then the optimal saving is increasing in $\gamma$ for $\gamma>0$. 

		\item [(2)]
		 If $\beta R_{t+1}>1$ then the optimal saving is decreasing in $\gamma$ for $\gamma>0$. 
	\end{enumerate}
\end{corollary}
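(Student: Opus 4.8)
The plan is to obtain Corollary \ref{coro1} as an immediate consequence of Corollary \ref{wariness_saving}, by using the sign of $\beta R_{t+1}-1$ to rule out one of the three regimes for every admissible pair $0<\gamma_1<\gamma_2$. The preliminary observation is that, since $\gamma_1\geq 0$, we have $\frac{1}{\gamma_1+\beta}\leq\frac{1}{\beta}\leq\frac{1+\gamma_1}{\beta}$, so the three conditions ``$R_{t+1}<\frac{1}{\gamma_1+\beta}$'', ``$\frac{1}{\gamma_1+\beta}\leq R_{t+1}\leq\frac{1+\gamma_1}{\beta}$'', ``$R_{t+1}>\frac{1+\gamma_1}{\beta}$'' partition the possible values of $R_{t+1}$: exactly one of the three cases of Corollary \ref{wariness_saving} applies to the pair $(\gamma_1,\gamma_2)$.

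For part (1), assume $\beta R_{t+1}<1$, i.e., $R_{t+1}<\frac{1}{\beta}\leq\frac{1+\gamma_1}{\beta}$. Then the third case of Corollary \ref{wariness_saving} cannot occur, so for any $0<\gamma_1<\gamma_2$ either the first case applies, giving $s_t(\gamma_1)<s_t(\gamma_2)$, or the second one applies, giving $s_t(\gamma_1)=s_t(\gamma_2)=\frac{w_t}{1+R_{t+1}}$. In both cases $s_t(\gamma_1)\leq s_t(\gamma_2)$, which is the asserted monotonicity. If one wants the finer statement, I would also note, using (\ref{sfunction}), that $s_t(\gamma)=s_{\beta_1}(w_t,R_{t+1})$ with $\beta_1=\beta+\gamma$ strictly increasing in $\gamma$ as long as $\gamma<\frac{1}{R_{t+1}}-\beta$, so $\gamma\mapsto s_t(\gamma)$ is strictly increasing on $\big(0,\frac{1}{R_{t+1}}-\beta\big)$ by Lemma \ref{rmk_saving}, and equals the constant $\frac{w_t}{1+R_{t+1}}$ for $\gamma\geq\frac{1}{R_{t+1}}-\beta$.

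For part (2), assume $\beta R_{t+1}>1$, i.e., $R_{t+1}>\frac{1}{\beta}\geq\frac{1}{\gamma_1+\beta}$. Now the first case of Corollary \ref{wariness_saving} is impossible, so for $0<\gamma_1<\gamma_2$ only the second case ($s_t(\gamma_1)=s_t(\gamma_2)$) or the third case ($s_t(\gamma_1)>s_t(\gamma_2)$) can hold; hence $s_t(\gamma_1)\geq s_t(\gamma_2)$, i.e., the optimal saving is decreasing in $\gamma$. As above, one checks via (\ref{sfunction}) and Lemma \ref{rmk_saving} that it is in fact strictly decreasing on $\big(0,\beta R_{t+1}-1\big)$ --- there $s_t(\gamma)=s_{\beta_2}(w_t,R_{t+1})$ with $\beta_2=\frac{\beta}{1+\gamma}$ strictly decreasing in $\gamma$ --- and constant equal to $\frac{w_t}{1+R_{t+1}}$ for $\gamma\geq\beta R_{t+1}-1$.

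Since the whole argument is a rereading of Corollary \ref{wariness_saving}, I do not expect any substantive obstacle; the only delicate point is the sign bookkeeping when inverting the regime-defining inequalities (those relating $R_{t+1}$ to $\frac{1}{\gamma+\beta}$ and to $\frac{1+\gamma}{\beta}$), so as to be sure that the hypothesis on $\beta R_{t+1}-1$ really does force $\gamma_1$ into the range where only the two ``monotone-in-the-same-direction'' cases survive.
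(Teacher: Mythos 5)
Your proposal is correct and follows essentially the same route as the paper: the paper obtains Corollary \ref{coro1} directly from Corollary \ref{wariness_saving} (itself a restatement of Proposition \ref{proposition1}) by comparing $\beta R_{t+1}$ with $1$ to see which regimes can occur, exactly as you do. Your extra remark on strict monotonicity of $s_{\beta_1}$ and $s_{\beta_2}$ in $\gamma$ via Lemma \ref{rmk_saving}, with constancy at $\frac{w_t}{1+R_{t+1}}$ beyond the threshold, is consistent with the paper's intended (weak) monotonicity statement.
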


\subsection{Production}Technology is represented by a constant returns to scale, concave production function 
$F\left( K,L\right) $ where $K$ and $L$ are the aggregate capital
and the labor forces.  
Given the capital return $R_t$ and the wage rate $L_t$, the representative firm maximizes its profit by choosing the allocation $(K_t,L_t)$. The firm's profit maximization problem is 
\begin{align}  (P_{f,t}): \quad &\ma_{K_t,L_t\geq 0}\Big(F(K_t,L_t)-R_tK_t-w_tL_t\Big)
\end{align}
Denote $k_{t}\equiv K_{t}/L_{t}$ denotes the capital intensity and  \textcolor{blue}{$f(k) \equiv F( k,1)$.}

\begin{assum}\label{assum2}The function $f$ is  twice continuously differentiable, strictly increasing, strictly concave, and $f(k)>0,\forall k>0$.
\end{assum}
Here, we allow the cases where $f'(0)<\infty$ or/and $f'(\infty)>0$, including the CES production function.\footnote{\cite{pp24} assume that $f(0)=0$, $f(\infty)=\infty$, $f'(0)=\infty$ and $f'(\infty)=0$, which rule out the CES production function.}


\subsection{Intertemporal equilibrium}

\begin{definition}\label{def1} An intertemporal equilibrium is a positive sequence $(R_{t},w_{t},c_{t},d_{t+1}, s_{t},K_{t+1},L_{t})
_{t\geq 0}$ which satisfies the following conditions: (1) given the sequence $(R_t, w_t)_{t\geq 0}$, the allocation $(K_t,L_t)$ is a solution to the problem $(P_{f,t})$ and the allocation $(c_t,s_t,d_{t+1}$) is a solution to the problem $(P_t)$; (2) market clearing conditions:%
\begin{align*}
\text{physical capital}& :K_{t+1}=N_{t}s_{t} \\
\text{labor}& :L_{t}=N_{t} \\
\text{consumption good}& :s_{t}+c_{t}+d_{t}/n=f\left( k_{t}\right),
\end{align*}%
\end{definition}

In equilibrium, we have $k_t>0, \forall t$. So, the profit maximization implies that 
\begin{equation}
R_{t}=f^{\prime}(k_{t}) \text{ and }w_{t}=\omega (k_{t}).  \label{73}
\end{equation}%
where the wage function $\omega: \rr_+\to \rr$ is defined by \textcolor{blue}{$\omega(k) \equiv f(k) -kf^{\prime }(k) $ $\forall k.$}

\begin{lemma}Let $k_0>0$ be given and  Assumptions \ref{HU} and \ref{assum2} hold. A positive sequence $(R_{t},w_{t},c_{t},d_{t+1}, s_{t},K_{t+1},L_{t})
_{t\geq 0}$ is an intertemporal equilibrium if and only if 
\begin{subequations}
\begin{align*}
w_t&=\omega(k_t), R_{t}=f'(k_t),\quad  L_t=N_t, \quad K_{t+1}=N_{t}s_{t}\\
c_{t}&=\omega_{t}-s_{t}, \quad d_{t+1} = R_{t+1}s_{t}\\
s_t&=nk_{t+1}=s\big(\omega(k_t),f'(k_{t+1})\big).
\end{align*}
\end{subequations}
\end{lemma}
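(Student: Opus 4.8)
The plan is to unpack Definition~\ref{def1} termwise and show that each defining requirement of an intertemporal equilibrium is equivalent to one of the displayed relations, arguing in both directions; this is essentially bookkeeping, so I will only flag the nonroutine steps.

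For the forward implication I would start from an intertemporal equilibrium. Labor market clearing gives $L_t=N_t$ at once. Since $k_t=K_t/L_t>0$ along the equilibrium, the firm's problem $(P_{f,t})$ has an interior optimum; constant returns to scale and concavity of $F$ make the first-order conditions both necessary and sufficient, and dividing them by $L_t$ yields $R_t=f'(k_t)$ and $w_t=f(k_t)-k_tf'(k_t)=\omega(k_t)$, i.e.\ equation~(\ref{73}). Combining $K_{t+1}=N_ts_t$ with $L_{t+1}=N_{t+1}=nN_t$ gives $k_{t+1}=s_t/n$, hence $s_t=nk_{t+1}$. On the household side, Assumption~\ref{HU} (with $u$ strictly increasing and $u'(0)=\infty$) forces both constraints in~(\ref{bc1}) to bind at an interior point, so $c_t=w_t-s_t=\omega(k_t)-s_t$ and $d_{t+1}=R_{t+1}s_t$, while by Definition~\ref{sbeta}(1) and Proposition~\ref{proposition1} the optimal saving is $s_t=s(w_t,R_{t+1})=s(\omega(k_t),f'(k_{t+1}))$. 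Chaining the last two facts gives $s_t=nk_{t+1}=s(\omega(k_t),f'(k_{t+1}))$, the final displayed equation.

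For the converse I would take a positive sequence obeying the displayed relations and check each clause of Definition~\ref{def1}. Putting $(K_t,L_t)=(k_tN_t,N_t)$, concavity and constant returns to scale of $F$ together with $R_t=f'(k_t)$ and $w_t=\omega(k_t)$ show that $(K_t,L_t)$ solves $(P_{f,t})$; since $s_t=s(\omega(k_t),f'(k_{t+1}))=s(w_t,R_{t+1})$ is by construction the optimal saving of $(P_{c,t})$ and $c_t=w_t-s_t$, $d_{t+1}=R_{t+1}s_t$ are the matching consumptions, the triple $(c_t,s_t,d_{t+1})$ solves $(P_{c,t})$; and the capital and labor market-clearing conditions hold by hypothesis. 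The one clause requiring a short computation is consumption good clearing: using $d_t=R_ts_{t-1}$, $s_{t-1}=nk_t$ and $R_t=f'(k_t)$,
\begin{align*}
s_t+c_t+d_t/n=s_t+(w_t-s_t)+f'(k_t)k_t=\omega(k_t)+k_tf'(k_t)=f(k_t),
\end{align*}
with the date-$0$ term $d_0/n=f'(k_0)k_0$ pinned down by the initial capital $K_0=k_0N_0$. I would also note that positivity propagates automatically: $u'(0)=\infty$ yields $c_t,d_{t+1},s_t>0$, and Assumption~\ref{assum2} yields $f'(k_t)>0$ and $\omega(k_t)>0$ whenever $k_t>0$.

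The main obstacle — modest as it is — is the consumption good market clearing in the converse direction: recognizing that it is a Walras-law consequence of the remaining conditions rather than an independent restriction, and correctly handling the date-$0$ consumption $d_0$ of the initial old generation endowed with $K_0=k_0N_0$. Everything else is a direct transcription of Definition~\ref{def1}, Definition~\ref{sbeta}, and Proposition~\ref{proposition1}.
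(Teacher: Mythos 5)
Your proof is correct and is exactly the routine unpacking of Definition~\ref{def1} (binding budget constraints, CRS first-order conditions, and the Walras-law verification of goods-market clearing) that the paper relies on but does not spell out, since it states this lemma without a formal proof. Your handling of the date-$0$ old generation and of goods-market clearing as a consequence of the other conditions is the standard treatment and raises no issues.
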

Thanks to this result, we can redefine the intertemporal equilibrium as a dynamical system.

\begin{definition}\label{def3}A positive sequence of $k_t$ is an  intertemporal equilibrium with perfect foresight  (or equilibrium for short)  if $nk_{t+1}=s\big(\omega(k_t),f'(k_{t+1})\big),\forall t$ where $k_0>0$ is given.
\end{definition}

As in the standard literature, we have the existence result.
\begin{lemma}[existence of intertemporal equilibrium]
Under Assumptions \ref{HU}, \ref{assum2}, there exists an intertemporal equilibrium.
\end{lemma}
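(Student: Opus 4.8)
The plan is to use the dynamical-system characterization of Definition~\ref{def3} (and the characterizing lemma just above it): it suffices to show that for every $k_t>0$ there is at least one $k_{t+1}>0$ with $nk_{t+1}=s\big(\omega(k_t),f'(k_{t+1})\big)$, since then, starting from the given $k_0>0$, one constructs a positive sequence $(k_t)_{t\ge0}$ recursively and reads off the remaining variables $(R_t,w_t,c_t,d_{t+1},s_t,K_{t+1},L_t)_{t\ge0}$ from that lemma (they are all positive; in particular $c_t=\omega(k_t)-s_t>0$ because, as recalled next, $s_t<\omega(k_t)$). Two elementary facts are used throughout. First, $\omega(k)>0$ for every $k>0$: strict concavity of $f$ gives $kf'(k)=\int_0^k f'(k)\,dt<\int_0^k f'(t)\,dt=f(k)-f(0)$, hence $\omega(k)=f(k)-kf'(k)>f(0)\ge0$, where $f(0):=\lim_{k\to0^+}f(k)\ge0$ because $f>0$ on $(0,\infty)$. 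Second, $0<s(w,R)<w$ for all $w,R>0$: since $u'(0)=\infty$ the solution of $(P_{c,t})$ is interior, so $c_t=w-s>0$ and $d_{t+1}=Rs>0$.

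Now fix $w=\omega(k_t)>0$. Because $f$ is strictly concave, $f'$ is a continuous, strictly decreasing bijection from $(0,\infty)$ onto $\mathcal R:=(f'(\infty),f'(0))$ with continuous inverse, so with $R=f'(k)$ the equation $nk=s(w,f'(k))$ is equivalent to $G(R):=n\,(f')^{-1}(R)-s(w,R)=0$ for some $R\in\mathcal R$. The map $G$ is continuous on $\mathcal R$: $(f')^{-1}$ is continuous, and $R\mapsto s(w,R)$ is continuous on $(0,\infty)$ — when $\gamma=+\infty$ it equals $w/(1+R)$, and when $\gamma<+\infty$ it is pieced together, by Proposition~\ref{proposition1}, from the three continuous expressions $s_{\beta_1}(w,\cdot)$, $w/(1+\cdot)$ and $s_{\beta_2}(w,\cdot)$ (continuity of the first and third follows from the implicit function theorem applied to $u'(w-s)-\beta R\,u'(Rs)=0$, using $u\in C^2$ and $u''<0$), and these agree at the thresholds $R=\tfrac1{\gamma+\beta}$ and $R=\tfrac{1+\gamma}{\beta}$ by parts~(a) and~(c) of Proposition~\ref{proposition1}.

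Finally one shows $G$ changes sign on $\mathcal R$. As $R\downarrow f'(\infty)$, $(f')^{-1}(R)\uparrow\infty$ while $0<s(w,R)<w$, so $G(R)\to+\infty$ and we may fix $R_1\in\mathcal R$ with $G(R_1)>0$. As $R\uparrow f'(0)$, $(f')^{-1}(R)\downarrow0$; if $f'(0)<\infty$ then $G(R)\to-s(w,f'(0))<0$, and we are done. The only delicate case is $f'(0)=\infty$ (e.g.\ Cobb--Douglas, or CES with elasticity of substitution above one), and there I would combine two rate estimates: on one hand $R\,(f')^{-1}(R)=kf'(k)\le f(k)-f(0)\to0$ as $k\downarrow0$, so $(f')^{-1}(R)=o(1/R)$; on the other hand $Rs(w,R)$ stays away from $0$ as $R\to\infty$ — indeed $Rs(w,R)\to+\infty$ when $\gamma<+\infty$ (if $Rs$ were bounded along some $R\to\infty$, the first-order condition $u'(w-s)=\beta_2 R\,u'(Rs)$ for $s_{\beta_2}$ would force its right-hand side to diverge while $u'(w-s)$ stays finite, unless $s\to w$, which again makes $Rs\to\infty$), and $Rs(w,R)=Rw/(1+R)\to w>0$ when $\gamma=+\infty$. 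Hence $n(f')^{-1}(R)/s(w,R)\to0$, so $G(R)<0$ for $R$ large; fix $R_2>R_1$ with $G(R_2)<0$. The intermediate value theorem then gives $R^*\in(R_1,R_2)$ with $G(R^*)=0$, and $k_{t+1}:=(f')^{-1}(R^*)>0$ is the required value.

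I expect the main obstacle to be exactly this boundary analysis at $R\uparrow f'(0)$ when $f'(0)=\infty$ — precisely the non-Inada configuration the paper wishes to retain — since one cannot simply plug $k=0$ into the saving function and must instead compare the vanishing rates of $(f')^{-1}(R)$ and of $s(w,R)$ through the household's first-order condition. The remaining ingredients (positivity of $\omega$, the interior-solution bound $0<s<w$, continuity of $s(w,\cdot)$ across its two kinks via Proposition~\ref{proposition1}, and the recursive construction of the whole sequence) are routine.
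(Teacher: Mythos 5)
Your proof is correct, and it follows essentially the route the paper itself takes: the paper offers no proof of its own but points to the elementary argument of Proposition 1.2 in de la Croix and Michel (2002), which is exactly your scheme of solving the temporary-equilibrium equation $nk_{t+1}=s\big(\omega(k_t),f'(k_{t+1})\big)$ by continuity and the intermediate value theorem and then recursing from $k_0$. Your extra care — continuity of the saving function across the two wariness kinks via Proposition \ref{proposition1}, and the rate comparison $kf'(k)\to 0$ versus $Rs(w,R)$ bounded away from zero to handle $f'(0)=\infty$ — is a sound filling-in of details that the paper delegates to the citation, not a different method.
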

The literature of equilibrium existence is large. The standard approach makes use of the fixed point theorems.\footnote{See, for instance, \cite{bs80}, \cite{wilson81}, \cite{BonnisseauRakotonindrainy2017} for OLG models and \cite{bblvs15}, \cite{lvp16} and references therein for infinite-horizon models. The basic idea is to prove the existence of equilibrium in each $T-$truncated economy, and then let $T$ tend to infinity to get an equilibrium.} However, in our framework, there exists an elementary proof which can be found in Proposition 1.2 in \cite{cm02}.

\section{Convergence and multiplicity of equilibrium}\label{section3}
According to Proposition \ref{proposition1}, the equilibrium system $nk_{t+1}=s\big(\omega(k_t),f'(k_{t+1})\big),\forall t$ becomes
\begin{align}\label{kt+1}
nk_{t+1}=s(\omega(k_t), f'(k_{t+1}))= \begin{cases}
			s_{\beta_1}(\omega(k_t),  f'(k_{t+1})) & \text{ if }  f'(k_{t+1}) < \frac{1}{\gamma+\beta}\\
			\frac{\omega(k_t)}{1+ f'(k_{t+1})} &\text{ if }\frac{1}{\gamma+\beta} \leq  f'(k_{t+1})\leq \frac{1+\gamma}{\beta}\\
			s_{\beta_2}(\omega(k_t),  f'(k_{t+1}))&\text{ if }  f'(k_{t+1}) >  \frac{1+\gamma}{\beta}
		\end{cases}
\end{align}
where $k_0>0$ is exogenously given.  

For analytical clarity, we define the concept of regimes to distinguish between the three possible cases in each period.
\begin{definition}\label{regimedefinition}
We say that $(k_t,k_{t+1})$ is in 
\begin{enumerate}
    \item \label{regimedefinition1}the regime 1 if  $f'(k_{t+1}) < \frac{1}{\gamma+\beta}$,
    \item \label{regimedefinition2}the regime 2 if $f'(k_{t+1}) >  \frac{1+\gamma}{\beta}$
    \item \label{regimedefinition3} the regime 3 if $\frac{1}{\gamma+\beta} \leq  f'(k_{t+1})\leq \frac{1+\gamma}{\beta}$.
\end{enumerate}
\end{definition}
Recall of notation: \textcolor{blue}{$
\beta_1\equiv \beta+\gamma,  \beta_2\equiv \frac{\beta}{1+\gamma}.$} The dynamics of capital depends the interplay between the capital return and the thresholds $\frac{1}{\beta_1},\frac{1}{\beta_2}$ where $\frac{1}{\beta_1}<\frac{1}{\beta_2}$. 

The equilibrium system (\ref{kt+1}) leads to a direct consequence.
\begin{corollary}\label{basic1}Let Assumptions \ref{HU}, \ref{assum2} be satisfied.
\begin{enumerate}
\item 
If $\gamma=0$ (no wariness), then $ nk_{t+1} = s_{\beta}(\omega(k_t), f'(k_{t+1})),\forall t$.

\item 
If $f'(0)<\frac{1}{\beta_1}$, then $ nk_{t+1} = s_{\beta_1}(\omega(k_t), f'(k_{t+1})),\forall t.$
\item 
If $f'(\infty)>\frac{1}{\beta_2}$, then $ nk_{t+1} = s_{\beta_2}(\omega(k_t), f'(k_{t+1})),\forall t$.
\item 
If  $\frac{1}{\beta_1} \equiv \frac{1}{\beta+\gamma} \leq f'(\infty  )<  f'(0)\leq \frac{1}{\beta_2}\equiv \frac{1+\gamma}{\beta}$, then the rate of return $f'(k) \in \left[ \frac{1}{\beta_1}, \frac{1}{\beta_2}\right]$ for all $k$. So, we have $ 	nk_{t+1} = \frac{\omega(k_t)}{1+ f'(k_{t+1})}$ $\forall t.$
\end{enumerate}
\end{corollary}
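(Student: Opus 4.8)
The statement to prove is Corollary~\ref{basic1}, which lists four elementary consequences of the equilibrium system~(\ref{kt+1}) together with Proposition~\ref{proposition1}. The overall plan is to observe that in each of the four cases, one can pin down \emph{which} of the three branches of the piecewise saving function~(\ref{sfunction}) is active in equilibrium, simply by controlling where the realized capital return $f'(k_{t+1})$ lies relative to the thresholds $\tfrac{1}{\beta_1}=\tfrac{1}{\gamma+\beta}$ and $\tfrac{1}{\beta_2}=\tfrac{1+\gamma}{\beta}$. Since $k_{t+1}>0$ in any equilibrium (stated just before~(\ref{73})), the return $f'(k_{t+1})$ always lies in the open interval $\big(f'(\infty),\,f'(0)\big)$ by strict concavity of $f$ (Assumption~\ref{assum2}); so each case reduces to checking an inclusion of this interval into the relevant threshold region.

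Concretely, I would proceed item by item. For item~(1), set $\gamma=0$; then $\beta_1=\beta_2=\beta$, the three cases of~(\ref{sfunction}) collapse (the middle interval $[\tfrac1\beta,\tfrac1\beta]$ is a single point, and on it $\tfrac{\omega(k_t)}{1+f'(k_{t+1})}=s_\beta(\omega(k_t),f'(k_{t+1}))$ by Definition~\ref{sbeta}(2) — indeed when $R=1/\beta$ the equation $u'(w-s)=\beta R u'(Rs)$ reads $u'(w-s)=u'(s/\beta)$, giving $s=w/(1+R)$), so $nk_{t+1}=s_\beta(\omega(k_t),f'(k_{t+1}))$ for all $t$. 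For item~(2), the hypothesis $f'(0)<\tfrac{1}{\beta_1}$ together with concavity gives $f'(k_{t+1})\le f'(0)<\tfrac{1}{\gamma+\beta}$ for every $k_{t+1}>0$, so $(k_t,k_{t+1})$ is always in regime~1 (Definition~\ref{regimedefinition}\ref{regimedefinition1}), and~(\ref{kt+1}) forces $nk_{t+1}=s_{\beta_1}(\omega(k_t),f'(k_{t+1}))$. Item~(3) is symmetric: $f'(\infty)>\tfrac{1}{\beta_2}$ yields $f'(k_{t+1})>\tfrac{1+\gamma}{\beta}$ for all $k_{t+1}>0$, placing every period in regime~2 and giving $nk_{t+1}=s_{\beta_2}(\omega(k_t),f'(k_{t+1}))$. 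For item~(4), the chain $\tfrac{1}{\beta_1}\le f'(\infty)<f'(0)\le\tfrac{1}{\beta_2}$ implies $f'(k)\in[\tfrac{1}{\beta_1},\tfrac{1}{\beta_2}]$ for all $k>0$, hence $(k_t,k_{t+1})$ is always in regime~3, and the middle branch of~(\ref{kt+1}) gives $nk_{t+1}=\tfrac{\omega(k_t)}{1+f'(k_{t+1})}$.

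There is essentially no obstacle here: the only point requiring a half-line of care is the boundary behavior in item~(1) — checking that the degenerate middle branch agrees with $s_\beta$ at $R=1/\beta$ — and the fact that $f'(k_{t+1})$ never attains the endpoints $f'(0)$ or $f'(\infty)$ (it lies strictly between them for $k_{t+1}\in(0,\infty)$), which is why the weak inequalities in the hypotheses of items~(2)--(4) suffice even though~(\ref{sfunction}) has strict/weak inequalities at the thresholds. So the proof is a short case analysis invoking strict concavity of $f$ and Proposition~\ref{proposition1}; I would simply write out the four verifications in a few lines each.
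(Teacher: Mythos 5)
Your proof is correct and matches the paper's treatment: the paper states Corollary~\ref{basic1} as a direct consequence of the piecewise system~(\ref{kt+1}), and your argument simply spells out the regime identification via strict monotonicity of $f'$ (plus the harmless boundary check at $\gamma=0$), which is exactly the intended reasoning.
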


{In some particular cases,  we can explicitly compute the household's saving and obtain a more explicit dynamics of $(k_t)$. Corollary 3 in  \cite{pp24} explores the  dynamics of capital path for Cobb-Douglass production function and logarithm or CRRA utility functions.  Here, we provide the explicit dynamics of capital path for logarithm utility and CES production technology. It is convenient to introduce useful notations. For $b>0$, the function $g_{b}: \rr_+\to\rr_+$ is defined by 
\textcolor{blue}{
\begin{align}\label{func_g}
	g_{b}(x)= \frac{b A(1-a)(ax^{\rho}+1-a)^{\frac{1}{\rho}-1}}{n(1+b)} \text{ }\forall x\geq 0.
\end{align}  }

	\begin{corollary}\label{corollary-ces}
	Assume $u(c)=\ln(c)$ and a CES production function: 
\begin{align}	\label{Ces}
F(K,L)=A\big(aK^{\rho}+(1-a)L^{\rho}\big)^{\frac{1}{\rho}}, \text{ where $A>0$, $a\in (0,1)$ and $\rho\not=0,\rho<1$.}\end{align}
  Recall that the elasticity of factor substitution is $\frac{1}{1-\rho}$ and parameter $a$ represents the capital intensity. In Appendix \ref{specialcase}, we present detailed properties of the function $f(k) \equiv F(k,1)$. The dynamics of equilibrium capital path becomes
 		\begin{align}\label{CES-dynamics}
 		k_{t+1}= \begin{cases}
 			g_{\beta_1}(k_t)& \text{ if } f'(k_{t+1}) < \frac{1}{\gamma+\beta}\equiv \frac{1}{\beta_1}\\
 			\frac{A(1-a)(ak_t^{\rho}+1-a)^{\frac{1}{\rho}-1}}{n (1+f'(k_{t+1})) } &\text{ if }\frac{1}{\gamma+\beta} \leq f'(k_{t+1})\leq \frac{1+\gamma}{\beta}\\
 			g_{\beta_2}(k_t)&\text{ if } f'(k_{t+1}) >  \frac{1+\gamma}{\beta} \equiv \frac{1}{\beta_2}
 		\end{cases}.
 	\end{align}

 \end{corollary}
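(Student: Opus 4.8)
The plan is to derive the dynamics in (\ref{CES-dynamics}) by specializing the general equilibrium system (\ref{kt+1}) to the logarithmic utility and CES production function. First I would handle the utility side: with $u(c)=\ln(c)$, the first-order condition defining $s_\beta(w,R)$ in Definition \ref{sbeta}, namely $u'(w-s)=\beta R u'(Rs)$, becomes $\frac{1}{w-s}=\beta R\cdot\frac{1}{Rs}=\frac{\beta}{s}$, which solves explicitly to $s_\beta(w,R)=\frac{\beta}{1+\beta}w$. This is the standard feature of log utility: saving is a fixed fraction of the wage, independent of $R$. Substituting $\beta_1=\beta+\gamma$ and $\beta_2=\frac{\beta}{1+\gamma}$ gives $s_{\beta_1}(w,R)=\frac{\beta_1}{1+\beta_1}w$ and $s_{\beta_2}(w,R)=\frac{\beta_2}{1+\beta_2}w$.

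Next I would handle the production side. For the CES function $F(K,L)=A(aK^\rho+(1-a)L^\rho)^{1/\rho}$, we have $f(k)=F(k,1)=A(ak^\rho+1-a)^{1/\rho}$, and hence the wage function is $\omega(k)=f(k)-kf'(k)$. A direct computation of $f'(k)=A(ak^\rho+1-a)^{1/\rho-1}ak^{\rho-1}$ gives $\omega(k)=A(ak^\rho+1-a)^{1/\rho-1}\big[(ak^\rho+1-a)-ak^\rho\big]=A(1-a)(ak^\rho+1-a)^{1/\rho-1}$; I would cite the properties of $f$ collected in Appendix \ref{specialcase} for this. Then the equilibrium relation $nk_{t+1}=s\big(\omega(k_t),f'(k_{t+1})\big)$ in Definition \ref{def3}, combined with the regime classification of Proposition \ref{proposition1} (equivalently, Definition \ref{regimedefinition}), immediately yields: in regime 1, $nk_{t+1}=s_{\beta_1}(\omega(k_t),f'(k_{t+1}))=\frac{\beta_1}{1+\beta_1}\omega(k_t)$, so $k_{t+1}=\frac{\beta_1 A(1-a)(ak_t^\rho+1-a)^{1/\rho-1}}{n(1+\beta_1)}=g_{\beta_1}(k_t)$ by the definition (\ref{func_g}) of $g_b$; similarly $k_{t+1}=g_{\beta_2}(k_t)$ in regime 2; and in the intermediate regime, $nk_{t+1}=\frac{\omega(k_t)}{1+f'(k_{t+1})}=\frac{A(1-a)(ak_t^\rho+1-a)^{1/\rho-1}}{1+f'(k_{t+1})}$, giving the middle line of (\ref{CES-dynamics}).

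Since this is essentially a substitution exercise, there is no serious obstacle; the only points requiring a little care are (i) verifying that the log-utility saving rule is genuinely $R$-independent — which makes the regime-1 and regime-2 branches of (\ref{CES-dynamics}) explicit functions of $k_t$ alone, whereas the intermediate branch still involves $f'(k_{t+1})$ implicitly — and (ii) checking that the regime thresholds $\frac{1}{\beta_1}$ and $\frac{1}{\beta_2}$ transcribe correctly, i.e.\ $\frac{1}{\gamma+\beta}=\frac{1}{\beta_1}$ and $\frac{1+\gamma}{\beta}=\frac{1}{\beta_2}$, which is immediate from the definitions $\beta_1=\beta+\gamma$, $\beta_2=\frac{\beta}{1+\gamma}$. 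I would also note in passing that $g_b$ is well-defined and maps $\rr_+$ into $\rr_+$ because $(ax^\rho+1-a)^{1/\rho-1}>0$ for all $x\geq 0$ under Assumption \ref{assum2} (this uses $a\in(0,1)$ so that the base stays positive for every $\rho\neq 0$), so the stated piecewise system is meaningful. Assembling these three branches with the thresholds from Definition \ref{regimedefinition} completes the proof.
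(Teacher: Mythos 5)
Your proof is correct and follows essentially the same route as the paper: the paper treats this corollary as a direct substitution exercise, using the log-utility saving rule $s_\beta(w,R)=\frac{\beta}{1+\beta}w$ and the CES computations $f'(k)=Aak^{\rho-1}(ak^\rho+1-a)^{\frac{1}{\rho}-1}$, $\omega(k)=A(1-a)(ak^\rho+1-a)^{\frac{1}{\rho}-1}$ collected in Appendix \ref{specialcase}, plugged into the general regime system (\ref{kt+1}). Nothing further is needed.
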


	\subsection{Uniqueness and convergence of equilibrium}

We provide explicit conditions to ensure the equilibrium uniqueness and the convergence of capital path. Following \cite{pp24}, we obtain the following result.
	\begin{proposition}\label{unique-convergence}
		Assume that
		$f^{\prime}(k)u^{\prime}\big(nkf^{\prime}(k)\big)$ is strictly decreasing and $h(k)\equiv k+kf'(k)$ is strictly increasing in $k$ for any $k>0$. There exists a unique equilibrium and the dynamics of the equilibrium capital path is given by (\ref{kt+1}).  Moreover, $k_{t+1}$ determined by (\ref{kt+1}) is a strictly increasing, continuous function of $k_t$, {\color{blue} denoted by $G(k_t)$}. By consequence, the capital path $(k_t)$ converges.
	\end{proposition}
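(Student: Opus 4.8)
\medskip
\noindent\textit{Proof plan.} The plan is to recast the equilibrium recursion $nk_{t+1}=s\big(\omega(k_t),f'(k_{t+1})\big)$ in an inverted form, writing $\omega(k_t)$ as an explicit function of $k_{t+1}$, and then to show that this function is continuous and strictly increasing. Since $\omega$ is itself continuous and strictly increasing on $(0,\infty)$ — indeed $\omega'(k)=-kf''(k)>0$ under Assumption \ref{assum2}, and strict concavity of $f$ yields $\omega(k)>0$ — it will follow that the one-period transition is single-valued and increasing, which delivers uniqueness, the reformulation (\ref{kt+1}), and convergence.

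First I would set $\psi(k)\equiv f'(k)u'\big(nkf'(k)\big)$, which is strictly decreasing by hypothesis, and introduce the thresholds $a_2\le a_1$ determined by $f'(a_1)=1/\beta_1$ and $f'(a_2)=1/\beta_2$ (with the usual conventions $a_i=0$ or $a_i=\infty$ when $f'(0)$ or $f'(\infty)$ lies outside $[1/\beta_1,1/\beta_2]$, which reproduce the degenerate cases of Corollary \ref{basic1}). Because $f'$ is strictly decreasing, the three regimes of Definition \ref{regimedefinition}, read in the variable $y=k_{t+1}$, partition $(0,\infty)$ as $(0,a_2)$ (regime 2), $[a_2,a_1]$ (regime 3), $(a_1,\infty)$ (regime 1). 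Substituting $c_t=\omega(k_t)-ny$ and $d_{t+1}=nyf'(y)$ into the optimality conditions from Proposition \ref{proposition1} ($u'(c_t)=\beta_i f'(y)u'(d_{t+1})$ in regime $i\in\{1,2\}$, and $c_t=d_{t+1}$ in regime 3), I would define
\[
\Omega(y)\equiv\begin{cases}
ny+(u')^{-1}\!\big(\beta_2\psi(y)\big),& 0<y<a_2,\\
n\,h(y),& a_2\le y\le a_1,\\
ny+(u')^{-1}\!\big(\beta_1\psi(y)\big),& y>a_1,
\end{cases}
\]
so that $(k_t,k_{t+1})$ solves the equilibrium equation precisely when $\omega(k_t)=\Omega(k_{t+1})$.

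The heart of the argument is to verify that $\Omega$ is continuous and strictly increasing on its natural domain. On $(0,a_2)$ and on $(a_1,\infty)$, $ny$ is strictly increasing while $\beta_i\psi(y)$ is strictly decreasing, hence $(u')^{-1}\!\big(\beta_i\psi(y)\big)$ is strictly increasing because $u'$ — and therefore $(u')^{-1}$ — reverses order; on $[a_2,a_1]$ the piece $n\,h(y)$ is strictly increasing by the hypothesis on $h$. Continuity at the junction $y=a_1$ is exactly where $h$ re-enters: there $f'(a_1)=1/\beta_1$, so $\beta_1\psi(a_1)=u'\big(na_1f'(a_1)\big)$ and the regime-1 expression reduces to $na_1+na_1f'(a_1)=n\,h(a_1)$, matching the regime-3 value; the identical computation with $\beta_2$ settles $y=a_2$. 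Thus $\Omega$ is a continuous, strictly increasing bijection of an interval onto an interval $J$, and $G\equiv\Omega^{-1}\!\circ\omega$ is a well-defined, continuous, strictly increasing map on $\{k>0:\omega(k)\in J\}$ — a set that, by the already-established existence lemma, contains every $k$ occurring along an equilibrium path. Injectivity of $\Omega$ shows there is at most one $k_{t+1}>0$ with $nk_{t+1}=s\big(\omega(k_t),f'(k_{t+1})\big)$, and since a household facing $w_t>0$ always saves a strictly positive amount (Proposition \ref{proposition1}), existence forces exactly one; iterating from the given $k_0$ yields uniqueness and the form (\ref{kt+1}). Finally, any orbit of the continuous increasing map $G$ is monotone — if $k_1\ge k_0$ then $k_{t+1}=G(k_t)\ge G(k_{t-1})=k_t$ by induction, and symmetrically if $k_1\le k_0$ — hence $(k_t)$ converges, the limit being a fixed point of $G$ whenever it is finite and positive.

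The step I expect to be the main obstacle is the boundary bookkeeping: checking that $\Omega$ is genuinely well-defined near $y=0$ and $y=\infty$ (that $\beta_i\psi(y)$ remains in the range of $u'$ on the relevant sub-intervals, using monotonicity of $\psi$) and that the degenerate configurations $a_1=0$, $a_2=0$, $a_1=\infty$, etc.\ collapse the three-piece formula correctly onto the one-regime dynamics of Corollary \ref{basic1}. Once those points are settled, the remainder is a routine order-reversing/order-preserving monotonicity check.
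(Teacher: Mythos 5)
Your proposal is correct and follows essentially the same route as the paper: in each regime you invert the equilibrium relation so that $\omega(k_t)$ is a strictly increasing, continuous function of $k_{t+1}$ (using the decreasingness of $f'(k)u'(nkf'(k))$ in regimes 1 and 2 and the increasingness of $h$ in regime 3), which yields uniqueness, the increasing continuous map $G$, and monotone convergence. Your explicit gluing via $\Omega$ and the verification of continuity at the thresholds $a_1,a_2$ just spells out the junction step that the paper's appendix only asserts.
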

\begin{proof}See Appendix \ref{section3-proofs}.\end{proof}

Naturally, one might ask whether the assumptions in Proposition \ref{unique-convergence} is well justified. The following result provides an answer.
		\begin{lemma}\label{h_increase}
		
\begin{enumerate}
\item $f^{\prime}(k)u^{\prime}\big(nkf^{\prime}(k)\big)$ is strictly decreasing in $k$ if one of the two following conditions holds: (1) the function $cu'(c)$ is increasing on $[0,\infty)$, (2) the function $kf'(k)$ is increasing on $[0,\infty)$.
\item \label{hincreasing}Assume a CES production function in (\ref{Ces}). The function
			$h(k)= k + kf'(k)$ is increasing on $(0,\infty ) $ if one of the following conditions holds
			\begin{enumerate}
				\item $0<\rho<1$ 
				\item $ \rho<0$ and $1- Aa^{\frac{1}{\rho}} (-\rho)^{-\frac{1}{\rho}+3}(1-2\rho)^{\frac{1}{\rho}-2}\geq 0$.
			\end{enumerate}
\end{enumerate}
		\end{lemma}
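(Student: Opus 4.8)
The plan is to treat the two items separately, dispatching in each the two alternative hypotheses. For item~1, the second condition is immediate: if $k\mapsto kf'(k)$ is increasing then so is $k\mapsto nkf'(k)$, whence $k\mapsto u'\!\big(nkf'(k)\big)$ is non-increasing ($u'$ being decreasing), and multiplying by $f'(k)$ --- strictly positive and strictly decreasing by Assumption~\ref{assum2} --- gives a strictly decreasing product. For the first condition I would set $g(c)\equiv cu'(c)>0$ and write $f'(k)u'\!\big(nkf'(k)\big)=g(c)/(nk)$ with $c\equiv nkf'(k)$; differentiating the quotient, the sign of the derivative is that of $g'(c)\,(c'k)-g(c)$, and using $c'k=c+nk^{2}f''(k)$ this equals $cg'(c)-g(c)+nk^{2}f''(k)g'(c)=c^{2}u''(c)+nk^{2}f''(k)g'(c)$; the hypothesis gives $g'\ge0$ while Assumptions~\ref{HU} and~\ref{assum2} give $u''<0$ and $f''<0$, so both summands are $\le0$ with the first strictly negative, hence strict monotonicity.

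For item~2 I would first record, for the CES $f$, that $f'(k)=Aak^{\rho-1}(ak^{\rho}+1-a)^{1/\rho-1}$, so that with $\psi(k)\equiv kf'(k)$,
\[\psi'(k)=Aa\,k^{\rho-1}(ak^{\rho}+1-a)^{1/\rho-2}\big(ak^{\rho}+\rho(1-a)\big).\]
When $0<\rho<1$ every factor on the right is positive (in particular $ak^{\rho}+\rho(1-a)>0$), so $\psi'>0$ and $h'(k)=1+\psi'(k)>0$, which is case~(a). When $\rho<0$, $\psi'$ changes sign and $h'=1+\psi'\ge0$ on $(0,\infty)$ precisely when $\sup_{k>0}\big(-\psi'(k)\big)\le1$, so the task reduces to evaluating that supremum.

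To do so I would substitute $z\equiv ak^{\rho}/(1-a)$, which ranges over all of $(0,\infty)$ since $\rho<0$; a short computation --- in which the exponents of $1-a$ (namely $1-\tfrac1\rho$, $\tfrac1\rho-2$, and $1$) sum to zero while the exponents of $a$ collapse to $\tfrac1\rho$ --- gives, on the subinterval $z\in(0,-\rho)$ where $-\psi'>0$, the form
\[-\psi'(k)=Aa^{1/\rho}\,z^{\,1-1/\rho}(1+z)^{1/\rho-2}\,(-\rho-z).\]
This vanishes at both ends of $(0,-\rho)$ and is positive inside, so its maximum occurs at an interior stationary point; setting the logarithmic $z$-derivative to zero and clearing $z(1+z)(-\rho-z)$ turns stationarity into a polynomial equation in which the quadratic terms cancel, leaving a linear equation with unique root $z^{*}=\dfrac{-\rho}{1-\rho}$. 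Substituting $z=z^{*}$ --- the powers of $1-\rho$ now cancelling in turn --- gives $\sup_{k>0}\big(-\psi'(k)\big)=Aa^{1/\rho}(-\rho)^{-1/\rho+3}(1-2\rho)^{1/\rho-2}$, so condition~(b) is exactly the statement that $h'\ge0$ on $(0,\infty)$; since $h'$ can vanish only at the single $k^{*}$ with $a(k^{*})^{\rho}/(1-a)=z^{*}$, $h$ is in fact strictly increasing.

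The delicate part is this last computation: choosing the substitution so that the $1-a$ powers and then the $1-\rho$ powers telescope away, and verifying that the first-order condition degenerates to a linear equation --- that degeneracy is exactly what produces the closed-form threshold in the statement. Everything else (item~1, case~(a), and the exponent bookkeeping when $z^{*}$ is plugged back in) is routine.
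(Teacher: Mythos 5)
Your proof is correct and, for the CES part, follows essentially the same route as the paper: you reduce the question to nonnegativity of $h'=1+(kf'(k))'$ at its unique interior minimizer (your $z^*=\frac{-\rho}{1-\rho}$ is exactly the paper's critical point $x_c$, reached there via the sign of $h''$ rather than your substitution $z=ak^{\rho}/(1-a)$), and your explicit evaluation of the supremum reproduces the threshold $Aa^{\frac{1}{\rho}}(-\rho)^{-\frac{1}{\rho}+3}(1-2\rho)^{\frac{1}{\rho}-2}$. Your argument for item~1 (the quotient $g(c)/(nk)$ computation giving $c^2u''(c)+nk^{2}f''(k)g'(c)<0$, and the direct monotonicity composition under condition (2)) is a correct supplement to a step the paper's appendix leaves unproved.
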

\begin{proof}See Appendix \ref{section3-proofs}.\end{proof}
			
Note that Proposition 1.3 in \cite{cm02} provides Assumption H3 to obtain the uniqueness of intertemporal equilibrium.\footnote{Assumption H3 in \cite{cm02}: for all $w>0, k>0$, if $\Delta(k,w)=0$, then $\Delta_k'(k,w)>0$, where $\Delta(k,w)\equiv nk-s(w,f'(k))$ and $\Delta'_k(k,w)\equiv n-\frac{\partial s(w,f'(k)}{f'(k)}f^{\prime\prime}(k)$.} Although their result is general, their assumption H3 is quite implicit and not easy to be verified. Our assumptions in Proposition \ref{unique-convergence} are more explicit than H3 in \cite{cm02}.	

We are now interested in the identification of steady state. For convenience, we introduce some notations.
\textcolor{blue}{
\begin{align}
\label{notationMi} {M}_i &\equiv 	\frac{\beta_i}{1+\beta_i}	\sup_{k>0}\frac{\omega(k)}{k}, \text{ for } i=1, 2, & {M}_3& \equiv \sup _{k>0}\frac{\omega(k)}{k(1+f'(k))}.
\end{align}
}
where recall that $\omega(k)\equiv f(k) -kf^{\prime }(k)$.
\begin{corollary}[Steady state]\label{prop_convergence}
	Under assumptions in Proposition \ref{unique-convergence}, the equilibrium capital path $(k_t)_{t\geq 0}$ converges monotonically to a steady state $k^*$. 
	We have
	\begin{enumerate}
		\item If $f'(k^*)<\frac{1}{\beta_1}$ then $nk^* =	s_{\beta_1}(\omega(k^*), f'(k^*))$. In addition if $k^*>0$ then $ M_1\geq n$.
		\item $f'(k^*)\geq \frac{1}{\beta_2}$ then $nk^* = 	s_{\beta_2}(\omega(k^*)$. In addition if $k^*>0$ then $ M_2\geq n$
		\item If $ \frac{1}{\beta_1}<f'(k^*)< \frac{1}{\beta_2}$ then	$nk^* = \frac{\omega(k_t)}{1+ f'(k^*)} $. In addition if $k^*>0$ then $ M_3\geq n$
	\end{enumerate}

\end{corollary}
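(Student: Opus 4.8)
Everything can be read off the scalar dynamics $k_{t+1}=G(k_t)$ supplied by Proposition~\ref{unique-convergence}, which tells us that $G$ is continuous and strictly increasing and that $(k_t)$ converges. Strict monotonicity of $G$ forces the orbit to be monotone: if $k_1=G(k_0)\ge k_0$, then applying $G$ repeatedly and inducting gives $k_{t+1}\ge k_t$ for all $t$, and symmetrically $(k_t)$ is non-increasing when $k_1\le k_0$. Writing $k^*:=\lim_t k_t$, continuity of $G$ yields $k^*=\lim_t G(k_t)=G(k^*)$, so $k^*$ is a steady state, reached monotonically.

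\textbf{Branch identities.} Substituting $k_t=k_{t+1}=k^*$ in (\ref{kt+1}) gives $nk^*=s\big(\omega(k^*),f'(k^*)\big)$, and which of the three branches applies is decided by where $f'(k^*)$ sits relative to $1/\beta_1\le 1/\beta_2$ (Definition~\ref{regimedefinition}); these are exactly cases~(1)--(3). At the endpoints $f'(k^*)\in\{1/\beta_1,\,1/\beta_2\}$, where adjacent branches overlap, Proposition~\ref{proposition1}(1)(a) (resp.\ (1)(c)) gives $s_{\beta_1}(w,R)=\tfrac{w}{1+R}$ (resp.\ $s_{\beta_2}(w,R)=\tfrac{w}{1+R}$), so the branches agree and the displayed identity holds without ambiguity.

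\textbf{The bounds.} Suppose $k^*>0$. Case~(3) is immediate: dividing $nk^*=\omega(k^*)/(1+f'(k^*))$ by $k^*$ gives $n=\omega(k^*)/\big(k^*(1+f'(k^*))\big)\le M_3$. For cases~(1)--(2) it suffices to prove the pointwise bound $s_{\beta_i}\big(\omega(k^*),f'(k^*)\big)\le\tfrac{\beta_i}{1+\beta_i}\,\omega(k^*)$, since dividing by $k^*$ and bounding $\omega(k^*)/k^*$ by its supremum then gives $n\le M_i$. To get it, set $w=\omega(k^*)$, $R=f'(k^*)$ and let $s^*=s_{\beta_i}(w,R)$, the unique root of the strictly increasing map $s\mapsto u'(w-s)-\beta_i R\,u'(Rs)$. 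Hence $s^*\le\bar s:=\tfrac{\beta_i}{1+\beta_i}w$ iff that map is $\ge 0$ at $\bar s$; since $w-\bar s=\bar s/\beta_i$, this says $u'(\bar s/\beta_i)\ge\beta_i R\,u'(R\bar s)$, i.e.\ $g(\bar s/\beta_i)\ge g(R\bar s)$ with $g(c):=cu'(c)$. In case~(1), $\beta_1 R<1$ so $\bar s/\beta_1>R\bar s$ and the inequality holds once $g$ is non-decreasing; in case~(2), $\beta_2 R>1$ so $\bar s/\beta_2<R\bar s$ and it holds once $g$ is non-increasing. Both are equalities when $u=\ln$, and each holds under the corresponding monotonicity of $xu'(x)$ (the intertemporal-elasticity-$\ge1$ condition of the Introduction for case~(1), its reverse for case~(2)); and both hold trivially whenever $\sup_k\omega(k)/k=+\infty$.

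\textbf{Main obstacle.} The substantive step is the last pointwise estimate for cases~(1)--(2): unlike case~(3) it does not follow from the sign comparisons of Proposition~\ref{proposition1} alone, and it requires controlling $s_{\beta_i}(w,R)$ from above by $\tfrac{\beta_i}{1+\beta_i}w$, which is precisely where the curvature of $u$ — the monotonicity of $c\mapsto cu'(c)$ — enters. Everything else (monotone convergence, identification of the active branch, and case~(3)) is routine.
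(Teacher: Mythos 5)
Your proposal follows the same overall route as the paper's proof: monotone convergence of $(k_t)$ from the increasing, continuous map $G$ supplied by Proposition \ref{unique-convergence}, identification of the active branch of (\ref{kt+1}) at the limit, and the bounds $M_i\geq n$ read off the fixed-point equation at a positive steady state (the paper states this last step in contrapositive form). The difference lies in that last step. The paper's argument is a single assertion -- ``if $M_1<n$ then $nk=s_{\beta_1}(\omega(k),f'(k))$ has no positive solution'' -- with no justification, whereas you supply the estimate that makes it work, $s_{\beta_i}\big(\omega(k^*),f'(k^*)\big)\leq\tfrac{\beta_i}{1+\beta_i}\,\omega(k^*)$ in the relevant regime, and you correctly flag that it is not free: it needs $cu'(c)$ nondecreasing in case (1) and nonincreasing in case (2). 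That caveat is accurate and in fact exposes a gap in the corollary as stated rather than in your argument: the hypotheses of Proposition \ref{unique-convergence} can hold with no monotonicity of $cu'(c)$ at all (e.g.\ via Lemma \ref{h_increase}, part 1, when $kf'(k)$ is increasing), and for CRRA utility one has $s_{\beta}(w,R)=w/\big(1+R^{1-\frac{1}{\sigma}}\beta^{-\frac{1}{\sigma}}\big)$, which strictly exceeds $\tfrac{\beta_1}{1+\beta_1}w$ when $\sigma>1$ and $\beta_1R<1$ (regime 1), and strictly exceeds $\tfrac{\beta_2}{1+\beta_2}w$ when $\sigma<1$ and $\beta_2R>1$ (regime 2). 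In those regimes the fixed-point equation only delivers the weaker bound $n\leq \omega(k^*)/\big(k^*(1+f'(k^*))\big)\leq M_3$, so the conclusions $M_1\geq n$ and $M_2\geq n$ genuinely require the corresponding monotonicity of $cu'(c)$ (or $u=\ln$, or trivially $\sup_{k}\omega(k)/k=\infty$), an assumption the paper's one-line proof implicitly uses but does not state. Your treatment of the monotone convergence, the branch identification (including the boundary cases via Proposition \ref{proposition1}), and case (3) matches the paper and is complete; for cases (1)--(2) your proof is the honest completion of the paper's, modulo the extra hypothesis you make explicit.
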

\begin{proof}See Appendix \ref{section3-proofs}.\end{proof}

\subsection{Wariness and multiple equilibria} 
			\label{MultipleEquilibrium}
Proposition \ref{unique-convergence} and 
 Lemma \ref{h_increase} suggest that there may be a room for multiple equilibria. In this section, we will address this issue by focusing on the role of wariness which is the key element of our paper.

In the absence of wariness, it is well known that when $cu'(c)$ is increasing (i.e., the inter-temporal elasticity of
substitution $-\frac{u'(c)}{cu^{\prime\prime}(c)}$ is greater or equal to 1),  the uniqueness of intertemporal holds (see, for instance, Proposition 1.3 and Assumption 4 in \cite{cm02}). 

We will argue that a high level of wariness may lead to multiple equilibria, whatever the level of the inter-temporal elasticity of
substitution. Formally, we have the following result.

\begin{proposition}\label{wariness-multipleequilibria}
Assume that $\frac{1}{\beta_1} \equiv \frac{1}{\beta+\gamma} \leq f'(\infty  )<  f'(0)\leq \frac{1}{\beta_2}\equiv \frac{1+\gamma}{\beta}$. 
The positive sequence $(k_t)$ is an equilibrium if and only if 
\begin{align}\label{sameconsumption}
nk_{t+1}=\frac{\omega(k_t)}{1+f'(k_{t+1})}, \forall t\geq 0, \text{ with $k_0>0$ is given.}
\end{align}
By consequence, if the equation $\omega(k_0)=nk(1+f'(k))$ has at least two strictly positive solutions, then there exists at least two equilibria.
\end{proposition}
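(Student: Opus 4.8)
The plan is to show that, under the hypothesis $\frac{1}{\beta_1}\le f'(\infty)<f'(0)\le\frac{1}{\beta_2}$, every equilibrium must stay in regime 3 at every date, so the general equilibrium system (\ref{kt+1}) collapses to the single branch (\ref{sameconsumption}), and conversely any positive sequence satisfying (\ref{sameconsumption}) is an equilibrium. The first half is exactly the content of item (4) of Corollary \ref{basic1}: since $f'$ is continuous and strictly decreasing (by Assumption \ref{assum2}), for every $k>0$ we have $f'(k)\in(f'(\infty),f'(0))\subseteq\left[\frac{1}{\beta_1},\frac{1}{\beta_2}\right]$, hence $(k_t,k_{t+1})$ is always in regime 3 and the saving function reduces to $s(\omega(k_t),f'(k_{t+1}))=\frac{\omega(k_t)}{1+f'(k_{t+1})}$. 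The converse direction is immediate: a positive sequence satisfying (\ref{sameconsumption}) satisfies $nk_{t+1}=s(\omega(k_t),f'(k_{t+1}))$ because the right-hand side of (\ref{sameconsumption}) is precisely that saving value in regime 3, and by Definition \ref{def3} this is exactly the equilibrium condition. So the first claim of the Proposition follows directly from Corollary \ref{basic1}(4) together with Definition \ref{def3}.

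For the multiplicity statement, the idea is to exploit the fact that, within regime 3, $k_{t+1}$ is determined implicitly from $k_t$ through the equation $nk_{t+1}(1+f'(k_{t+1}))=\omega(k_t)$. Define $\varphi(k)\equiv nk(1+f'(k))$; note $\varphi$ need not be monotone (this is the whole point — the assumption of Proposition \ref{unique-convergence} that $h(k)=k+kf'(k)$ is increasing may fail), so the equation $\varphi(k)=\omega(k_0)$ may admit several positive roots. Suppose $k^{(1)}<k^{(2)}$ are two such roots, i.e. $\varphi(k^{(1)})=\varphi(k^{(2)})=\omega(k_0)$. Then taking $k_1:=k^{(1)}$ and $k_1:=k^{(2)}$ each gives a valid ``first step'' from the given $k_0$.

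The remaining task is to extend each of these one-step choices into a full infinite equilibrium sequence, i.e. to show that once $k_1>0$ is fixed with $\varphi(k_1)=\omega(k_0)$, the recursion can be continued forever with all terms positive. I would argue this by a forward induction: given $k_t>0$, we have $\omega(k_t)=f(k_t)-k_tf'(k_t)>0$ (positivity of $\omega$ on $(0,\infty)$ follows since $f$ is strictly concave with $f(k)>0$, giving $f(k)>kf'(k)$ by concavity and $f(0)\ge 0$), and the map $k\mapsto\varphi(k)=nk(1+f'(k))$ is continuous on $(0,\infty)$ with $\varphi(0^+)=0$ and $\varphi(+\infty)=+\infty$ (using $f'\ge f'(\infty)\ge 0$), so by the intermediate value theorem there exists at least one $k_{t+1}>0$ with $\varphi(k_{t+1})=\omega(k_t)$; pick any such value. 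This produces two distinct positive equilibrium sequences sharing the same $k_0$ (they already differ at $t=1$), which is the desired conclusion.

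The main obstacle I anticipate is not any single hard estimate but rather being careful about the logical structure of ``perfect-foresight'' equilibrium: because $k_{t+1}$ appears inside $f'$ on the right-hand side of (\ref{sameconsumption}), the recursion is implicit and possibly multivalued, so one must check both that a continuation always \emph{exists} (the IVT argument above, relying on $\varphi(0^+)=0$, $\varphi(+\infty)=+\infty$, and continuity) and that the branch structure genuinely yields \emph{distinct} sequences rather than two descriptions of the same path — here distinctness at $t=1$ settles it. A secondary point worth stating explicitly is why the hypothesis forces regime 3 globally: it is precisely that the range of $f'$ over all $k>0$, namely $(f'(\infty),f'(0))$, is contained in the regime-3 band $\left[\frac{1}{\beta_1},\frac{1}{\beta_2}\right]$, which is where the assumption $\frac{1}{\beta_1}\le f'(\infty)<f'(0)\le\frac{1}{\beta_2}$ enters and where CES production with elasticity of substitution below $1$ makes the hypothesis attainable.
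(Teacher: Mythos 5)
Your proposal is correct and follows essentially the same route as the paper: reduce to regime 3 via Corollary \ref{basic1}(4) under the hypothesis $\frac{1}{\beta_1}\leq f'(\infty)<f'(0)\leq \frac{1}{\beta_2}$, then show the implicit recursion always admits a positive continuation because $k\mapsto nk(1+f'(k))$ is continuous with limits $0$ at $0^+$ and $+\infty$ at $+\infty$, so two first-step roots yield two distinct equilibria. The only loose end is your justification of $\varphi(0^+)=0$ (your parenthetical only covers the $+\infty$ limit); the paper closes it by concavity, namely $kf'(k)\leq f(k)-f(0)\to 0$ as $k\to 0$.
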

\begin{proof}See Appendix \ref{section3-proofs}.\end{proof}
This result shows the possibility of multiple equilibria whatever the form of the utility function $u$. A key is that wariness is very high, which implies that the consumptions when young and old are the same, leading to the dynamics (\ref{sameconsumption}). This means the role of wariness on the equilibrium multiplicity is robust. In Section \ref{highwariness}, we will investigate the role of wariness in poverty traps.
	  
The following example shows a possibility of multiple equilibria under wariness. 
\begin{ex}\label{example1}
{\normalfont Consider $u(c)=ln(c)$ and the CES production function as in Corollary \ref{corollary-ces}: $F(K,L)=A((aK^{\rho}+(1-a)L^{\rho})^{\frac{1}{\rho}}, \text{ where $A>0$, $a\in (0,1)$ and $\rho\not=0,\rho<1$}$.

Assume that households only care about the lowest level of consumption over the life-cycle (i.e., when $\gamma=+\infty$, or equivalently, $\lambda=1$).

The positive sequence $(k_t)$ is an equilibrium if and only if 
\begin{align}\label{ces1}
nk_{t+1}=\frac{A(1-a)(ak_t^{\rho}+1-a)^{\frac{1}{\rho}-1}}{1+Aak_{t+1}^{\rho-1}(ak_{t+1}^{\rho}+1-a)^{\frac{1}{\rho}-1}}, \forall t\geq 0.
\end{align}
There may be multiple intertemporal equilibria as shown by the following graphic. Indeed, we drawn this graphic with $n=1.1, A=3, a=0.3$, $\rho=-3$. 
We see in  Figure \ref{fig0} that when $k_0=1$, there are three positive values of $k_1$ satisfying the dynamics of equilibrium capital path, i.e., $nk_{1}=\frac{A(1-a)(ak_0^{\rho}+1-a)^{\frac{1}{\rho}-1}}{1+Aak_{1}^{\rho-1}(ak_{1}^{\rho}+1-a)^{\frac{1}{\rho}-1}}$, which are approximately $0.4,  0.9, 1.6$. So, there exist at least 3 intertemporal equilibria. Among these three equilibria, if $k_1=0.4$, the economy will collapse (i.e., $\lim_{t\to\infty}k_t=0$. In contrast, if $k_1=1.6$, the economy will converge to a stable steady state.
\begin{figure}[h]
	\centering
	\includegraphics[width=0.4\linewidth]{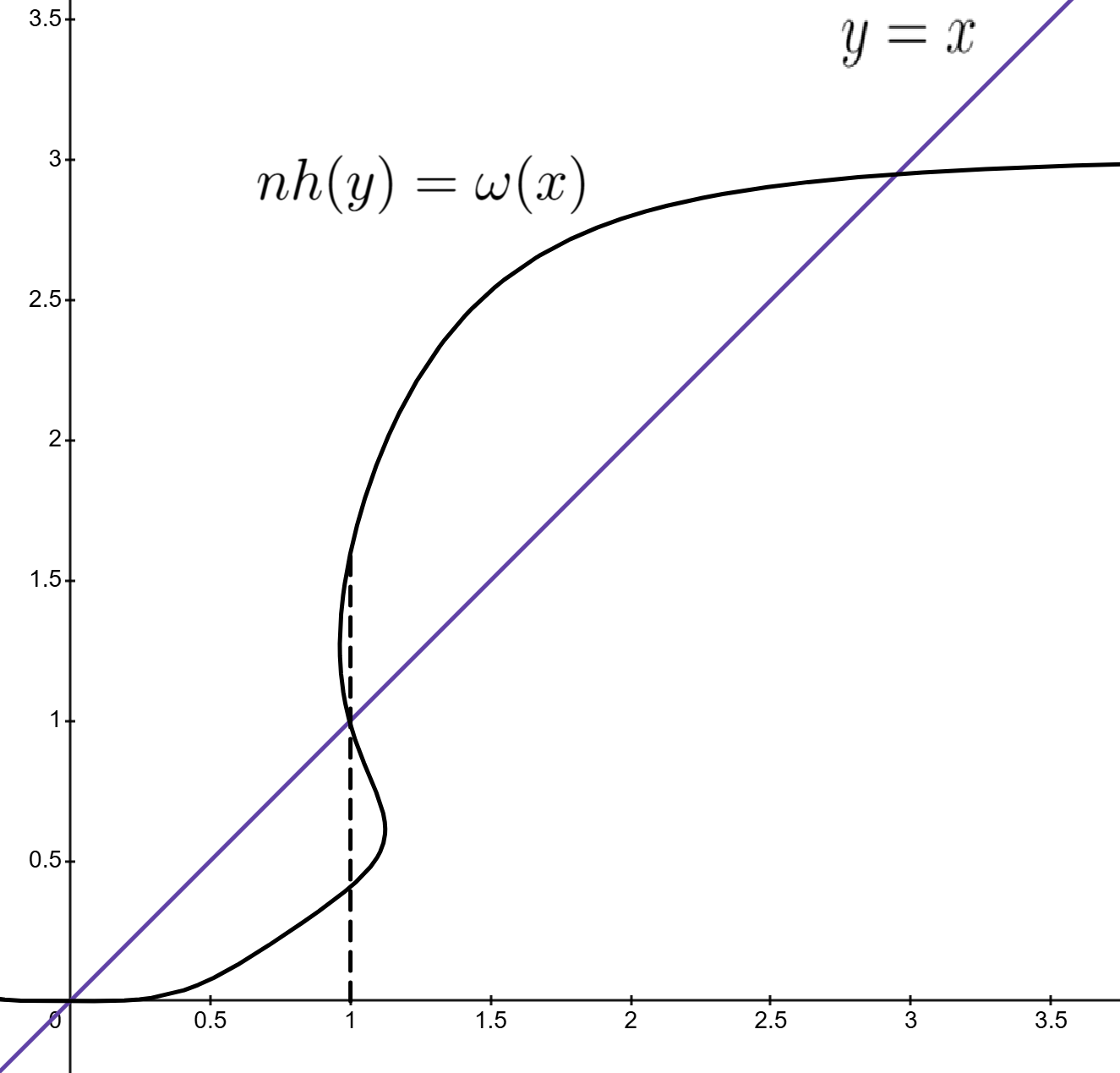}
	\caption{Multiple equilibria}
	\label{fig0}
\end{figure}
}
\end{ex}


It should be noticed that the wariness may lead to the occurrence of multiple equilibria which are in different regimes. The following result shows this possibility.
\begin{proposition}\label{multiple-version2}
Given $k_0>0$. There exist at least two equilibria  which belong to different regimes if at least two of the following situations happen:  \begin{enumerate}
	\item $nk = s_{\beta_1}(\omega(k_0), f'(k))$ has a solution $k^{(1)}_{1}$ which satisfies $f'(k^{(1)}_{1})<\frac{1}{\beta_1}$.
	\item  $nk = s_{\beta_1}(\omega(k_0), f'(k))$ also has a solution $k^{(2)}_{1}$ which satisfies $f'(k^{(2)}_{1})>\frac{1}{\beta_2}$ .
	\item  $nk = \frac{\omega(k_0)}{1+f'(k)}$ has a solution $k^{(3)}_{1}$ which satisfies $\frac{1}{\beta_1}
	 \leq f'(k^{(3)}_{1})\leq \frac{1}{\beta_2}$.
\end{enumerate}
Moreover, if these three conditions hold, there exist at least three equilibria.
\end{proposition}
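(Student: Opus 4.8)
The plan is to reduce each numbered situation to the construction of a genuine equilibrium sequence $(k_t)_{t\geq 0}$ starting from the prescribed $k_0$, and then observe that the three constructions, when they occur, land in three mutually exclusive regimes (regime 1, regime 2, regime 3 in the sense of Definition \ref{regimedefinition}), hence produce distinct equilibria. The starting point is Definition \ref{def3}: a positive sequence $(k_t)$ is an equilibrium iff $nk_{t+1}=s\big(\omega(k_t),f'(k_{t+1})\big)$ for all $t$, and by Proposition \ref{proposition1} the right-hand side is $s_{\beta_1}(\omega(k_t),f'(k_{t+1}))$, $\frac{\omega(k_t)}{1+f'(k_{t+1})}$, or $s_{\beta_2}(\omega(k_t),f'(k_{t+1}))$ according to whether $(k_t,k_{t+1})$ is in regime 1, 3, or 2. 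So each of the three hypotheses says precisely that the one-step map admits an image $k_1$ consistent with one of these three branches \emph{and} lying in the region of the state space where that branch is the active one.

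First I would make the key observation that builds the rest of the path: once $k_1$ is chosen, the continuation $(k_1,k_2,k_3,\dots)$ is just an equilibrium path of the same dynamical system starting from $k_1>0$, and such a path exists by the existence Lemma (Lemma on existence of intertemporal equilibrium, proved via Proposition 1.2 in \cite{cm02}), applied with initial condition $k_1$ rather than $k_0$. Concatenating $k_0$ with this continuation gives a full equilibrium sequence, and the pair $(k_0,k_1)$ sits in the regime dictated by the hypothesis we used. Concretely: situation~1 gives an equilibrium with $(k_0,k_1^{(1)})$ in regime 1 (since $f'(k_1^{(1)})<\tfrac1{\beta_1}$ and, by Proposition \ref{proposition1}(1a) together with the hypothesis $nk_1^{(1)}=s_{\beta_1}(\omega(k_0),f'(k_1^{(1)}))$, this is indeed the active branch); situation~2 gives one with $(k_0,k_1^{(2)})$ in regime 2 (here one needs $nk_1^{(2)}=s_{\beta_2}(\omega(k_0),f'(k_1^{(2)}))$ to be the active branch when $f'(k_1^{(2)})>\tfrac1{\beta_2}$ — note the hypothesis as stated writes $s_{\beta_1}$, which I would read as $s_{\beta_2}$, or else invoke Proposition \ref{proposition1}(1b)/(1c) to identify the two saving values on the boundary); situation~3 gives one with $(k_0,k_1^{(3)})$ in regime 3. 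Since regimes 1, 2, 3 are defined by the three disjoint conditions $f'(k_1)<\tfrac1{\beta_1}$, $f'(k_1)>\tfrac1{\beta_2}$, $\tfrac1{\beta_1}\le f'(k_1)\le\tfrac1{\beta_2}$, any two of these equilibria are genuinely different (their first coordinates $k_1$ differ, or at least the $(k_0,k_1)$ pair lies in a different regime). Hence if two of the situations hold we get at least two equilibria in different regimes, and if all three hold we get at least three.

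The main obstacle, and the only place where care is genuinely needed, is the consistency check at the regime boundaries: a chosen $k_1$ solving, say, $nk_1=s_{\beta_1}(\omega(k_0),f'(k_1))$ is only a legitimate equilibrium step if $f'(k_1)$ actually falls in the regime-1 range $f'(k_1)<\tfrac1{\beta_1}$, which is exactly what each hypothesis stipulates — so strictly speaking the hypotheses are tailored to make this automatic, and the proof is short once one has articulated the "extend by the existence lemma" principle. A secondary point to be explicit about is that the three candidate values $k_1^{(1)},k_1^{(2)},k_1^{(3)}$ are pairwise distinct: this follows because they lie in pairwise-disjoint regions of $f'$-values (using that $f'$ is strictly decreasing, so the preimages of the three disjoint intervals $\big(0,\tfrac1{\beta_1}\big)$, $\big(\tfrac1{\beta_2},\infty\big)$, $\big[\tfrac1{\beta_1},\tfrac1{\beta_2}\big]$ are disjoint). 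I would close by remarking that the continuations beyond $k_1$ need not differ — what distinguishes the equilibria is the first step — so even two equilibria agreeing from $t\ge 1$ onward but differing at $t=1$ count as distinct sequences, which is all Definition \ref{def3} requires.
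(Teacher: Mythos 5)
Your proposal is correct and follows essentially the same route as the paper's own proof: each hypothesis supplies a first step $(k_0,k_1^{(i)})$ lying in a distinct regime, the candidates $k_1^{(i)}$ are pairwise distinct because $f'$ is strictly decreasing and their $f'$-values lie in disjoint intervals, and the path is continued from $k_1$ (the paper leaves this continuation implicit, while you correctly make it explicit via the existence lemma). Your reading of condition 2 with $s_{\beta_2}$ in place of $s_{\beta_1}$ is also the right one --- the statement contains a typo, and the paper's proof implicitly treats that case as the regime-2 branch, exactly as you do.
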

\begin{proof}See Appendix \ref{section3-proofs}.\end{proof}

\begin{ex}    {\normalfont 
	As in Example \ref{example1}, we consider the CES production function and utility. Let $A=3.4, a=0.4$, $\rho=-3$, $n=1.32, \beta=0.7, \gamma=0.255$ and $k_0 = 1.5$.

    There are three values $k_1$ satisfying the system (\ref{CES-dynamics}). The first value is equal  $k_{1}^{(1)}=1.17$ with $f'(k_{1}^{(1)})-\frac{1}{\beta_1}=-0.15<0$.  The second value is $k_{1}^{(2)}=0.86$ with $f'(k_{1}^{(2)})-\frac{1}{\beta_2}=0.1>0$. The third value is $k_{1}^{(3)}=0.955$, which satisfies $\frac{1}{\beta_1}<f'(k_{1}^{(3)})<\frac{1}{\beta_2}$.

It means that we have at least three equilibria.
    
	
}    \end{ex}

\section{Wariness and poverty trap}
\label{section4trap}

In this section, we present general results showing the role of wariness on poverty trap. It is useful to introduce some notions of growth and collapse.
\begin{definition}[collapse and poverty trap] \label{def2}
\mbox{}
\begin{enumerate}
\item
A value $\bar{k}$ is called a {\it poverty trap} if, for any initial capital stock $k_0<\bar{k}$, we have $k_t<\bar{k}$ for any $t$ high enough.
\item
The economy collapses if $\lim\limits_{t\rightarrow \infty}k_t=0$. 

\end{enumerate}
\end{definition}
 Our formal definition of trap means that a poor country ($k_0\leq \bar{k}$) continues to be poor. It is in line with the notion of poverty trap  in \cite{as05}: A poverty trap is a self-reinforcing mechanism that causes poverty to persist. 
 
 In what follows, our aim is to identify the conditions under which this critical threshold $\bar{k}$ exists and how this threshold depends on wariness and factor substitutability. This allows us to understand how to prevent poverty traps. 

To address the issue of poverty trap, let us start by looking at the benchmark case  where this is no wariness.

\subsection{Poverty trap without wariness}
\label{poverty-nowariness}
We will investigate the poverty trap in the absence of wariness. Let us prepare our results by an intermediate step.
\begin{lemma}[Proposition 1.3 in \cite{cm02}]
\label{gbeta}Let Assumptions \ref{HU} and \ref{assum2} be satisfied. Assume that $cu'(c)$ is increasing in $c$. Let $\beta>0$.

Then $nk_{t+1} = s_{\beta}(\omega(k_t), f'(k_{t+1}))$ is equivalent to $k_{t+1}=g_{\beta}(k_t)$ where $g_{\beta}: \rr_+\to \rr_+$ is  continuously differentiable, strictly increasing.

By consequence, the equilibrium capital path $k_t$ is unique and converges.
\end{lemma}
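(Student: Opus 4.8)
The plan is to analyze the defining equilibrium relation $nk_{t+1} = s_\beta(\omega(k_t), f'(k_{t+1}))$ and show that it pins down $k_{t+1}$ as a well-behaved function of $k_t$. Recall from Definition \ref{sbeta} that $s_\beta(w,R)$ solves $u'(w-s) = \beta R\, u'(Rs)$. First I would substitute $w = \omega(k_t)$ and $R = f'(k_{t+1})$ and, crucially, use the capital-market relation $s = nk_{t+1}$ together with $d_{t+1} = Rs = f'(k_{t+1})\, nk_{t+1}$. This lets me rewrite the equilibrium condition purely in terms of $k_t$ and $k_{t+1}$ as
\begin{align*}
u'\bigl(\omega(k_t) - nk_{t+1}\bigr) = \beta\, f'(k_{t+1})\, u'\bigl(nk_{t+1} f'(k_{t+1})\bigr).
\end{align*}
The left-hand side, as a function of $k_{t+1}$ (with $k_t$ fixed), is strictly increasing (since $u'$ is decreasing and $-nk_{t+1}$ is decreasing, so the argument decreases), ranging from $u'(\omega(k_t))$ up to $+\infty$ as $nk_{t+1} \uparrow \omega(k_t)$. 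The right-hand side, written as $\beta\, R\, u'(nk_{t+1}R)$ with $R = f'(k_{t+1})$, is where the hypothesis $cu'(c)$ increasing enters: I would show it is strictly decreasing in $k_{t+1}$. Indeed $f'(k_{t+1})$ is decreasing in $k_{t+1}$ by strict concavity of $f$, and $R\, u'(nkR) = \frac{1}{nk}\bigl[(nkR)u'(nkR)\bigr]$; since $x\mapsto xu'(x)$ is increasing while $x = nk_{t+1}f'(k_{t+1})$ — wait, one must be careful here, so the cleaner route is: the assumption $cu'(c)$ increasing is exactly what Lemma \ref{h_increase}(1) uses to conclude that $f'(k)u'(nkf'(k))$ is strictly decreasing in $k$, so I invoke that directly for the right-hand side. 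Hence LHS is strictly increasing from a finite value to $\infty$, RHS is strictly decreasing and positive, so there is a unique $k_{t+1} \in (0, \omega(k_t)/n)$ solving the equation; call it $g_\beta(k_t)$.

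Next I would establish the regularity of $g_\beta$. Continuity and differentiability follow from the implicit function theorem applied to $\Phi(k_t, k_{t+1}) := u'(\omega(k_t) - nk_{t+1}) - \beta f'(k_{t+1}) u'(nk_{t+1}f'(k_{t+1})) = 0$: the partial derivative $\partial_{k_{t+1}}\Phi$ is strictly positive (LHS term contributes something positive, the subtracted RHS term is decreasing so its negative contributes positive as well) and nonzero, and $\Phi$ is $C^1$ by Assumptions \ref{HU} and \ref{assum2}. For strict monotonicity of $g_\beta$, I note that $\partial_{k_t}\Phi = u''(\omega(k_t)-nk_{t+1})\,\omega'(k_t)$; since $\omega'(k) = -kf''(k) > 0$ and $u'' < 0$, this partial is negative, so by the implicit function theorem $g_\beta'(k_t) = -\partial_{k_t}\Phi / \partial_{k_{t+1}}\Phi > 0$. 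Positivity of $g_\beta$ on $\rr_+$ is automatic since the solution lies in $(0,\omega(k_t)/n)$ and $\omega(k_t) > 0$ for $k_t > 0$ (and we can set $g_\beta(0) = 0$, checking the limiting behavior via $u'(0) = \infty$).

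Finally, convergence and uniqueness of the equilibrium path: since $k_{t+1} = g_\beta(k_t)$ with $g_\beta$ a well-defined single-valued strictly increasing continuous self-map of $\rr_+$, the sequence $(k_t)$ is uniquely determined by $k_0$, and being generated by a monotone increasing map it is monotone (increasing if $k_1 \geq k_0$, decreasing if $k_1 \leq k_0$); a bounded monotone sequence converges, and one would note the relevant boundedness (e.g. from $k_{t+1} < \omega(k_t)/n$ and the growth behavior of $\omega$), so $k_t \to k^*$ a steady state. The main obstacle I anticipate is handling the boundary/limiting behavior cleanly — verifying that $g_\beta$ extends continuously to $k_t = 0$ with $g_\beta(0) = 0$ and that the path cannot escape to infinity — since the paper explicitly allows $f'(0) < \infty$ and $f'(\infty) > 0$ (unlike the Inada conditions in \cite{pp24}); this requires care with the edge cases of the implicit equation rather than any deep new idea, and in fact much of this is already packaged in Proposition \ref{unique-convergence} and Lemma \ref{h_increase}, so the proof of this lemma is largely a matter of isolating the single-$\beta$, $cu'(c)$-increasing specialization of those results.
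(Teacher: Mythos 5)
Your proposal is correct and follows essentially the same route as the paper: the paper's proof also rests on the observation that $cu'(c)$ increasing signs the relevant derivative (there phrased as $\partial s_{\beta}/\partial R\geq 0$) and then applies the implicit function theorem, the only cosmetic difference being that it works through the saving function via $\Delta(k,w)\equiv nk-s_{\beta}(w,f'(k))$ and sets $g_{\beta}(k)=h(\omega(k))$, whereas you work directly with the Euler equation $u'(\omega(k_t)-nk_{t+1})=\beta f'(k_{t+1})u'(nk_{t+1}f'(k_{t+1}))$ and invoke Lemma \ref{h_increase}(1). If anything, you are more explicit than the paper on the existence of the crossing and on the monotone-convergence step, so no gap to flag.
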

\begin{proof}See Appendix \ref{poverty-nowariness-proof}. \end{proof}

The following assumption is a key not only for the existence of a poverty trap but also for the comparative statics.
\begin{assum}\label{assum-poverty}The function $g_{\beta}$ in Lemma \ref{gbeta} satisfies the following conditions: (1) for $\beta_1>\beta_2>0$, $g_{\beta_1}(x)>g_{\beta_2}(x)$ $\forall x\geq 0$, (2) the set of fixed points $\mathcal{S}\equiv \{x>0: x=g_{\beta}(x)\}$ is non empty, and (3) $\lim_{x\to 0}\frac{g_{\beta}(x)}{x}<1$. 
\end{assum}


It should be noticed that Assumption \ref{assum-poverty}  holds under some standard setups, for instance, under CES production  and logarithm utility functions as we will prove later.\footnote{However, under Cobb-Douglas production functions, Assumption \ref{assum-poverty} may not hold. Indeed, if $u(c)=ln(c)$ and $f(k)=Ak^{\alpha}$, we have a dynamics $nk_{t+1}=\frac{\beta}{1+\beta}(1-\alpha)Ak^{\alpha}$, which is similar to the standar Solow model.  In this case, it is clear that $
g_{\beta}'(0)=\infty$ which violates condition $\lim_{x\to\infty}\frac{g_{\beta}(x)}{x}<1$ in Assumption \ref{assum-poverty}. See \cite{pp24} for detailed analysis regarding the effect of wariness.}.  

Assumption \ref{assum-poverty} leads to the following result.
\begin{lemma}\label{poverty-function-g}
Let Assumptions \ref{HU} and \ref{assum2} be satisfied. Assume that $cu'(c)$ is increasing in $c$ and Assumption \ref{assum-poverty} holds.  Then, there exists $x_{\beta}>0$, which is decreasing in $\beta$ such that $x_{\beta}=g_{\beta}(x_{\beta})$ and $g_{\beta}(x)<x$ $\forall x \in (0,x_{\beta})$.
\end{lemma}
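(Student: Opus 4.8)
The plan is to show that the function $g_\beta$, which is continuous, strictly increasing, and satisfies $g_\beta(0)=0$ (since $g_\beta$ maps $\rr_+$ to $\rr_+$ and, being the solution of $nk_{t+1}=s_\beta(\omega(k_t),f'(k_{t+1}))$ with $\omega(0)=f(0)-0\cdot f'(0)$; more directly, $\lim_{x\to 0}g_\beta(x)/x<1$ forces $g_\beta(0)=0$), has its smallest strictly positive fixed point serve as the desired $x_\beta$. First I would invoke Lemma \ref{gbeta} to guarantee that $g_\beta$ is continuous and strictly increasing, and Assumption \ref{assum-poverty}(2) to ensure the fixed-point set $\s=\{x>0: x=g_\beta(x)\}$ is non-empty. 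Since $\s$ is a non-empty subset of $(0,\infty)$ that is closed (as the zero set of the continuous function $x\mapsto g_\beta(x)-x$), it has a minimum; call it $x_\beta\equiv\min\s$. By construction $x_\beta=g_\beta(x_\beta)$.

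Next I would verify $g_\beta(x)<x$ for all $x\in(0,x_\beta)$. By Assumption \ref{assum-poverty}(3), $\lim_{x\to 0}g_\beta(x)/x<1$, so there is some small $\delta>0$ with $g_\beta(x)<x$ on $(0,\delta)$. The function $\varphi(x)\equiv g_\beta(x)-x$ is continuous on $(0,x_\beta]$, is negative near $0$, and vanishes at $x_\beta$; I claim it cannot become positive anywhere on $(0,x_\beta)$. If it did, say $\varphi(y)>0$ for some $y\in(0,x_\beta)$, then since $\varphi<0$ near $0$, the intermediate value theorem would produce a zero of $\varphi$ strictly between $0$ and $y$, hence a fixed point of $g_\beta$ in $(0,x_\beta)$, contradicting the minimality of $x_\beta$. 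Therefore $\varphi(x)\le 0$ on $(0,x_\beta)$, and in fact $\varphi(x)<0$ there because any interior zero would again contradict minimality. This gives $g_\beta(x)<x$ on $(0,x_\beta)$.

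Finally I would establish monotonicity of $x_\beta$ in $\beta$. Fix $\beta_1>\beta_2>0$ and write $x_1=x_{\beta_1}$, $x_2=x_{\beta_2}$. By Assumption \ref{assum-poverty}(1), $g_{\beta_1}(x)>g_{\beta_2}(x)$ for all $x\ge 0$. Suppose for contradiction that $x_1\ge x_2$. Consider the function $\psi(x)\equiv g_{\beta_1}(x)-x$ on the interval $(0,x_2]$. Near $0$ we have $\psi(x)<0$ (by Assumption \ref{assum-poverty}(3) applied to $g_{\beta_1}$), while at $x_2$ we have $\psi(x_2)=g_{\beta_1}(x_2)-x_2>g_{\beta_2}(x_2)-x_2=0$. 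By the intermediate value theorem $\psi$ has a zero in $(0,x_2)\subseteq(0,x_1)$, i.e. $g_{\beta_1}$ has a fixed point strictly below $x_1=x_{\beta_1}$, contradicting the minimality of $x_{\beta_1}$. Hence $x_{\beta_1}<x_{\beta_2}$, i.e. $x_\beta$ is strictly decreasing in $\beta$.

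The main obstacle is being careful that the fixed-point set is closed so that the minimum exists, and handling the behavior of $g_\beta$ at the origin — in particular confirming that $g_\beta(0)=0$ (so that $0\notin\s$ but $\varphi$ is well-behaved down to $0$) and that Assumption \ref{assum-poverty}(3) genuinely pins down the sign of $g_\beta(x)-x$ for small $x$. Everything else is a standard intermediate-value-theorem argument exploiting the minimality of $x_\beta$.
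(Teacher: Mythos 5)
Your proof is correct and follows essentially the same route as the paper: take $x_\beta$ to be the smallest positive fixed point of $g_\beta$, use Assumption \ref{assum-poverty}(3) together with the intermediate value theorem and minimality to get $g_\beta(x)<x$ on $(0,x_\beta)$, and use Assumption \ref{assum-poverty}(1) with minimality to get the monotonicity in $\beta$ (your argument even yields strict monotonicity, and your sign conventions are the correct ones). One small repair: a closed subset of $(0,\infty)$ need not possess a minimum on its own, so the existence of $\min\mathcal{S}$ should be justified by first noting, via Assumption \ref{assum-poverty}(3), that $\mathcal{S}$ contains no points in some interval $(0,\delta)$ — a fact you establish anyway in your second paragraph, and which is exactly how the paper shows $\inf\mathcal{S}>0$.
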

\begin{proof}See Appendix \ref{poverty-nowariness-proof}. \end{proof}

\begin{proposition}
\label{comparativebeta}
Let Assumptions \ref{HU}, \ref{assum2} be satisfied. Let $\beta>0$.
Consider the economic system $nk_{t+1} = s_{\beta}(\omega(k_t), f'(k_{t+1}))$. Assume that $cu'(c)$ is  increasing in $c$ and Assumption \ref{assum-poverty} holds. Then, there exists $x_{\beta}>0$, which is decreasing in $\beta$ such that $x_{\beta}=g_{\beta}(x_{\beta})$ and $\lim_{t\to\infty}k_t=0$ for any $k_0 \in (0,x_{\beta})$. It means that $x_{\beta}$ is a poverty trap. Moreover, $x_{\beta}$ is not stable.
\end{proposition}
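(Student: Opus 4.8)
\textbf{Proof plan for Proposition \ref{comparativebeta}.}
The plan is to reduce the statement almost entirely to the already-established Lemmas \ref{gbeta} and \ref{poverty-function-g}, and then argue the dynamic consequence and the instability separately. First I would invoke Lemma \ref{gbeta}: since $cu'(c)$ is increasing, the equilibrium relation $nk_{t+1} = s_{\beta}(\omega(k_t), f'(k_{t+1}))$ is equivalent to the explicit one-dimensional dynamical system $k_{t+1} = g_{\beta}(k_t)$ with $g_{\beta}$ continuous and strictly increasing on $\rr_+$, so the equilibrium path is unique and well-defined given $k_0$. Then I would invoke Lemma \ref{poverty-function-g} (which uses Assumption \ref{assum-poverty}) to produce the point $x_{\beta}>0$ that is decreasing in $\beta$, satisfies $x_{\beta} = g_{\beta}(x_{\beta})$, and has the property $g_{\beta}(x) < x$ for all $x \in (0, x_{\beta})$. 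These two facts are the backbone; the only new work is to translate them into a statement about the long-run behavior of $(k_t)$ and into non-stability.

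Next I would establish the convergence-to-zero claim. Fix $k_0 \in (0, x_{\beta})$. Because $g_{\beta}$ is strictly increasing and $g_{\beta}(x) < x$ on $(0, x_{\beta})$, the interval $(0, x_{\beta})$ is forward-invariant: if $k_t \in (0, x_{\beta})$ then $k_{t+1} = g_{\beta}(k_t) < k_t < x_{\beta}$, and $k_{t+1} > 0$ since $g_{\beta}$ maps into $\rr_+$ and (by strict monotonicity and $g_\beta$ being the saving-induced map) stays positive for positive input. Hence $(k_t)$ is strictly decreasing and bounded below by $0$, so it converges to some limit $\ell \in [0, x_{\beta})$. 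By continuity of $g_{\beta}$, $\ell = g_{\beta}(\ell)$, i.e.\ $\ell$ is a fixed point of $g_{\beta}$ lying in $[0, x_{\beta})$. But by construction of $x_{\beta}$ in Lemma \ref{poverty-function-g}, there is no fixed point in $(0, x_{\beta})$ (there $g_{\beta}(x) < x$), so $\ell = 0$. This gives $\lim_{t\to\infty} k_t = 0$ for every $k_0 \in (0, x_{\beta})$, which is precisely the statement that $x_{\beta}$ is a poverty trap in the sense of Definition \ref{def2} (any poor start $k_0 < x_{\beta}$ stays below $x_{\beta}$ — indeed tends to $0$).

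Finally, for non-stability of the steady state $x_{\beta}$: since every trajectory started strictly below $x_{\beta}$ converges to $0$ rather than back to $x_{\beta}$, the fixed point $x_{\beta}$ cannot be (locally asymptotically) stable — arbitrarily close initial conditions on the left are repelled. I would phrase this cleanly by noting that for $k_0 \in (x_\beta - \varepsilon, x_\beta)$ we have $k_t \to 0$, so $x_\beta$ is not stable. (If one wants the sharper statement that $g_\beta'(x_\beta) \geq 1$, that would follow from $g_\beta(x) < x$ to the left of $x_\beta$ together with differentiability of $g_\beta$ from Lemma \ref{gbeta}, but the topological argument already suffices for ``not stable.'') The monotone-decreasing dependence of $x_\beta$ on $\beta$ is inherited verbatim from Lemma \ref{poverty-function-g}, so nothing further is needed there.

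\textbf{Main obstacle.} The genuinely delicate point is not the dynamics — that is a routine monotone-sequence argument once $g_\beta$ is known to be increasing with $g_\beta(x)<x$ below $x_\beta$ — but rather making sure the hypotheses feeding Lemma \ref{poverty-function-g} are in force, in particular that Assumption \ref{assum-poverty} (the ordering $g_{\beta_1} > g_{\beta_2}$, nonemptiness of the fixed-point set $\mathcal{S}$, and $\lim_{x\to 0} g_\beta(x)/x < 1$) is being invoked correctly; the condition $\lim_{x\to 0} g_\beta(x)/x < 1$ is exactly what guarantees $g_\beta(x)<x$ near $0$ and hence that $x_\beta$ is the \emph{smallest} positive fixed point, which is what makes the basin $(0,x_\beta)$ collapse to zero. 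Care is also needed to confirm $k_{t+1}>0$ strictly (so the path stays in the open interval and the continuity/fixed-point argument applies), but this is immediate from the structure of $s_\beta$ under Assumption \ref{HU}.
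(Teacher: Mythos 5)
Your proposal is correct and follows essentially the same route as the paper: reduce to $k_{t+1}=g_{\beta}(k_t)$ via Lemma \ref{gbeta}, use Lemma \ref{poverty-function-g} to get $x_{\beta}$ with $g_{\beta}(x)<x$ on $(0,x_{\beta})$, and then run the monotone-decreasing-sequence argument whose limit must be a fixed point and hence zero. The only difference is that you also spell out the instability of $x_{\beta}$ and the continuity/positivity details, which the paper's proof leaves implicit.
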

\begin{proof}See Appendix \ref{poverty-nowariness-proof}. \end{proof}

Definition 1.7 in \cite{cm02} mentions the notion of catching point: we says that $0$ is a catching point if for $g_{\beta}(k)<k$ for $k$ small. Then, \cite{cm02}'s Section 1.6.3 provides some necessary and sufficient conditions under which $0$ is a catching point. Here, our added value is to show not only the existence but also the monotonicity of $x_{\beta}$ which will be useful for studying the effects of warness on poverty trap.

We now consider a special case to illustrate and complement Proposition \ref{comparativebeta}.
 \begin{lemma}[No warriness]	\label{CesFull}Consider a CES production function and logarithm utility as in Corollary \ref{corollary-ces} and $\gamma = 0$. For $\beta>0$, the dynamical system $ nk_{t+1} = s_{\beta}(\omega(k_t), f'(k_{t+1})),\forall t$, becomes $k_{t+1}= g_{\beta}(k_t)\equiv 			\frac{\beta A(1-a)(ak_t^{\rho}+1-a)^{\frac{1}{\rho}-1}}{n(1+\beta)},$ where $g_{\beta}$ is defined by (\ref{func_g}), i.e., $g_{\beta}(x)= \frac{\beta A(1-a)(ax^{\rho}+1-a)^{\frac{1}{\rho}-1}}{n(1+\beta)}$, and $\omega(k)\equiv f(k)-kf'(k)=A(1-a)(ak^{\rho}+1-a)^{\frac{1}{\rho}-1}$.

	\begin{enumerate}
\item\label{ces-ex2-case1} If $\rho>0$, then $k_t$ converges to a strictly positive value for any $k_0>0$.

\item \label{ces-ex2-case2} If $\rho<0$, then $\max_{x\geq 0}\frac{\omega(x)}{x}\equiv -A\rho a^{\frac{1}{\rho}}(1-\rho)^{\frac{1}{\rho}-1}$, which achieves at $x_0\equiv \left(\frac{1-a}{-a\rho}\right)^{\frac{1}{\rho}}$.
\begin{enumerate}
	\item\label{ces-ex2-case2a}
	If $ -A\rho a^{\frac{1}{\rho}}(1-\rho)^{\frac{1}{\rho}-1}<\frac{n(1+\beta)}{\beta}$, then we have $\lim \limits_{t\to \infty}k_t=0$ for any $k_0>0$.
	
	\item \label{ces-ex2-case2b}
	If $ -A\rho a^{\frac{1}{\rho}}(1-\rho)^{\frac{1}{\rho}-1}=\frac{n(1+\beta)}{\beta}$, then $g_{\beta}(x_0)=x_0$ and 	\begin{align}
		&\lim_{t\to \infty} k_t=0 \text{ for any } k_0< x_0, 		&\lim_{t\to \infty} k_t&=x_0 \text{ for } k_0\geq x_0.
	\end{align}
	\item\label{ces-ex2-case2c}
	If $\max_{x\geq 0}\frac{\omega(x)}{x}\equiv -A\rho a^{\frac{1}{\rho}}(1-\rho)^{\frac{1}{\rho}-1}>\frac{n(1+\beta)}{\beta}$, then there exists $x_1, x_2$ such that $0<x_1<x_0<x_2$, $x_i=g_{\beta}(x_i)$ for $i=1,2$. Moreover,
\begin{align*}	&\lim_{t\to \infty}k_t=0 \text{ } \forall k_0< x_1, 		&\lim_{t\to \infty}k_t&=x_1 \text{ if } k_0= x_1, 		&\lim_{t\to \infty}k_t&=x_2 \text{ } \forall k_0 > x_1.\end{align*}

\end{enumerate}
\item \label{ComparativeStatics}
{\bf Comparative statics}. $x_1$ in point \ref{ces-ex2-case2c} is decreasing in $\beta,A$ but increasing in $a$.

Role of elasticity of factor substitution: Denote $y_s<1$ the second solution to the equation $B(y)\equiv (ay+1-a)\ln(ay+1-a)-(1-\rho)ay\ln(y)=0$ (this equation has two solutions $y_s$ and $1$). 

\begin{enumerate}
    \item \label{cesrho1} $x_1$ is decreasing in $\rho$ 
(i.e., $x_1'(\rho)<0$) iff $x_1^{\rho}\in  (0,y_s)\cup (1,\infty)$ (or equivalently, $x_1\in (0,1)\cup (y_s^{\frac{1}{\rho}}, \infty)$).
This happens when $x_1<x_0<1$, which is satisfied if the elasticity of factor substitution $\frac{1}{1-\rho}$ is quite high in the sense that $\frac{1}{1-\rho}>1-a$.

\item \label{cesrho2} $x_1$ is increasing in $\rho$ (i.e., $x_1'(\rho)>0$) iff $x_1^{\rho}\in (y_s,1)$ (or, equivalently, $1<x_1<y_s^{\frac{1}{\rho}}$).
\end{enumerate}
\end{enumerate}

 \end{lemma}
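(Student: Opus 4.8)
The plan is to treat the $\gamma=0$ case as a direct specialization of the general dynamics, then analyze the scalar map $g_\beta$ explicitly. First I would substitute the CES form into the equilibrium dynamics. With $\gamma=0$ we have $\beta_1=\beta_2=\beta$, so by Corollary \ref{basic1}(1) the dynamics reduces to $nk_{t+1}=s_\beta(\omega(k_t),f'(k_{t+1}))$; and under logarithmic utility the saving function $s_\beta(w,R)=\frac{\beta}{1+\beta}w$ is independent of $R$ (solving $\frac{1}{w-s}=\beta R\frac{1}{Rs}$ gives $s=\frac{\beta}{1+\beta}w$). Combined with the explicit wage $\omega(k)=f(k)-kf'(k)=A(1-a)(ak^\rho+1-a)^{1/\rho-1}$ (computed in Appendix \ref{specialcase}), this yields $k_{t+1}=g_\beta(k_t)$ with $g_\beta$ exactly as in \eqref{func_g}. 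Since $g_\beta$ is continuous, strictly increasing, and the one-dimensional dynamics is autonomous, the equilibrium is unique and $k_t$ is monotone, hence convergent — so the whole analysis is about fixed points of $g_\beta$ and the sign of $g_\beta(x)-x$.

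Next I would study $\phi(x)\equiv \frac{g_\beta(x)}{x}=\frac{\beta A(1-a)}{n(1+\beta)}\cdot\frac{(ax^\rho+1-a)^{1/\rho-1}}{x}$, so that fixed points of $g_\beta$ are zeros of $\phi(x)-1$, and $g_\beta(x)\lessgtr x$ according to $\phi(x)\lessgtr 1$. For $\rho>0$: as $x\to 0^+$, $(ax^\rho+1-a)^{1/\rho-1}\to(1-a)^{1/\rho-1}>0$ so $\phi(x)\to\infty$; as $x\to\infty$, $(ax^\rho+1-a)^{1/\rho-1}\sim(ax^\rho)^{1/\rho-1}=a^{1/\rho-1}x^{1-\rho}$, so $g_\beta(x)\sim C x^{1-\rho}$ with $1-\rho\in(0,1)$, giving $\phi(x)\to 0$; together with monotonicity of $\phi$ (which I would verify by differentiating, or by noting $\omega(x)/x$ is monotone) this forces a unique strictly positive fixed point that is globally attracting — case \ref{ces-ex2-case1}. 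For $\rho<0$ the key object is $\frac{\omega(x)}{x}=A(1-a)(ax^\rho+1-a)^{1/\rho-1}/x$; I would show by first-order conditions that it attains a finite maximum $-A\rho a^{1/\rho}(1-\rho)^{1/\rho-1}$ at $x_0=((1-a)/(-a\rho))^{1/\rho}$ (this is a routine calculus exercise: set the derivative of $\ln(\omega(x)/x)$ to zero). Since $g_\beta(x)=\frac{\beta}{n(1+\beta)}\omega(x)$, the comparison $g_\beta(x)\gtrless x$ is governed entirely by whether $\max_x\omega(x)/x$ exceeds $\frac{n(1+\beta)}{\beta}$, which gives the three subcases \ref{ces-ex2-case2a}–\ref{ces-ex2-case2c}; the convergence statements in each follow from monotonicity of $g_\beta$ (in \ref{ces-ex2-case2c}, $g_\beta(x)<x$ on $(0,x_1)$ forces $k_t\downarrow 0$, while $g_\beta(x)>x$ on $(x_1,x_2)$ and $<x$ beyond $x_2$ funnels everything to $x_2$).

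For the comparative statics in part \ref{ComparativeStatics}, the dependence of $x_1$ on $\beta$ and $A$ is immediate: $x_1$ is the smaller solution of $\omega(x)/x=\frac{n(1+\beta)}{\beta}$, the left side is fixed while the right side decreases in $\beta$, and since $\omega(x)/x$ is increasing on $(0,x_0)$, a lower threshold gives a smaller $x_1$; similarly raising $A$ scales $\omega$ up, again lowering $x_1$ (and the $a$-dependence is handled the same way once one signs $\partial(\omega(x)/x)/\partial a$ at $x=x_1$). For the role of $\rho$ I would implicitly differentiate the defining relation $\ln\omega(x_1)-\ln x_1=\ln\frac{n(1+\beta)}{\beta}$ with respect to $\rho$; writing $y=x_1^\rho$ and simplifying, the sign of $x_1'(\rho)$ reduces to the sign of $B(y)=(ay+1-a)\ln(ay+1-a)-(1-\rho)ay\ln y$. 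The main obstacle — and the only genuinely delicate part — is the sign analysis of $B$: one checks $B(1)=0$, identifies the second root $y_s<1$, and shows $B<0$ on $(y_s,1)$ and $B>0$ on $(0,y_s)\cup(1,\infty)$ (via $B'$ and convexity-type arguments on $\ln$). Translating "$x_1^\rho\in(0,y_s)\cup(1,\infty)$" into the stated sufficient condition $x_1<x_0<1$ — and then into the clean condition $\frac{1}{1-\rho}>1-a$ on the elasticity of substitution — requires comparing $x_0=((1-a)/(-a\rho))^{1/\rho}$ with $1$, i.e. checking when $(1-a)/(-a\rho)\gtrless 1$, which rearranges to the elasticity bound; this I expect to be straightforward once $B$'s sign is pinned down.
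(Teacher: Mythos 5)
Your route is essentially the paper's: reduce to $k_{t+1}=g_\beta(k_t)$ with $g_\beta$ proportional to $\omega$, analyze $W(x)=\omega(x)/x$ (strictly decreasing for $\rho>0$; maximum $-A\rho a^{\frac{1}{\rho}}(1-\rho)^{\frac{1}{\rho}-1}$ attained at $x_0$ for $\rho<0$), obtain the three subcases by comparing $\max_x W(x)$ with $\frac{n(1+\beta)}{\beta}$, conclude convergence from monotonicity of $g_\beta$, and do the comparative statics by implicit differentiation of the fixed-point equation, with the $\rho$-case reducing to the sign of $B(x_1^{\rho})$. All of that matches the appendix proof.

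There is, however, a genuine sign error in the one step you yourself call delicate. The correct reduction (as in the paper) is $\mathrm{Sign}\big(x_1'(\rho)\big)=\mathrm{Sign}\big(B(x_1^{\rho})\big)$, and the correct sign pattern of $B$ is the opposite of the one you assert: for $\rho<0$, $B$ is strictly concave (one computes $B''(y)=\frac{a\left(\rho a y-(1-\rho)(1-a)\right)}{(ay+1-a)y}<0$), with $B(0^+)=(1-a)\ln(1-a)<0$, $B(1)=0$, $B'(1)=\rho a<0$ and $B(y)\to-\infty$ as $y\to\infty$; hence $B>0$ exactly on $(y_s,1)$ and $B<0$ on $(0,y_s)\cup(1,\infty)$, not the other way around. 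As written, your two claims together would give $x_1'(\rho)<0$ iff $x_1^{\rho}\in(y_s,1)$, i.e. the reverse of parts (a) and (b) of the lemma; the fix is to keep the relation $\mathrm{Sign}(x_1'(\rho))=\mathrm{Sign}(B(x_1^{\rho}))$ and redo the elementary analysis of $B$ as above. Two smaller cautions. First, the $a$-monotonicity is not a routine sign check: $\partial\big(\omega(x)/x\big)/\partial a$ evaluated at $x_1$ has the required negative sign only because $x_1<x_0$, i.e. $x_1^{\rho}>\frac{1-a}{-a\rho}>\frac{1-a}{1-a-\rho}$; this inequality is exactly what the paper's computation uses, so it should be stated rather than left implicit. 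Second, the "straightforward" final translation deserves to be carried out: $x_0<1$ rearranges (since $\rho<0$) to $\frac{1-a}{-a\rho}>1$, i.e. $\rho>-\frac{1-a}{a}$, equivalently $\frac{1}{1-\rho}>a$; verify explicitly how this squares with the bound $\frac{1}{1-\rho}>1-a$ stated in the lemma before waving the step through.
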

\begin{proof}See Appendix \ref{specialcase}.\end{proof}

 \cite{cm02}, pages 31-33, provide qualitative analyses to explain the existence of steady states $x_0,x_1,x_2$. We contribute by providing the global dynamics of $(k_t)$ in all cases and explicitly compute $\max_{x\geq 0}\frac{\omega(x)}{x}\equiv -A\rho a^{\frac{1}{\rho}}(1-\rho)^{\frac{1}{\rho}-1}$, which achieves at $x_0$. Another added value is that we show comparative statics to understand the effects of discount rate $\beta$ and substitutability parameters $a,\rho$.

As a direct consequence of Proposition \ref{comparativebeta} and Lemma \ref{CesFull}, the critical threshold $x_1$ in case \ref{ces-ex2-case2c} (or $x_0$ in case \ref{ces-ex2-case2b}) in Lemma \ref{CesFull} is the maximum poverty trap.

Some comments deserve mention with respect to the role of parameters $a$ and $\rho$, which characterize the substitutability of factors.
\begin{enumerate}
    \item 
    {\bf Capital intensity (automation) and poverty trap}.     According \cite{Aghionetal19}, the capital intensity parameter $a$ can be interpreted as the fraction of goods that have been automated by machines. Since $x_1$ in case \ref{ces-ex2-case2c} in Lemma \ref{CesFull} is increasing in $a$, our result indicates that the higher this fraction, the larger the set of poverty traps. This implies that when more goods can be automated, we need more capital to avoid poverty trap.
    \item {\bf Elasticity of factor substitution $\frac{1}{1-\rho}$ and poverty trap}.  Lemma \ref{CesFull}'s point  \ref{ComparativeStatics} shows that the effects of the elasticity of substitution on the set of poverty traps can be negative or positive depending on the economy's structure. 
    
According to the case (\ref{cesrho1}), when $\frac{1}{1-\rho}>1-a$ (i.e., $\rho>-\frac{a}{1-a}$), then the poverty trap $x_1$ is decreasing in $\rho$ and hence in the elasticity of factor substitution. So, the higher the elasticity of substitution, the smaller the set of poverty traps, the better chance to prevent poverty traps. Note also that case (\ref{cesrho1}) happens when $x_1^{\rho}<y_{s}$, i.e., the productivity is low (because $x_1^{\rho}$ is increasing in productivity $A$ and $y_s$ does not depend on $A$). According to  some empirical studies \citep{KlumpIrmen07, Knoblachetal2020}, the EFS would be less than $1$.  If we consider an example where the EFS equals $\frac{1}{1-\rho}=0.6$, then condition $\frac{1}{1-\rho}>1-a$ holds if $a>0.4$. 

However, in the case (\ref{cesrho2}), a higher elasticity of substitution generates more difficulty in escaping poverty traps. This happens when the productivity has an intermediate level (because $x_1^{\rho}$ is increasing in productivity $A$ and $y_s$ does not depend on $A$).

The literature, for example, \cite{Azariadis1996},  \cite{as05}, \cite{cm02},  provides some discussions regarding the role of elasticity of substitution on poverty traps. However, they did not explicitly provide comparative statics while we do. 
\end{enumerate}
\begin{ex}
This example illustrates our  comparative statics regarding the role the elasticity of substitution $\rho$ on the threshold of poverty trap $x_1$ in the standard OLG in Lemma \ref{CesFull}. With $A= 6.6, \beta = 0.75, n= 1.05, a= 0.35$, $x_1$ is decreasing in $\rho$ (see the graph on the left in Figure \ref{poverty_elasticity}). Inversely with $A= 6.6, \beta = 0.75, n= 1.05, a= 0.65$, $x_1$ is increasing in $\rho$ for $\rho \in [-2, -0.8]$ (see the graph on the right in Figure \ref{poverty_elasticity}). 
\begin{figure}[h]
    \centering
    \includegraphics[width=0.45\linewidth]{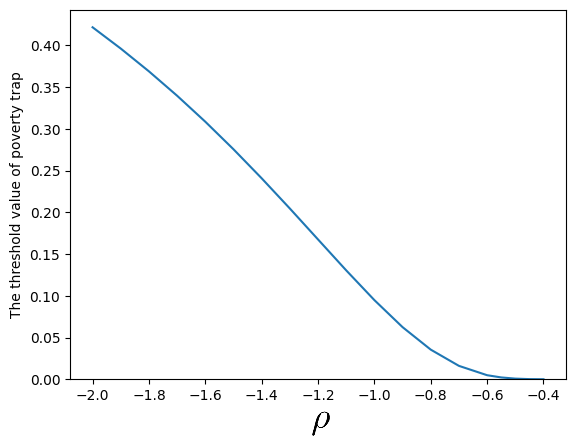}    \includegraphics[width=0.45\linewidth]{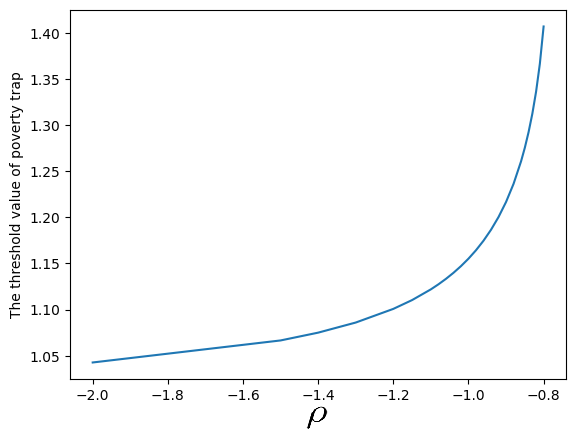}

    \caption{Effect of $\rho$ on the threshold of poverty trap $x_1$ in  Lemma \ref{CesFull}'s point  \ref{ComparativeStatics}.}
    \label{poverty_elasticity}
\end{figure}
\end{ex}


\subsection{Poverty trap with low wariness}\label{trap-wariness}

We now show the impact of wariness on the poverty trap in different circumstances. In the case of low productivity and low wariness, Proposition \ref{comparativebeta} leads to the following result.
\begin{proposition}[Low productivity and low wariness]\label{poverty2_1}Let Assumptions \ref{HU}, \ref{assum2}, \ref{assum-poverty} be satisfied and $cu'(c)$ be increasing in $c$.

If $f'(0)<\frac{1}{\beta_1}$, then the equilibrium capital path satisfies $nk_{t+1} = s_{\beta_1}(\omega(k_t), f'(k_{t+1}))$ $ \forall t$. By consequence, the equilibrium capital path $k_t$ is uniquen monotonic, and converges.

$x_{\beta_1}$ is a poverty trap. Moreover, it is decreasing in the wariness level $\gamma$ and $x_{\beta_1}<x_{\beta}$.

\end{proposition}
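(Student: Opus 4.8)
\textbf{Proof proposal for Proposition \ref{poverty2_1}.}

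The plan is to reduce the claim to the already-established machinery for the wariness-free system. First I would observe that the condition $f'(0)<\frac{1}{\beta_1}$ together with the strict concavity of $f$ (Assumption \ref{assum2}) forces $f'(k)\le f'(0)<\frac{1}{\beta_1}$ for every $k>0$. Hence, for any candidate equilibrium sequence, the capital return $f'(k_{t+1})$ always lies strictly below the threshold $\frac{1}{\gamma+\beta}=\frac{1}{\beta_1}$, so we are permanently in regime~1 of Definition \ref{regimedefinition}. By Proposition \ref{proposition1}(1) (equivalently Corollary \ref{basic1}, point~2), the equilibrium system (\ref{kt+1}) collapses to the single branch $nk_{t+1}=s_{\beta_1}(\omega(k_t),f'(k_{t+1}))$ for all $t$. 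This is exactly the economic system studied in Proposition \ref{comparativebeta} with discount parameter $\beta$ replaced by $\beta_1$.

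Next I would apply Proposition \ref{comparativebeta} verbatim with $\beta_1$ in place of $\beta$: since $cu'(c)$ is increasing and Assumption \ref{assum-poverty} is in force, Lemma \ref{gbeta} gives $k_{t+1}=g_{\beta_1}(k_t)$ with $g_{\beta_1}$ continuously differentiable and strictly increasing, so the equilibrium path is unique, monotonic, and converges. Proposition \ref{comparativebeta} then yields the fixed point $x_{\beta_1}=g_{\beta_1}(x_{\beta_1})$ with $g_{\beta_1}(x)<x$ on $(0,x_{\beta_1})$, from which $\lim_{t\to\infty}k_t=0$ for every $k_0\in(0,x_{\beta_1})$; that is precisely the statement that $x_{\beta_1}$ is a (non-stable) poverty trap in the sense of Definition \ref{def2}.

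For the comparative-statics tail of the statement, I would invoke the monotonicity of $\beta\mapsto x_\beta$ already contained in Lemma \ref{poverty-function-g} / Proposition \ref{comparativebeta}: $x_\beta$ is decreasing in $\beta$. Since $\beta_1=\beta+\gamma$ is strictly increasing in $\gamma$ (and $\beta_1>\beta$ whenever $\gamma>0$), it follows immediately that $x_{\beta_1}$ is decreasing in $\gamma$ and that $x_{\beta_1}\le x_\beta$, with strict inequality for $\gamma>0$. Concretely, from Assumption \ref{assum-poverty}(1), $g_{\beta_1}(x)>g_\beta(x)$ for all $x\ge 0$, and evaluating at $x=x_\beta$ gives $g_{\beta_1}(x_\beta)>x_\beta$; combined with $\lim_{x\to0}g_{\beta_1}(x)/x<1$ and the intermediate value theorem, the smallest fixed point $x_{\beta_1}$ of $g_{\beta_1}$ must lie strictly to the left of $x_\beta$.

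The only genuinely non-routine point is the very first step --- justifying that regime~1 is entered and never left, so that the piecewise system degenerates to a single smooth branch. This is where the hypothesis $f'(0)<\frac{1}{\beta_1}$ does all the work, and it is immediate from concavity of $f$; once that reduction is made, everything else is a direct citation of Proposition \ref{comparativebeta} and the monotonicity already recorded in Lemma \ref{poverty-function-g}. I would therefore keep the written proof short, emphasizing the reduction and then writing ``apply Proposition \ref{comparativebeta} with $\beta$ replaced by $\beta_1$'' rather than re-deriving the fixed-point analysis.
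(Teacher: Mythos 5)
Your proposal is correct and follows essentially the same route as the paper: Corollary \ref{basic1} (regime~1 holds permanently since $f'(k)\le f'(0)<\tfrac{1}{\beta_1}$), then Lemma \ref{gbeta} and Proposition \ref{comparativebeta} applied with $\beta_1$ in place of $\beta$, and the monotonicity of $x_\beta$ from Lemma \ref{poverty-function-g} for the comparative statics. Your extra intermediate-value argument making $x_{\beta_1}<x_\beta$ strict is a small refinement the paper leaves implicit, but otherwise the two proofs coincide.
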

\begin{proof}See Appendix \ref{trap-wariness-proof}.\end{proof}



We complement Proposition \ref{poverty2_1} by focusing on a special case. The following result is a consequence of Proposition \ref{poverty2_1} and Lemma \ref{CesFull}.
 
 \begin{proposition}[Low productivity and low warriness]\label{cesll}
Consider CES production and logarithmic utility functions as in Corollary \ref{corollary-ces}.
Suppose that $\rho<0$ and $f'(0)=Aa^{\frac{1}{\rho}}<\frac{1}{\beta_1}\equiv \frac{1}{\beta +\gamma}$. We have $k_{t+1} = g_{\beta_1}(k_t)$. Recall that $\max_{x\geq 0}\frac{\omega(x)}{x}\equiv -A\rho a^{\frac{1}{\rho}}(1-\rho)^{\frac{1}{\rho}-1}$.
\begin{enumerate}
	\item \label{cesll1}
	If $\max_{x\geq 0}\frac{\omega(x)}{x}<\frac{n(1+\beta_1)}{\beta_1}$, then we have $\lim \limits_{t\to \infty}k_t=0$ for any $k_0>0$.
	
	\item \label{cesll2}
	If $\max_{x\geq 0}\frac{\omega(x)}{x}=\frac{n(1+\beta_1)}{\beta_1}$, then $g_{\beta_1}(x_0)=x_0$ and
\begin{align*}   
 \lim_{t\to \infty}k_t&=0 \text{ for any } k_0< x_0, 		&\lim_{t\to \infty}k_t &=x_0 \text{ for } k_0\geq x_0.
 \end{align*} 
    
	\item \label{cesll3}
	If $\max_{x\geq 0}\frac{\omega(x)}{x}>\frac{n(1+\beta_1)}{\beta_1}$, then there exists $x_1(\gamma), x_2(\gamma)$ such that $0<x_1(\gamma)<x_0<x_2(\gamma)$, $x_i(\gamma)=g_{\beta_1}(x_i(\gamma))$ for $i=1,2$. Moreover, 
 \begin{align*}	&\lim_{t\to \infty}k_t=0 \text{ } \forall k_0<   x_1(\gamma), 		&\lim_{t\to \infty}k_t&=x_1(\gamma) \text{ if } k_0=   x_1(\gamma), 		&\lim_{t\to \infty}k_t&=x_2(\gamma) \text{  } \forall k_0 > x_1(\gamma). 
 \end{align*}

\item 
{\bf Comparative statics}. $x_1(\gamma)$ in point \ref{cesll3} is decreasing in $\beta$ and $\gamma$ but increasing in the capital intensity $a$. However, as Lemma \ref{CesFull}'s point  \ref{ComparativeStatics}, $x_1(\gamma)$ can be increasing or decreasing in $\rho$ (and by consequence, the elasticity of substitution $\frac{1}{1-\rho}$).
\end{enumerate}

 \end{proposition}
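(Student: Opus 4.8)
The plan is to obtain Proposition \ref{cesll} as an essentially immediate corollary of two earlier results, doing just a little extra work for the comparative statics at the end. Since $\rho<0$ and $f'(0)=Aa^{1/\rho}<1/\beta_1$, Corollary \ref{basic1}(2) applies and gives $nk_{t+1}=s_{\beta_1}(\omega(k_t),f'(k_{t+1}))$ for all $t$; moreover under logarithmic utility $cu'(c)\equiv 1$ is (weakly) increasing, so Lemma \ref{gbeta} applies with $\beta$ replaced by $\beta_1$ and yields the explicit one-dimensional dynamics $k_{t+1}=g_{\beta_1}(k_t)$, where $g_{\beta_1}$ is the function in \eqref{func_g}. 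Thus the whole dynamics is governed by iterating the scalar increasing map $g_{\beta_1}$, and the three cases \ref{cesll1}--\ref{cesll3} are exactly the three cases of Lemma \ref{CesFull}'s point \ref{ces-ex2-case2} with the substitution $\beta\mapsto\beta_1$. I would simply invoke Lemma \ref{CesFull} verbatim: the identity $\max_{x\ge 0}\frac{\omega(x)}{x}=-A\rho a^{1/\rho}(1-\rho)^{1/\rho-1}$, attained at $x_0=\big(\tfrac{1-a}{-a\rho}\big)^{1/\rho}$, is stated there and does not depend on $\beta$, so it carries over unchanged; and the trichotomy comparing this maximum with $\frac{n(1+\beta_1)}{\beta_1}$ (rather than $\frac{n(1+\beta)}{\beta}$) gives cases \ref{cesll1}, \ref{cesll2}, \ref{cesll3} together with their global convergence statements and the existence of fixed points $x_1(\gamma)<x_0<x_2(\gamma)$ in case \ref{cesll3}.

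For the comparative statics in the last bullet, I would argue as follows. The smaller fixed point $x_1(\gamma)$ solves $g_{\beta_1}(x)=x$, i.e. $\frac{\beta_1}{1+\beta_1}\,\frac{\omega(x)}{x}=n$, with $x<x_0$ on the increasing branch of $x\mapsto\frac{\omega(x)}{x}$. Monotonicity in the "effective discount factor": since $\beta\mapsto\frac{\beta}{1+\beta}$ is strictly increasing, a larger $\beta_1$ raises the left-hand side, forcing $x_1$ down along the increasing branch of $\frac{\omega(x)}{x}$; hence $x_1$ is decreasing in $\beta_1$. Because $\beta_1=\beta+\gamma$ is increasing in both $\beta$ and $\gamma$, this gives that $x_1(\gamma)$ is decreasing in $\beta$ and in $\gamma$. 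This is precisely the mechanism already recorded in Lemma \ref{CesFull}'s point \ref{ComparativeStatics} (where $x_1$ is decreasing in $\beta$), applied with $\beta$ replaced by $\beta_1$. For the dependence on $a$ and $\rho$, the fixed-point equation $\frac{\beta_1}{1+\beta_1}\,\frac{\omega(x)}{x}=n$ involves $a,\rho$ only through $\frac{\omega(x)}{x}=A(1-a)(ax^{\rho}+1-a)^{1/\rho-1}x^{-1}$, exactly as in the no-wariness case, so the sign computations for $\partial x_1/\partial a$ and $\partial x_1/\partial\rho$ are verbatim the same as in Lemma \ref{CesFull} and yield: $x_1(\gamma)$ increasing in $a$, and $x_1(\gamma)$ increasing or decreasing in $\rho$ according to the same criterion $x_1^{\rho}\lessgtr y_s$ (with $y_s$ the same constant, independent of $A$ and of $\beta_1$, defined via $B(y)=0$).

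Concretely, the write-up would be two short paragraphs: the first reducing to $k_{t+1}=g_{\beta_1}(k_t)$ via Corollary \ref{basic1}(2) and Lemma \ref{gbeta}, then citing Lemma \ref{CesFull} with $\beta\mapsto\beta_1$ to get cases \ref{cesll1}--\ref{cesll3}; the second handling the comparative statics, where the only genuinely new observation is that $\gamma$ enters monotonically through $\beta_1=\beta+\gamma$ and that $\frac{\beta_1}{1+\beta_1}\frac{\omega(x)}{x}$ is increasing in $\beta_1$, which combined with $x_1<x_0$ (the increasing branch) forces $x_1(\gamma)$ down; the $a$- and $\rho$-dependence is inherited unchanged from Lemma \ref{CesFull} because $\beta_1$ does not appear in the $(a,\rho)$-dependent factor of the fixed-point equation.

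I expect the main obstacle to be purely expository rather than mathematical: one must be careful that the hypothesis $f'(0)=Aa^{1/\rho}<1/\beta_1$ genuinely confines the return $f'(k)$ to the regime-1 region for all $k$ (this is what Corollary \ref{basic1}(2) delivers, using that $f'$ is decreasing for $\rho<0$ so $f'(k)\le f'(0)<1/\beta_1$), and that the comparative-statics claims are stated at fixed $\gamma$ — i.e. $x_1(\gamma)$ is a well-defined function of the remaining parameters precisely when case \ref{cesll3} holds, which itself can fail as $\gamma$ grows (since increasing $\gamma$ increases $\frac{n(1+\beta_1)}{\beta_1}$ and can push the economy from case \ref{cesll3} into \ref{cesll2} or \ref{cesll1}). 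One should note this threshold phenomenon but it does not obstruct the monotonicity statement on the region where $x_1(\gamma)$ exists.
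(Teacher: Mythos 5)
Your proposal is correct and follows essentially the paper's own route: the paper gives no separate proof of this proposition but obtains it, exactly as you do, from the regime-1 reduction (Corollary \ref{basic1} together with log utility, so that $k_{t+1}=g_{\beta_1}(k_t)$ with $g_{\beta_1}$ as in (\ref{func_g})) and then Lemma \ref{CesFull} applied with $\beta$ replaced by $\beta_1$, the monotonicity in $\gamma$ coming through $\beta_1=\beta+\gamma$ and the $(a,\rho)$-comparative statics being unchanged since $\beta_1$ does not enter the $(a,\rho)$-dependent factor of the fixed-point equation. The only slip is in your closing caveat, which has the direction reversed: $\frac{n(1+\beta_1)}{\beta_1}=n\bigl(1+\frac{1}{\beta_1}\bigr)$ is \emph{decreasing} in $\gamma$, so raising $\gamma$ cannot push case \ref{cesll3} into cases \ref{cesll1}--\ref{cesll2}; what can fail for large $\gamma$ is instead the standing low-wariness hypothesis $f'(0)<\frac{1}{\beta_1}$.
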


 \paragraph{Comparative statics (under low productivity and low warriness).}

In the case (\ref{cesll1}), the productivity is very low, the economy collapses. 
Since the case \ref{cesll2} is not generic, let us focus on the case \ref{cesll3} which is the most interesting case in Proposition \ref{cesll}.  Here, $x_1$ is the maximum poverty trap, and the set of poverty traps is $[0,x_1]$.

By Lemma \ref{CesFull}'s point  \ref{ComparativeStatics}, we see that $x_1$ decreases in $\beta_1$ and so in the wariness level $\gamma$. Consequently, both Propositions \ref{poverty2_1} and  \ref{cesll} show that the higher the level of wariness, the smaller the set of poverty traps. It means that wariness has a positive effect on the dynamics of the economy in the case of low productivity and low wariness.



Next, we focus on the case of high productivity and low wariness.
\begin{proposition}[High productivity and low wariness]\label{poverty2}Let Assumptions \ref{HU}, \ref{assum2}, \ref{assum-poverty} be satisfied. Assume also that $cu'(c)$ is strictly increasing in $c$.

If $f'(\infty)>\frac{1}{\beta_2}$, then  the equilibrium capital path satisfies $ nk_{t+1} = s_{\beta_2}(\omega(k_t), f'(k_{t+1}))$ $\forall t$.  By consequence, the equilibrium capital path $k_t$ is unique, monotonic, and converges.
 
Moreover, the poverty trap $x_{\beta_2}$ is increasing in the wariness level $\gamma$ and $x_{\beta_2}>x_{\beta}$. 

\end{proposition}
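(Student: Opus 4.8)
The plan is to mirror the proof of Proposition \ref{poverty2_1}, exploiting the symmetry between the roles of $\beta_1$ and $\beta_2$. First I would observe that the hypothesis $f'(\infty)>\frac{1}{\beta_2}$ means $f'(k)>\frac{1}{\beta_2}$ for all $k>0$, so by Corollary \ref{basic1}(3) the equilibrium dynamics is governed by the single equation $nk_{t+1}=s_{\beta_2}(\omega(k_t),f'(k_{t+1}))$ for all $t$. This is exactly the setup of Proposition \ref{comparativebeta} with the discount factor $\beta$ replaced by $\beta_2=\frac{\beta}{1+\gamma}$. Since $cu'(c)$ is strictly increasing and Assumption \ref{assum-poverty} holds, Lemma \ref{gbeta} gives $k_{t+1}=g_{\beta_2}(k_t)$ with $g_{\beta_2}$ continuously differentiable and strictly increasing, so the capital path is unique, monotonic, and convergent; moreover Proposition \ref{comparativebeta} (applied with $\beta_2$ in place of $\beta$) yields a poverty trap $x_{\beta_2}>0$ with $x_{\beta_2}=g_{\beta_2}(x_{\beta_2})$, $\lim_{t\to\infty}k_t=0$ for all $k_0\in(0,x_{\beta_2})$, and $x_{\beta_2}$ not stable.

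Next I would establish the comparative statics. The key input is Lemma \ref{poverty-function-g} (equivalently the conclusion of Proposition \ref{comparativebeta}), which states that $x_\beta$ is \emph{decreasing} in the discount factor $\beta$. The parameter $\gamma\mapsto\beta_2=\frac{\beta}{1+\gamma}$ is strictly \emph{decreasing} in $\gamma$. Composing these two monotonicities: as $\gamma$ increases, $\beta_2$ decreases, hence $x_{\beta_2}$ increases. Thus the poverty trap $x_{\beta_2}$ is increasing in the wariness level $\gamma$. Finally, comparing with the no-wariness benchmark, since $\gamma\ge 0$ implies $\beta_2=\frac{\beta}{1+\gamma}\le\beta$, and $x_\beta$ is decreasing in its discount-factor argument, we get $x_{\beta_2}\ge x_\beta$, with strict inequality $x_{\beta_2}>x_\beta$ whenever $\gamma>0$ (so that $\beta_2<\beta$ strictly and the monotonicity in Lemma \ref{poverty-function-g} is strict).

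One technical point worth checking is that Assumption \ref{assum-poverty} is invoked for the relevant discount factor: the statement of the assumption quantifies over all $\beta>0$, so it applies to $\beta_2$ directly, and likewise condition (1) of that assumption ($g_{\beta_1}>g_{\beta_2}$ for $\beta_1>\beta_2$) is precisely what delivers strict monotonicity of $x_\beta$ in $\beta$. I would also note that the argument is formally identical to Proposition \ref{poverty2_1} with the inequalities reversed: there $\gamma$ enters through $\beta_1=\beta+\gamma$ which is increasing in $\gamma$, giving $x_{\beta_1}$ decreasing in $\gamma$ and $x_{\beta_1}<x_\beta$; here $\gamma$ enters through $\beta_2=\frac{\beta}{1+\gamma}$ which is decreasing in $\gamma$, reversing both conclusions. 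I do not anticipate a genuine obstacle; the only care needed is bookkeeping about strict versus weak inequalities and confirming that the hypotheses of Proposition \ref{comparativebeta} transfer verbatim when $\beta$ is instantiated as $\beta_2$.
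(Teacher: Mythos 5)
Your proposal is correct and follows essentially the same route as the paper: Corollary \ref{basic1} to reduce the dynamics to $nk_{t+1}=s_{\beta_2}(\omega(k_t),f'(k_{t+1}))$, Lemma \ref{gbeta} and Proposition \ref{comparativebeta} applied with $\beta_2$ in place of $\beta$, and then the monotonicity of $x_\beta$ in the discount factor composed with the fact that $\beta_2=\frac{\beta}{1+\gamma}$ is decreasing in $\gamma$. Your extra remarks on strictness and on Assumption \ref{assum-poverty} applying to $\beta_2$ are just bookkeeping the paper leaves implicit.
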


\begin{proof}See Appendix \ref{trap-wariness-proof}.\end{proof}

We complement Proposition \ref{poverty2} by focusing on a special case.

 \begin{proposition}[High productivity and low warriness]\label{ces_l2}
Assume that $u(c)=ln(c)$ and $f(k)=A(ak^{\rho}+1-a)^{\frac{1}{\rho}}+Bk$, where $1-B\in [0,1]$ is the depreciation rate. We always have $\omega(k)\equiv f(k)-kf'(k)=A(1-a)(ak^{\rho}+1-a)^{\frac{1}{\rho}-1}$ as in the CES case  (Lemma \ref{CesFull}).

Assume that $\rho<0$ and $\beta B>1+\gamma$ (i.e., $f'(\infty)>1/\beta_2$).

The dynamics of  capital path is given by $k_{t+1}=g_{\beta_2}(k_t)$. 

\begin{enumerate}
	\item 
	If $\max_{x\geq 0}\frac{\omega(x)}{x}<\frac{n(1+\beta_2)}{\beta_2}$, then we have $\lim \limits_{t\to \infty}k_t=0$ for any $k_0>0$.
	
	\item 
	If $\max_{x\geq 0}\frac{\omega(x)}{x}=\frac{n(1+\beta_2)}{\beta_2}$, then $g_{\beta_1}(x_0)=x_0$ and 
\begin{align*}   
 \lim_{t\to \infty}k_t&=0 \text{ for any } k_0<   x_0, 		&\lim_{t\to \infty}k_t &=x_0 \text{ for } k_0\geq   x_0.
 \end{align*} 
    
	\item 
	If $\max_{x\geq 0}\frac{\omega(x)}{x}>\frac{n(1+\beta_2)}{\beta_2}$, then there exists $x_1=x_1(\beta_2), x_2=x_2(\beta_2)$ such that $0<x_1<x_0<x_2$, $x_i=g_{\beta_2}(x_i)$ for $i=1,2$. Moreover,
 \begin{align*}	&\lim_{t\to \infty}k_t=0 \text{ } \forall k_0< x_1, 		&\lim_{t\to \infty}k_t&=x_1 \text{ if } k_0= x_1, 		&\lim_{t\to \infty}k_t&=x_2 \text{ } \forall k_0 > x_1. 
 \end{align*}
\end{enumerate}
Note that $x_1(\beta_2)$ is the maximum poverty trap. Moreover, $x_1(\beta_2)$ is increasing in the wariness level $\gamma$.

 \end{proposition}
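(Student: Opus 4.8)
The plan is to show that, under the stated hypotheses, the piecewise system \eqref{CES-dynamics} collapses to a single autonomous one–dimensional map to which the analysis of Lemma \ref{CesFull} applies almost verbatim, and then to read off the comparative static in $\gamma$ from the dependence of the lower fixed point on $\beta_2$.

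\textbf{Step 1: reduce to $k_{t+1}=g_{\beta_2}(k_t)$.} First I would check the standing hypotheses: $u=\ln$ satisfies Assumption \ref{HU}, and for $\rho<0$ the function $f(k)=A(ak^\rho+1-a)^{1/\rho}+Bk$ is $C^2$, strictly increasing, strictly concave and positive (the CES summand is strictly concave for $\rho<1$, $\rho\neq 0$, and $Bk$ is affine), so Assumption \ref{assum2} holds. Differentiating, $f'(k)=Aak^{\rho-1}(ak^\rho+1-a)^{1/\rho-1}+B$ is strictly decreasing with $\lim_{k\to\infty}f'(k)=B$; hence the hypothesis $\beta B>1+\gamma$ gives $f'(k)>B>\frac{1+\gamma}{\beta}=\frac{1}{\beta_2}$ for every $k>0$, so $(k_t,k_{t+1})$ always lies in regime 2 (Definition \ref{regimedefinition}) and, by \eqref{kt+1} (equivalently Corollary \ref{basic1}(3)), the equilibrium obeys $nk_{t+1}=s_{\beta_2}(\omega(k_t),f'(k_{t+1}))$ for all $t$. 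A short computation gives $\omega(k)=f(k)-kf'(k)=A(1-a)(ak^\rho+1-a)^{1/\rho-1}$, the same function as in the pure CES case of Lemma \ref{CesFull}. Finally, for $u=\ln$ the first–order condition in Definition \ref{sbeta}(2) reads $\frac{1}{w-s}=\frac{\beta_2}{s}$, so $s_{\beta_2}(w,R)=\frac{\beta_2}{1+\beta_2}w$ independently of $R$; substituting yields $nk_{t+1}=\frac{\beta_2}{1+\beta_2}\omega(k_t)$, i.e. $k_{t+1}=g_{\beta_2}(k_t)$ with $g_b$ as in \eqref{func_g}.

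\textbf{Step 2: analyse the autonomous map.} The map $k\mapsto g_{\beta_2}(k)$ is exactly the map $g_\beta$ of Lemma \ref{CesFull} with $\beta$ replaced by $\beta_2$, and $\omega$ is unchanged, so the properties proved there transfer directly. In particular $g_{\beta_2}$ is continuous, strictly increasing, $g_{\beta_2}(0^+)=0$; for $\rho<0$ the ratio $\omega(x)/x$ tends to $0$ at $0$ and at $\infty$, is strictly increasing on $(0,x_0)$ and strictly decreasing on $(x_0,\infty)$, and attains the maximum $-A\rho a^{1/\rho}(1-\rho)^{1/\rho-1}$ at $x_0=\big((1-a)/(-a\rho)\big)^{1/\rho}$; and $g_{\beta_2}(x)\lesseqgtr x\iff \omega(x)/x\lesseqgtr n(1+\beta_2)/\beta_2$. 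Comparing $\max_x\omega(x)/x$ with $n(1+\beta_2)/\beta_2$ gives the three announced cases: if strictly below, $g_{\beta_2}$ lies under the diagonal and monotonicity forces $k_t\downarrow 0$; if equal, $x_0$ is the unique fixed point and $k_t\to 0$ for $k_0<x_0$, $k_t\uparrow x_0$ for $k_0\ge x_0$; if strictly above, $\omega(x)/x=n(1+\beta_2)/\beta_2$ has exactly two roots $0<x_1<x_0<x_2$ (one on each monotone branch), and the usual monotone–iteration argument — using that $g_{\beta_2}$ is increasing with $g_{\beta_2}(x)>x$ on $(x_1,x_2)$ and $g_{\beta_2}(x)<x$ outside $[x_1,x_2]$ — yields $k_t\downarrow 0$ for $k_0<x_1$, $k_t\equiv x_1$ for $k_0=x_1$, and $k_t\to x_2$ monotonically for all $k_0>x_1$. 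This establishes items 1--3.

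\textbf{Step 3: poverty trap and monotonicity in $\gamma$.} In case 3, for any $k_0<x_1$ the path converges to $0<x_1$, so $x_1$ is a poverty trap; and since a path started just below $x_2$ increases to $x_2$, no $\bar k\in(x_1,x_2)$ can be a trap, so $x_1$ is the largest poverty trap below the nonzero steady state $x_2$, matching the claim. For the dependence on wariness, $x_1=x_1(\beta_2)$ is the root of $\omega(x)/x=n\big(1+\tfrac1{\beta_2}\big)$ lying on the increasing branch of $x\mapsto \omega(x)/x$; the right–hand side is strictly decreasing in $\beta_2$, hence $x_1(\beta_2)$ is strictly decreasing in $\beta_2$ (this is the $\beta$–monotonicity of Lemma \ref{CesFull}, point \ref{ComparativeStatics}, with $\beta$ replaced by $\beta_2$; compare also Proposition \ref{poverty2}). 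Since $\beta_2=\beta/(1+\gamma)$ is strictly decreasing in $\gamma$, the composition $x_1(\beta_2(\gamma))$ is strictly increasing in $\gamma$, as claimed. The only genuinely non–mechanical ingredient is the shape of $\omega(x)/x$ (unique interior critical point, which is a maximum, and strict monotonicity on each side of $x_0$) together with the bookkeeping of monotone convergence in case 3; but both are already contained in the proof of Lemma \ref{CesFull} in Appendix \ref{specialcase}, since neither $\omega$ nor the form of $g$ changes here, so the real content is Step 1 (the reduction to the autonomous map) and the sign of $d\beta_2/d\gamma$ in Step 3.
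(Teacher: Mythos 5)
Your proposal is correct and follows essentially the same route as the paper: reduce to the autonomous map $k_{t+1}=g_{\beta_2}(k_t)$ via the condition $\beta B>1+\gamma$ (so $f'>1/\beta_2$ and Corollary \ref{basic1} applies) together with the log-utility saving rule $s_{\beta_2}(w,R)=\frac{\beta_2}{1+\beta_2}w$, then apply Lemma \ref{CesFull} with $\beta$ replaced by $\beta_2$ and conclude the $\gamma$-monotonicity from $x_1$ decreasing in $\beta_2$ and $\beta_2=\beta/(1+\gamma)$ decreasing in $\gamma$. The extra detail you give in Step 2 simply re-derives what the paper cites from Lemma \ref{CesFull}, so there is no substantive difference.
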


\begin{proof}See Appendix \ref{trap-wariness-proof}.\end{proof}

\paragraph{Comparative statics (under high productivity and low warriness).} Since $x_1(\beta_2)$ is increasing in the wariness level $\gamma$, Proposition \ref{poverty2} and Corollary \ref{ces_l2} show that wariness has a negative effect on the dynamics of the economy in the case of high productivity and low wariness.

\cite{pp24}'s Proposition 3 shows that under low wariness and low capital return (respectively, high capital return), the steady-state capital stock is increasing (respectively, decreasing) in the wariness level.  By the way, our results on the effect of wariness on poverty traps are consistent with those on the effects of wariness on economic growth in  \cite{pp24}. Our contribution is to explore the role of wariness and factor substitutability on poverty traps.

\subsection{Poverty trap with high wariness}
\label{highwariness}
Assume that the wariness level is high in the sense that $\frac{1}{\beta_1} \equiv \frac{1}{\beta+\gamma} \leq f'(\infty )< f'(0)\leq \frac{1}{\beta_2}\equiv \frac{1+\gamma}{\beta}$. In this case, the dynamics of capital becomes \begin{align}
nk_{t+1}=\frac{\omega(k_t)}{1+f'(k_{t+1})}, \forall t\geq 0, \text{ with $k_0>0$ is given.}
\end{align}
In Section \ref{MultipleEquilibrium}, we have shown that this may lead to multiple equilibria. The following result shows that a poverty trap may arise.
\begin{proposition}
    \label{poverty-strongcondition} 
Let Assumptions \ref{HU} and \ref{assum2} be satisfied. Assume that $\frac{1}{\beta_1} \equiv \frac{1}{\beta+\gamma} \leq f'(\infty )< f'(0)\leq \frac{1}{\beta_2}\equiv \frac{1+\gamma}{\beta}$.
	If $\lim_{k\to 0}\frac{\omega(k)}{k(1+f'(0))}<n$,	then there exists a poverty trap $\bar k$.
\end{proposition}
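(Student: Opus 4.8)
The plan is to work directly with the high-wariness dynamics $nk_{t+1}=\dfrac{\omega(k_t)}{1+f'(k_{t+1})}$ and show that for some threshold $\bar k>0$ the map sends $(0,\bar k)$ into itself, which forces $k_t<\bar k$ for all large $t$ whenever $k_0<\bar k$. Since $f'$ is bounded between $\frac{1}{\beta_1}$ and $\frac{1}{\beta_2}$ by the standing hypothesis, the implicit equation for $k_{t+1}$ is genuinely well-posed: the right-hand side $\frac{\omega(k_t)}{1+f'(k_{t+1})}$ is decreasing in $k_{t+1}$ (because $f'$ is decreasing, so $1+f'(k_{t+1})$ is decreasing, so the reciprocal is increasing — I should double-check this sign, since that determines whether the implicit map is a genuine contraction-type relation or needs a fixed-point argument in $k_{t+1}$). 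First I would record that, given $k_t$, a solution $k_{t+1}$ exists and is unique; this is a one-line consequence of the intermediate value theorem plus strict monotonicity, exactly the content already invoked in Proposition \ref{wariness-multipleequilibria}. Call the resulting (single-valued) selection $k_{t+1}=\Phi(k_t)$ — though I should be careful, because Proposition \ref{wariness-multipleequilibria} and Example \ref{example1} warn that $\Phi$ need not be single-valued; in that case I would instead argue that \emph{every} equilibrium sequence with $k_0<\bar k$ stays below $\bar k$.

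The key estimate is the hypothesis $\lim_{k\to 0}\frac{\omega(k)}{k(1+f'(0))}<n$. Write $L$ for this limit and pick $\epsilon>0$ with $L+\epsilon<n$. By definition of the limit there is $\bar k>0$ such that $\frac{\omega(k)}{k(1+f'(0))}<L+\epsilon$ for all $k\in(0,\bar k)$, i.e. $\omega(k)<(L+\epsilon)\,k\,(1+f'(0))$ on that interval. Now for any equilibrium step with $k_t\in(0,\bar k)$, using $f'(k_{t+1})\le f'(0)$ (hence $1+f'(k_{t+1})\ge ?$ — here I need the inequality pointing the right way; since $f'$ is \emph{decreasing}, $f'(k_{t+1})\le f'(0)$ gives $1+f'(k_{t+1})\le 1+f'(0)$, which is the wrong direction, so I must instead use the \emph{lower} bound $f'(k_{t+1})\ge f'(\infty)$, or rather bound $\frac{1}{1+f'(k_{t+1})}\le \frac{1}{1+f'(\infty)}$ and also relate $f'(\infty)$ to $f'(0)$). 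The cleanest route: since $f'$ is decreasing, $1+f'(k_{t+1})\ge 1+f'(\infty)$ is not quite what the hypothesis uses either; the hypothesis is stated with $f'(0)$, so presumably the intended bound is the crude $\omega(k_t)\le \omega(k_t)$ combined with monotonicity of $\omega$ near $0$. I would therefore set up the inequality as
\[
nk_{t+1}=\frac{\omega(k_t)}{1+f'(k_{t+1})}\le \frac{\omega(k_t)}{1+f'(\infty)}
\]
and then, because $\omega$ is increasing near $0$ (it is $f(k)-kf'(k)$, whose derivative is $-kf''(k)>0$), bound $\omega(k_t)$ by a multiple of $k_t$; combining with the choice of $\bar k$ gives $k_{t+1}<\frac{n-\epsilon}{n}\,\bar k<\bar k$ when $k_t<\bar k$, which also shows $k_t\to 0$ along the orbit.

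Then I would close the argument exactly as in Proposition \ref{comparativebeta}: invariance of $(0,\bar k)$ plus the strict contraction-toward-zero estimate $k_{t+1}\le \theta k_t$ with $\theta<1$ (valid near $0$) gives both that $\bar k$ is a poverty trap in the sense of Definition \ref{def2} and, in fact, that $k_t\to 0$ for every $k_0<\bar k$. The main obstacle, and the step I expect to need the most care, is reconciling the $f'(0)$ appearing in the statement's hypothesis with the $f'(k_{t+1})$ (bounded below by $f'(\infty)$) appearing in the dynamics: I must verify that the stated condition $\lim_{k\to 0}\frac{\omega(k)}{k(1+f'(0))}<n$ is genuinely the one that makes the invariance estimate go through, or else identify the correct reading (perhaps the authors intend $f'(k_{t+1})\to f'(0)$ as $k_t,k_{t+1}\to 0$, which would make $f'(0)$ the right constant in the limiting regime). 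A secondary subtlety is the possible multi-valuedness of the implicit map, handled by quantifying over all equilibrium sequences rather than iterating a function.
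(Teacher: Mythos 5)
You have correctly isolated the crux of the argument, but you leave it unresolved, and the fallback estimate you actually write down does not prove the proposition. Your usable inequality is $nk_{t+1}=\frac{\omega(k_t)}{1+f'(k_{t+1})}\le\frac{\omega(k_t)}{1+f'(\infty)}$, while the hypothesis only gives $\omega(k)<(L+\epsilon)\,k\,(1+f'(0))$ near $0$ with $L+\epsilon<n$; combining the two yields $k_{t+1}\le\frac{(L+\epsilon)(1+f'(0))}{n(1+f'(\infty))}\,k_t$, and since $1+f'(0)>1+f'(\infty)$ this factor need not be below $1$. Hence neither the invariance of $(0,\bar k)$ nor a contraction $k_{t+1}\le\theta k_t$ follows from your steps, and the concluding claim ``$k_{t+1}<\frac{n-\epsilon}{n}\bar k<\bar k$, which also shows $k_t\to0$'' is unsupported --- precisely at the step you yourself flag as the one needing the most care. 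So, as written, the proposal has a genuine gap at its key estimate.

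The repair is exactly the ``correct reading'' you conjecture in passing. Under the standing high-wariness assumption, $f'(0)\le\frac{1+\gamma}{\beta}<\infty$, and since $f'$ is monotone, $f'(k)\to f'(0)$ as $k\to 0^+$; therefore $\lim_{k\to 0}\frac{\omega(k)}{k(1+f'(k))}=\lim_{k\to 0}\frac{\omega(k)}{k(1+f'(0))}<n$, i.e.\ the hypothesis is really a statement about $\omega(k)/h(k)$ with $h(k)\equiv k(1+f'(k))$. Since every equilibrium sequence satisfies exactly $n\,h(k_{t+1})=\omega(k_t)$ (your plan of quantifying over all equilibria, rather than iterating a possibly multivalued map, is the right way to handle multiplicity), there exist $\theta\in(0,1)$ and $\kappa>0$ with $h(k_{t+1})<\theta\,h(k_t)$ whenever $k_t\in(0,\kappa)$. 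Because $h(k)\ge k$, choosing $\bar k\le\min\bigl\{\kappa,\;\kappa/\bigl(\theta(1+f'(0))\bigr)\bigr\}$ keeps the orbit in $(0,\kappa)$: if $k_0<\bar k$ then $h(k_1)<\theta(1+f'(0))\bar k\le\kappa$, so $k_1<\kappa$, and inductively $h(k_{t+1})<\theta h(k_t)<\kappa$ for all $t$; hence $h(k_t)\to 0$ geometrically, $k_t\le h(k_t)\to 0$, and in particular $k_t<\bar k$ for all large $t$, so $\bar k$ is a poverty trap in the sense of Definition \ref{def2}. For your reassurance: the paper's own short proof replaces $f'(k_{t+1})$ by $f'(0)$ in the denominator at exactly the point that worried you; the statement is nevertheless correct for the reason above ($f'(0)$ is the limiting value of $f'$ along orbits shrinking to $0$, and it is finite here), but making this rigorous requires the $h$-based bookkeeping just described, not the $f'(\infty)$ bound you used.
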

\begin{proof}See Appendix \ref{highwarinessProofs}.\end{proof}

We are now interested in studying comparative statics. For this purpose, 
we again focus on the case of logarithmic utility and CES production functions as in Corollary \ref{corollary-ces}. Let us introduce the function $H:\rr_+\to \rr_+$ by \textcolor{blue}{
\begin{align}
H(k)= \frac{\omega(k)}{h(k)}= \frac{\omega(k)}{k(1+f'(k))} =\frac{A(1-a)}{k(ak^\rho +1-a) ^{1-\frac{1}{\rho}} + aA k^{\rho}}.\end{align}}
\begin{proposition}\label{lem_r3}
	Consider a CES production function and logarithmic utility as in Corollary \ref{corollary-ces}. Suppose that conditions in Lemma \ref{h_increase}'s point \ref{hincreasing} hold (so that $h(k)\equiv k(1+f'(k))$ is increasing).
Consider the case $\gamma = \infty$. We have $0=\frac{1}{\beta_1}<f'(\infty)<f'(0)<\frac{1}{\beta_2}=\infty.$ and then the dynamics of capital is given by \begin{align*}
		nk_{t+1} =	\frac{A(1-a)(ak_t^{\rho}+1-a)^{\frac{1}{\rho}-1}}{1+f'(k_{t+1}) } , ~~~~ \forall t.
	\end{align*}
	\begin{enumerate}
		\item In the case that $0<\rho<1$, the capital path converges to a steady state which is the unique solution of the equation $\frac{\omega(k)}{h(k)}=n$.
		\item In the case that $\rho<0$, the following statements hold.
		\begin{enumerate}
			\item If $\max_{k\geq 0}\frac{\omega(k)}{h(k)}<n$ then $\displaystyle \lim\limits_{t\rightarrow \infty }k_t =0$ for any $k_0>0$.
			\item If $\max_{k\geq 0}\frac{\omega(k)}{h(k)}=n$, then there exists a unique $\bar{k}>0$ such that $\frac{\omega(\bar{k})}{h(\bar{k})}=n$. Furthermore, $\displaystyle \lim\limits_{t\rightarrow \infty }k_t =0$ for all $k_0<\bar{k}$ and $\displaystyle \lim\limits_{t\rightarrow \infty }k_t =x_0$ for all $k_0\geq \bar{k}$. By consequence, $\bar{k}$ is the maximum poverty trap.
			\item \label{HighWarinessCase3} If $\max_{k\geq 0}\frac{\omega(k)}{h(k)}>n$, then there exist two fixed points $\bar k_1, \bar k_2$ (i.e., $\frac{\omega(\bar k_i)}{h(\bar k_i)}=n$) and $0<\bar k_1<x^*<\bar k_2$, where $x^*$ is the unique solution to the following equation
$$1-a+\rho ax^{\rho}+\rho Aax^{\rho-1} (ax^{\rho}+1-a)^{\frac{1}{\rho}}=0.$$
We have (1) $\lim\limits_{t\rightarrow \infty }k_t =0$ for all $k_0<\bar k_1$, (2) $\lim\limits_{t\rightarrow \infty }k_t =\bar k_2$ for all $k_0>\bar k_1$, (3) $k_t=\bar{k}_1$ $\forall t$ if $k_0=\bar{k}_1$. By consequence, $\bar{k}_1$ is the maximum poverty trap. Moreover,  $\bar{k}_1$ is increasing in the capital intensity $a$ and decreasing in the productivity $A$. 
\end{enumerate}
	\end{enumerate}
\end{proposition}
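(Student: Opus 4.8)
The plan is to analyze the one-dimensional dynamical system $nk_{t+1} = H(k_{t+1})^{-1}$... wait, more precisely the implicit recursion $nk_{t+1}(1+f'(k_{t+1})) = \omega(k_t)$, i.e. $nk_{t+1}=H(k_{t+1})\cdot\frac{\omega(k_t)}{\omega(k_{t+1})}$ is awkward; instead I would first observe that since $h(k)=k(1+f'(k))$ is strictly increasing (by hypothesis, invoking Lemma \ref{h_increase}) and $h(0)=0$, $h(\infty)=\infty$, the map $k_t\mapsto k_{t+1}$ is well-defined and equals $h^{-1}(\omega(k_t)/n)$, a strictly increasing continuous function of $k_t$ because $\omega$ is continuous and positive. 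Call this map $G$. Fixed points of $G$ are exactly the solutions of $\omega(k)=nh(k)$, equivalently $H(k)=\omega(k)/h(k)=n$. So the whole proposition reduces to counting and locating zeros of $H(k)-n$ and then applying the standard monotone-dynamics argument: for a strictly increasing $G$, between consecutive fixed points the sequence is monotone and converges to one endpoint, and $k_t\to 0$ whenever $k_0$ lies below the smallest fixed point (using that $G(k)<k$ there).

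First I would establish the shape of $H$. Using the given closed form $H(k)=\frac{A(1-a)}{k(ak^\rho+1-a)^{1-1/\rho}+aAk^\rho}$, I would compute the sign of the derivative of the denominator $D(k)\equiv k(ak^\rho+1-a)^{1-1/\rho}+aAk^\rho$. For $0<\rho<1$: I claim $D$ is strictly increasing from $0$ to $\infty$, so $H$ is strictly decreasing from $H(0^+)=\frac{A(1-a)}{\lim_{k\to0}aAk^\rho}$; since $\rho>0$ this limit of $aAk^\rho$ is $0$, so $H(0^+)=\infty$, and $H(\infty)=0$. Hence $H(k)=n$ has a unique solution, which is the unique fixed point $k^*$ of $G$; since $G$ is increasing with a single interior fixed point and $G(k)>k$ below it, $G(k)<k$ above it (checked via $H(k)>n\iff \omega(k)>nh(k)\iff G(k)>k$), every path converges to $k^*$. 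That gives part 1.

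For $\rho<0$ the denominator $D(k)$ is not monotone: as $k\to0$, $ak^\rho\to\infty$ so $D(k)\to\infty$ and $H(0^+)=0$; as $k\to\infty$, $ak^\rho\to0$, $(ak^\rho+1-a)^{1-1/\rho}\to(1-a)^{1-1/\rho}$, so $D(k)\sim (1-a)^{1-1/\rho}k\to\infty$ and $H(\infty)=0$. Thus $H$ starts at $0$, ends at $0$, is positive in between, hence attains an interior maximum. I would show $H$ is single-peaked (unimodal) by showing $D$ is single-troughed: $D'(k)=0$ is equivalent to the stated equation $1-a+\rho a k^\rho+\rho Aa k^{\rho-1}(ak^\rho+1-a)^{1/\rho}=0$ after clearing the factor $(ak^\rho+1-a)^{-1/\rho}$ — this is the routine differentiation step — and I would argue this equation has a unique root $x^*$, e.g. by rewriting it as an equality between a strictly decreasing and a strictly increasing function of $k$ (for $\rho<0$), so $D$ decreases then increases, i.e. $H$ increases then decreases, with the peak at $x^*$. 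Then the three subcases are immediate from comparing $\max_k H(k)=H(x^*)$ with $n$: if $H(x^*)<n$ then $G(k)<k$ for all $k>0$ and monotone decrease forces $k_t\to0$; if $H(x^*)=n$ there is a unique tangential fixed point $\bar k=x^*$ with $G(k)<k$ on both sides except the touch, giving collapse below and convergence to $x^*$ from above (here I would note the statement's "$x_0$" should read $\bar k$); if $H(x^*)>n$ there are exactly two fixed points $\bar k_1<x^*<\bar k_2$, and the monotone-dynamics dichotomy gives collapse below $\bar k_1$, the constant orbit at $\bar k_1$, and convergence to $\bar k_2$ from above (since $G(k)>k$ on $(\bar k_1,\bar k_2)$ and $G(k)<k$ beyond $\bar k_2$, using $H>n$ exactly on $(\bar k_1,\bar k_2)$). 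Finally, for the comparative statics on $\bar k_1$: since $\bar k_1$ is the smaller root of $H(k)=n$ and lies on the increasing branch of $H$ where $H$ is strictly increasing in $A$ (clear from the formula, the numerator scales with $A$ and the denominator's $aAk^\rho$ term is lower order near $0$ — more carefully, $\partial H/\partial A>0$ on that branch) and strictly decreasing in $a$ (differentiate $H$ in $a$, or note $H=\frac{1-a}{\,k(ak^\rho+1-a)^{1-1/\rho}/A + ak^\rho}$ and both the shrinking numerator factor $1-a$ and the growing denominator push $H$ down), the implicit function theorem applied to $H(\bar k_1(A,a))=n$ with $H_k(\bar k_1,\cdot)>0$ yields $\partial \bar k_1/\partial A<0$ and $\partial \bar k_1/\partial a>0$.

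The main obstacle is the unimodality of $H$ for $\rho<0$ — equivalently showing the first-order condition $D'(k)=0$ (the displayed equation for $x^*$) has a unique solution and that $D$ genuinely switches from decreasing to increasing there rather than having extra critical points. I expect to handle it by dividing the FOC through by a positive factor to isolate a strictly monotone function on each side and reading off uniqueness; if a clean monotone split is not available I would instead substitute $y=k^\rho\in(0,\infty)$ and show the transformed equation is strictly monotone in $y$. The rest — the closed-form limits of $H$ at $0$ and $\infty$, the equivalence "fixed point of $G$ $\iff$ $H(k)=n$", the monotone-convergence dichotomy for an increasing $G$, and the sign of the $A$- and $a$-derivatives — is routine once the shape of $H$ is pinned down.
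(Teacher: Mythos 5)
Your overall route is the same as the paper's: reduce the recursion to the increasing map $G=h^{-1}(\omega(\cdot)/n)$, identify fixed points with solutions of $H(k)=n$, establish unimodality of $H$ for $\rho<0$ via the sign of the derivative of the denominator $D(k)=k(ak^{\rho}+1-a)^{1-\frac{1}{\rho}}+aAk^{\rho}$, and conclude by the monotone-map dichotomy. The step you flag as the main obstacle does go through exactly as you sketch: the first-order condition can be written as $1-a=(-\rho a)\bigl[x^{\rho}+Ax^{\rho-1}(ax^{\rho}+1-a)^{\frac{1}{\rho}}\bigr]$, and for $\rho<0$ the bracket is strictly decreasing from $+\infty$ to $0$ (write $x^{\rho-1}(ax^{\rho}+1-a)^{\frac{1}{\rho}}=\bigl(ax^{\rho^{2}}+(1-a)x^{\rho^{2}-\rho}\bigr)^{\frac{1}{\rho}}$), which is precisely the paper's argument. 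Your treatment of the comparative statics in $A$ (pointwise monotonicity of $H$ in $A$ plus $H_k(\bar k_1)>0$) is also fine.

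The genuine gap is the comparative statics in $a$. Your justification rests on the claim that, for fixed $k$, the denominator of $H$ grows in $a$ so that $\partial H/\partial a<0$; this is false in general when $\rho<0$. Indeed the base $ak^{\rho}+1-a=1+a(k^{\rho}-1)$ is decreasing in $a$ whenever $k>1$, and the exponent $1-\frac{1}{\rho}$ is positive, so the term $k(ak^{\rho}+1-a)^{1-\frac{1}{\rho}}$ decreases in $a$; for large $k$ this dominates (as $k\to\infty$, $H(k)\sim A(1-a)^{\frac{1}{\rho}}/k$, which is increasing in $a$ since $\frac{1}{\rho}<0$). So the inequality you actually need, $\partial H/\partial a<0$ evaluated at $\bar k_1$, is not ``clear from the formula''; it is the nontrivial content of this comparative-statics claim. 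The paper obtains it by differentiating the fixed-point relation $b(x)\equiv aAx^{\rho}+x(ax^{\rho}+1-a)^{1-\frac{1}{\rho}}=\frac{A}{n}(1-a)$ with respect to $a$, using the relation itself to eliminate $\frac{A}{n}$, and then invoking $b'(\bar k_1)<0$ (i.e., that $\bar k_1$ lies strictly on the increasing branch of $H$, which gives the inequality $A\bar k_1^{\rho-1}(a\bar k_1^{\rho}+1-a)^{\frac{1}{\rho}}>\frac{1-a}{-\rho a}-\bar k_1^{\rho}$) to sign the resulting numerator. Without an argument of this type, your implicit-function step does not close for the parameter $a$, even though the same step is unproblematic for $A$. (A minor additional remark: the strict increasingness of $G$ needs $\omega$ increasing, which follows from concavity of $f$, not merely continuity and positivity of $\omega$ as you state.)
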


The set of poverty traps is $[0,\bar{k}_1)$. 
Again we observe that the productivity $A$ plays an important role of the set of poverty traps: the higher the productivity, the smaller the set of poverty traps, the better chance to avoid a poverty trap.

\begin{proof}See Appendix \ref{highwarinessProofs}.\end{proof}

We illustrate the results in Corollary \ref{lem_r3} by a simulation.

\begin{ex}[$U(c,d)=\min(u(c),u(d))$]{\normalfont 
Let $\gamma=\infty$ and $n = 1.1,	a = 0.3, 	\rho = -0.6$.
\begin{enumerate}
    \item 	
	\begin{figure}[h]
		\centering
		\includegraphics[width=0.45\linewidth]{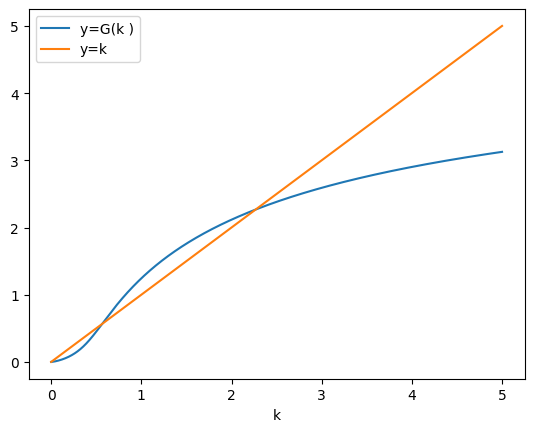}
		\includegraphics[width=0.45\linewidth]{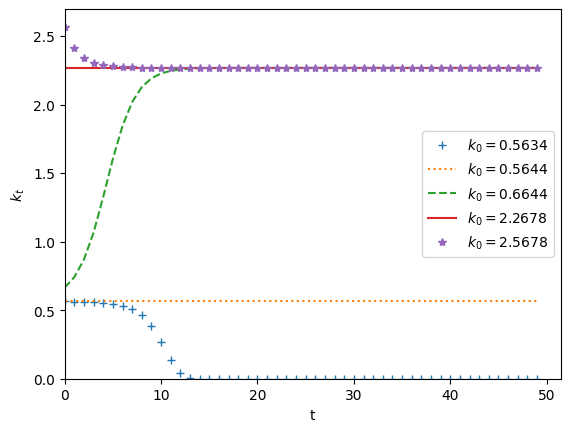}
		\caption{Dynamics of capital with $n = 1.1, 
			A = 3.6,
			a = 0.3,
			\rho = -0.6, \gamma =\infty $. The right-hand side of the figure shows the capital paths with different initial values $k_0$.}
		\label{regime41}
	\end{figure}
    For $A = 3.6$, the function $h(k)$ is increasing on $(0,\infty )$ (so, by Proposition \ref{unique-convergence} and Lemma \ref{h_increase}, there exists a unique equilibrium and it converges). There are two positive steady states $\bar k_1 \approx0.5644$ and $\bar k_2\approx 2.26776$. For any initial capital $k_0<\bar k_1$, the capital path converges to $0$. In contrast, if the initial capital $k_0>\bar k_1$ then the capital path converges to the higher steady state $\bar k_2$ (see Figure \ref{regime41}).

\item For $A = 2.973$, the function $h(k)$ is increasing on $(0,\infty )$. The capital path has only one positive steady state $\bar k \approx 1.06726$. For any initial capital $k_0<\bar k$, the capital path converges to $0$. Conversely, if the initial capital $k_0>\bar k_1$ then the capital path converges to the steady state $\bar k$ (see Figure \ref{regime51}).
	\begin{figure}[h]
		\centering
		\includegraphics[width=0.45\linewidth]{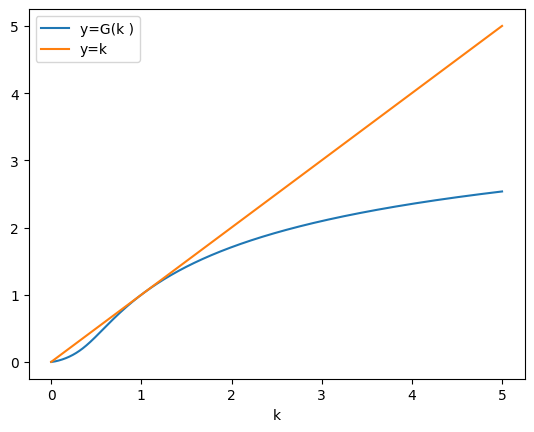}
		\includegraphics[width=0.45\linewidth]{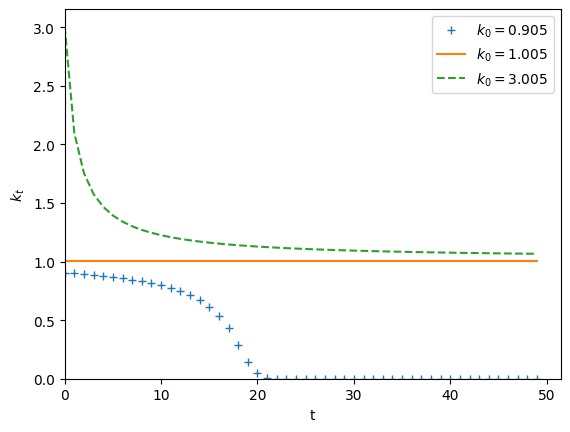}
		\caption{Dynamics of capital with $
			n = 1.1,
			A = 2.973,
			a = 0.3,
			\rho = -0.6, \gamma =\infty $. The right-hand side of the figure shows the capital paths with different initial values $k_0$.}
		\label{regime51}
	\end{figure}

\item	For $A = 2$, the function $h(k)$ is increasing on $(0,\infty )$. There is no positive steady state. The capital path converges to $0$ for any initial capital $k_0$ (see Figure \ref{regime61}).	
	\begin{figure}[h]
		\centering
		\includegraphics[width=0.45\linewidth]{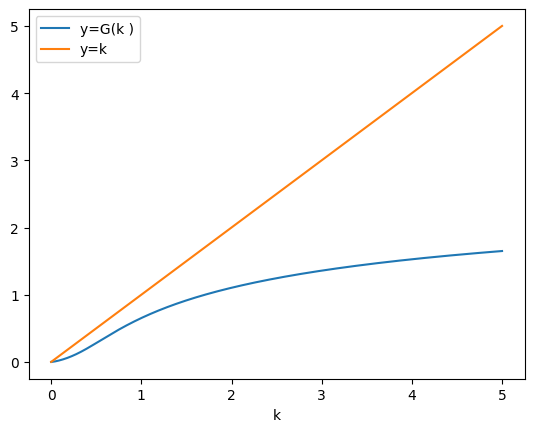}
		\includegraphics[width=0.45\linewidth]{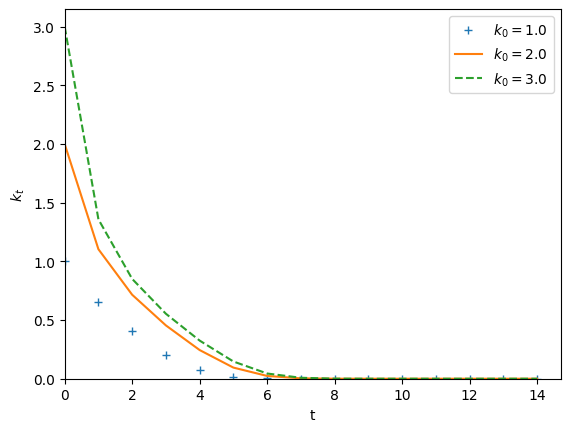}
		\caption{Dynamics of capital with
			$n = 1.1,
			A = 2,
			a = 0.3,
			\rho = -0.6, \gamma =\infty$. The right-hand side of the figure shows the capital paths with different initial values $k_0$.}
		\label{regime61}
	\end{figure}
    \end{enumerate}
    }
\end{ex}

\subsection{Poverty trap with intermediate wariness and productivity}
\label{intermidiatesection}
In this section, we study the case where wariness and productivity have a middle level. It should be noticed that in this case, the capital stock $k_t$ may not be  in the same regime  (Definition \ref{regimedefinition}) across periods. However, we manage to prove the existence of a poverty trap in such a general setting.
\begin{proposition}[poverty trap with intermediate productivity and  wariness]\label{poverty-middle}

Let Assumptions \ref{HU}, \ref{assum2}, \ref{assum-poverty} be satisfied. Assume that $cu'(c)$ is  increasing in $c$.

Assume that  $f'(\infty)<\frac{1}{\beta+\gamma} <\frac{1+ \gamma}{\beta}<f'(0)$.\footnote{For CES production function with $\rho<0$, we have $f'(0)=Aa^{\frac{1}{\rho}}$ and $f'(\infty)=0$.} Let $x_{\beta_1}$ be defined in Assumption \ref{assum-poverty}.

For $k_0>0$ be small enough in the sense that 
\begin{align}
k_0&<{x_{\beta_1}}\\
\label{omega(k0)}\omega(k_0)&<n\big(1+\frac{1}{\gamma+\beta}\big)
\left(f'\right)^{-1}\big(\frac{1+\gamma}{\beta}\big),
\end{align}
 then the equilibrium capital converges to zero.  By consequence, the threshold 
\begin{align}\label{x_poverty_threshold}
 x_{poverty} = \min\left(x_{\beta_1}, \omega^{-1} \big(
n\big(1+\frac{1}{\gamma+\beta}\big)
\big)(f^{\prime})^{-1}\big(\frac{1+\gamma}{\beta}\big) \right)
\end{align} is a poverty trap. Moreover,  $x_{poverty}$ is decreasing in wariness level $\gamma$. 
	
\end{proposition}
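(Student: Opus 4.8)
The plan is to show that under the stated conditions on $k_0$, the capital path enters and remains in regime~1 (i.e., $f'(k_{t+1})<\frac{1}{\gamma+\beta}$), where the dynamics coincide with $nk_{t+1}=s_{\beta_1}(\omega(k_t),f'(k_{t+1}))$, and then invoke the monotone convergence-to-zero mechanism already established for $x_{\beta_1}$ in Proposition~\ref{comparativebeta}. The delicate point is that, a priori, nothing forces $(k_t,k_{t+1})$ to stay in a single regime, so the argument must be structured around two nested facts: (a) if $k_t$ is small then the equilibrium value $k_{t+1}$ solving (\ref{kt+1}) is itself small, and (b) ``small'' is an invariant property.

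\textbf{Step 1 (the first iterate lands in regime 1).} First I would show that the two inequalities on $k_0$ imply $(k_0,k_1)$ is in regime~1. Observe that condition (\ref{omega(k0)}) can be rewritten, using that $\omega$ is increasing and $f'$ is decreasing (Assumption~\ref{assum2}), as follows: if $k_1$ were in regime~2 or~3, then $f'(k_1)\geq \frac{1}{\gamma+\beta}$, hence $k_1\leq (f')^{-1}\big(\frac{1}{\gamma+\beta}\big)$; but in regimes~2 and~3 the saving equals $\frac{\omega(k_0)}{1+f'(k_1)}\leq \frac{\omega(k_0)}{1+\frac{1}{\gamma+\beta}}$, so $nk_1\leq \frac{\omega(k_0)}{1+\frac{1}{\gamma+\beta}}$. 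Actually the cleanest route is to compare with the worst case directly: in any regime, $nk_1=s(\omega(k_0),f'(k_1))\leq \frac{\omega(k_0)}{1+f'(k_1)}$ only fails to hold in regime~1 where saving is larger, so I instead argue by contradiction — if $f'(k_1)\geq\frac{1}{\gamma+\beta}$ then the relevant bound $nk_1 \le \frac{\omega(k_0)}{1+\frac{1}{\gamma+\beta}}$ together with (\ref{omega(k0)}) gives $k_1 < (f')^{-1}\big(\frac{1+\gamma}{\beta}\big)/(1) \cdots$; combined with monotonicity of $f'$ this forces $f'(k_1)>\frac{1+\gamma}{\beta}$, i.e.\ regime~2, but then $nk_1=s_{\beta_2}(\omega(k_0),f'(k_1))\leq \frac{\omega(k_0)}{1+f'(k_1)}$ by Proposition~\ref{proposition1}(1c), leading to the same contradiction. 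One of these two chains of inequalities closes the argument; the point is that (\ref{omega(k0)}) is precisely calibrated so that $k_1$ cannot be large enough to leave regime~1.

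\textbf{Step 2 (invariance and monotone decay).} Once $(k_0,k_1)$ is in regime~1, the dynamics reduce to $nk_{t+1}=s_{\beta_1}(\omega(k_t),f'(k_{t+1}))$, which by Lemma~\ref{gbeta} (applicable since $cu'(c)$ is increasing) is equivalent to $k_{t+1}=g_{\beta_1}(k_t)$ with $g_{\beta_1}$ continuous and strictly increasing. By Lemma~\ref{poverty-function-g}, $g_{\beta_1}(x)<x$ for all $x\in(0,x_{\beta_1})$, so $k_0<x_{\beta_1}$ gives $k_1=g_{\beta_1}(k_0)<k_0<x_{\beta_1}$; since $k_1<k_0$ we also have $f'(k_1)<f'(k_0)\le f'(0)$ — wait, more to the point, $k_1<x_{\beta_1}\le(f')^{-1}(1/\beta_1)$ is not automatic, so here I would note that $(k_1,k_2)$ remaining in regime~1 follows because $g_{\beta_1}$ is the correct dynamics precisely on the set where $f'(k_{t+1})<1/\beta_1$, and Corollary~\ref{basic1} / the monotonicity of $(k_t)$ keeps us there once we start below $x_{\beta_1}$ (the formal justification is that the equation $nk=s_{\beta_1}(\omega(k_0),f'(k))$ has its unique solution in regime~1 when $k_0$ is small, by the same comparison as in Step~1 applied with $k_t$ in place of $k_0$). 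Then $(k_t)$ is strictly decreasing and bounded below by $0$, hence converges to a limit $\ell\ge 0$ with $\ell=g_{\beta_1}(\ell)$; since $g_{\beta_1}(x)<x$ on $(0,x_{\beta_1})$, the only fixed point in $[0,x_{\beta_1})$ is $0$, so $\ell=0$. This proves $\lim_t k_t=0$ for every $k_0\in(0,x_{poverty})$, i.e.\ $x_{poverty}$ is a poverty trap.

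\textbf{Step 3 (monotonicity in $\gamma$).} Finally, for the comparative static: $\beta_1=\beta+\gamma$ is increasing in $\gamma$, and by Lemma~\ref{poverty-function-g} (or Proposition~\ref{poverty2_1}) $x_{\beta_1}$ is decreasing in $\beta_1$, hence decreasing in $\gamma$. For the second term in the $\min$ of (\ref{x_poverty_threshold}), note that $n\big(1+\frac{1}{\gamma+\beta}\big)$ is decreasing in $\gamma$ and $\omega^{-1}$ is increasing, so $\omega^{-1}\big(n(1+\frac{1}{\gamma+\beta})\big)$ is decreasing in $\gamma$; likewise $\frac{1+\gamma}{\beta}$ is increasing in $\gamma$ and $(f')^{-1}$ is decreasing, so $(f')^{-1}\big(\frac{1+\gamma}{\beta}\big)$ is decreasing in $\gamma$; the product of two positive decreasing functions is decreasing, and the minimum of two decreasing functions is decreasing. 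Hence $x_{poverty}$ is decreasing in $\gamma$. The main obstacle throughout is Step~1–2's bookkeeping — verifying that condition (\ref{omega(k0)}) exactly rules out escape from regime~1 at every step, rather than just at $t=0$ — and I would handle it by proving the one-step implication ``$k_t<x_{poverty}\Rightarrow k_{t+1}<x_{poverty}$ and $(k_t,k_{t+1})$ in regime~1'' and then inducting.
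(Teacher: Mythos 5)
Your Step 1 contains a genuine gap, and it propagates into Step 2. Condition (\ref{omega(k0)}) only rules out the \emph{intermediate} regime 3 (where $nk_1=\frac{\omega(k_0)}{1+f'(k_1)}$ and $\frac{1}{\gamma+\beta}\le f'(k_1)\le\frac{1+\gamma}{\beta}$); it does \emph{not} rule out regime 2. Trace your own chain: assuming $f'(k_1)\ge\frac{1}{\gamma+\beta}$ and using $nk_1\le\frac{\omega(k_0)}{1+\frac{1}{\gamma+\beta}}$ together with (\ref{omega(k0)}) yields $k_1<(f')^{-1}\big(\frac{1+\gamma}{\beta}\big)$, i.e.\ $f'(k_1)>\frac{1+\gamma}{\beta}$. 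That is precisely the regime-2 condition, not a contradiction — your "same contradiction" in the regime-2 branch does not exist (Proposition \ref{proposition1}(1c) only gives $s_{\beta_2}\le\frac{\omega(k_0)}{1+f'(k_1)}$, which reproduces the same consistent bound). Worse, under the standing hypothesis $f'(0)>\frac{1+\gamma}{\beta}$ the path \emph{must} eventually be in regime 2: since the conclusion is $k_t\to 0$, we have $f'(k_{t+1})\to f'(0)>\frac{1+\gamma}{\beta}$, so the dynamics cannot stay governed by $g_{\beta_1}$ as your Step 2 assumes, and your limit argument via fixed points of $g_{\beta_1}$ alone is not valid.

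The repair is exactly what the paper does: treat regimes 1 and 2 symmetrically and use (\ref{omega(k0)}) only to exclude regime 3. If $(k_t,k_{t+1})$ is in regime 1, then $k_{t+1}=g_{\beta_1}(k_t)<k_t$ by Lemma \ref{gbeta} and Lemma \ref{poverty-function-g} since $k_t<x_{\beta_1}$; if it is in regime 2, then $k_{t+1}=g_{\beta_2}(k_t)<k_t$ because $\beta_2\le\beta_1$ implies $x_{\beta_1}\le x_{\beta_2}$, so $k_t<x_{\beta_2}$ as well; regime 3 is impossible because it would force $\omega(k_0)\ge n\big(1+\frac{1}{\gamma+\beta}\big)(f')^{-1}\big(\frac{1+\gamma}{\beta}\big)$, contradicting (\ref{omega(k0)}). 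Induction then gives a strictly decreasing path with limit $k_*<x_{\beta_1}$, and $k_*>0$ is excluded in every regime: regimes 1 and 2 by Assumption \ref{assum-poverty} (no fixed point of $g_{\beta_1}$ or $g_{\beta_2}$ below $x_{\beta_1}\le x_{\beta_2}$), regime 3 again by (\ref{omega(k0)}) and monotonicity of $\omega$. Your Step 3 (monotonicity of $x_{poverty}$ in $\gamma$) is correct and matches the paper's argument.
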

\begin{proof}See Appendix \ref{intermidiatesection-proofs}.\end{proof}


Although $x_{poverty}$ in Proposition \ref{poverty-middle} is a poverty trap, it may not be the maximum value of poverty traps. It means that there may exist some other value of initial capital $k_0>x_{poverty}$, whose associated capital path converges to zero. The following example provides an illustration.
\begin{ex}{\normalfont Consider again logarithm utility and CES production functions. Taking the values $A = 3.3, 
a = 0.3, 
\rho = -0.9, n= 1.32, \beta= 0.7, \gamma = 0.54$, we have $f'(0) \approx 12.5744> \frac{1}{\beta_2} \approx 2.2 >\frac{1}{\beta_1}\approx 0.806>0$. The threshold in (\ref{x_poverty_threshold}) is $x_{poverty} \approx 0.0887$. If the initial value $k_0< x_{poverty}$, then the capital path converges to $0$. However,  there also exists a scenario that $k_0>x_{poverty}$ but the capital still decreases to $0$ (see Figure \ref{regime9}). 
\begin{figure}[h]
	\centering
	\includegraphics[width=0.5\linewidth]{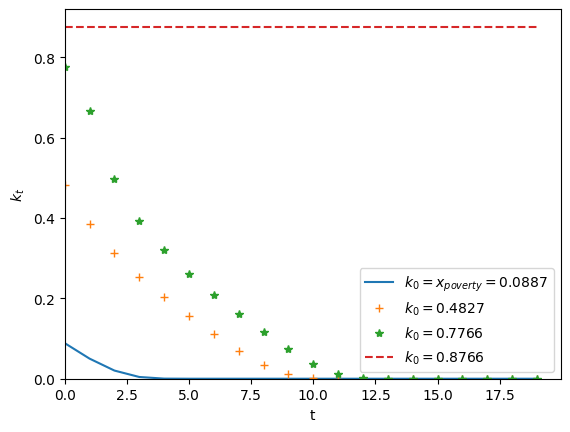}
	\caption{Capital paths with different initial values near $x_{poverty}$}
	\label{regime9}
\end{figure}
}
\end{ex}



We now complement Proposition \ref{poverty-middle} by showing more detailed results with the CES production function. Under this specification, we can explicitly identify the whose set of all poverty traps.

		\begin{proposition}\label{poverty_x1}
		Consider a CES production function  with $\rho<0$ and logarithm utility as in  Corollary \ref{corollary-ces}. Suppose that  the conditions in Lemma \ref{h_increase} are satisfied and
		\begin{align*}
		\max_{x\geq 0}\frac{\omega(x)}{x}= 	-A\rho a^{\frac{1}{\rho}}(1-\rho)^{\frac{1}{\rho}-1} \geq \frac{n(1+\beta_1)}{\beta_1}
		\end{align*}
		Let $x_{\beta_1}$ be the smallest positive solution of $g_{\beta_1}(k)=k$.   Assume that if $f'(x_{\beta_1})<\frac{1}{\beta_1}$ 
		and one of the following conditions satisfies: (i) $H(k)<k$ $\forall k>0$, (ii) $\bar k_1>x_{\beta_1}$ where $\bar k_1$ is the smallest positive solution to the equation $H(k)=k$.

Then, the following statements hold.
\begin{enumerate}
\item If $k_0<x_{\beta_1}$, then $\lim_{t\to\infty}k_t=0$.
\item If $k_0=x_{\beta_1}$, then $k_t=x_{\beta_1}$.
\item If $k_0>x_{\beta_1}$, then $\lim\limits_{t\to\infty}k_t\geq x_{\beta_1}$.    
\end{enumerate}
	\end{proposition}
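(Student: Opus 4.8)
The plan is to analyze the piecewise dynamics by tracking which regime the capital path lives in, and to leverage the monotonicity results established earlier (Lemma \ref{CesFull}, Proposition \ref{poverty2_1}). The key observation is that under the assumed inequality $\max_{x\geq 0}\frac{\omega(x)}{x} \geq \frac{n(1+\beta_1)}{\beta_1}$, Lemma \ref{CesFull}'s point \ref{ces-ex2-case2c} (adapted with $\beta$ replaced by $\beta_1$) guarantees the existence of the smallest fixed point $x_{\beta_1}$ of $g_{\beta_1}$, with $g_{\beta_1}(k)<k$ for $k\in(0,x_{\beta_1})$. First I would verify that whenever the pair $(k_t,k_{t+1})$ is in regime 1 (i.e., $f'(k_{t+1})<\frac{1}{\beta_1}$), the dynamics reduces to $k_{t+1}=g_{\beta_1}(k_t)$ by Corollary \ref{corollary-ces}, so on that region the one-dimensional map $g_{\beta_1}$ governs everything, and the comparison with the identity gives strict decrease below $x_{\beta_1}$.

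The central step is to show that if $k_0<x_{\beta_1}$, the path never escapes regime 1 and hence $k_{t+1}=g_{\beta_1}(k_t)<k_t$, forcing $k_t\downarrow 0$. Here I would use $f'(x_{\beta_1})<\frac{1}{\beta_1}$ together with monotonicity of $g_{\beta_1}$ and strict monotonicity of $f'$ (note $f'$ is strictly decreasing since $f$ is strictly concave, so $f'(k)<\frac{1}{\beta_1}$ is equivalent to $k>(f')^{-1}(\frac{1}{\beta_1})$, i.e. to $k$ being large — one must be careful with the direction here and instead argue via the actual regime characterization in the capital variable). Then I would show $g_{\beta_1}(k)\geq k$ is impossible on $(0,x_{\beta_1})$ while the intermediate-regime value $\frac{\omega(k_t)}{n(1+f'(k_{t+1}))}$, when it applies, is dominated by $g_{\beta_1}(k_t)$ because $s_{\beta_1}\geq \frac{w}{1+R}$ precisely fails to hold in regime 1 — so actually in regime 1 we always have the pure $g_{\beta_1}$ map and the alternative conditions (i) $H(k)<k$ for all $k$, or (ii) $\bar k_1>x_{\beta_1}$, are exactly what rule out a switch to regime 3 near $x_{\beta_1}$ producing a larger iterate. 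Using (i) or (ii) I would establish that for $k_0<x_{\beta_1}$ the successor $k_1$ stays below $x_{\beta_1}$, and induct. Convergence to zero then follows from the standard cobweb argument: $(k_t)$ is decreasing and bounded below by $0$, its limit $\ell$ satisfies $\ell=g_{\beta_1}(\ell)$ with $\ell<x_{\beta_1}$, so $\ell=0$ by minimality of $x_{\beta_1}$.

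For the remaining two cases: if $k_0=x_{\beta_1}$ then $(x_{\beta_1},x_{\beta_1})$ is in regime 1 by the hypothesis $f'(x_{\beta_1})<\frac{1}{\beta_1}$, and $g_{\beta_1}(x_{\beta_1})=x_{\beta_1}$, so the constant sequence is the unique equilibrium and $k_t=x_{\beta_1}$. If $k_0>x_{\beta_1}$, I would argue that $k_t$ cannot drop below $x_{\beta_1}$: once above, a switch to regime 3 is controlled by conditions (i)/(ii) (in case (ii), $\bar k_1>x_{\beta_1}$ means the regime-3 map's fixed point structure keeps iterates above $x_{\beta_1}$; in case (i), $H(k)<k$ alone does not suffice so one uses that any regime-3 iterate satisfies $nk_{t+1}=\frac{\omega(k_t)}{1+f'(k_{t+1})}\geq$ something comparable to $g_{\beta_1}$), hence $\liminf_t k_t\geq x_{\beta_1}$. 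The main obstacle I anticipate is precisely this last sign-chasing: showing the path started above $x_{\beta_1}$ cannot be dragged below it by an excursion into regime 2 or regime 3, since the map is only piecewise-smooth and not globally monotone; handling the regime-2 branch $g_{\beta_2}$ (which lies \emph{below} $g_{\beta_1}$ by Lemma \ref{rmk_saving}) will require using the assumed inequality on $\max\frac{\omega(x)}{x}$ to bound how far down an iterate can go, together with continuity of the equilibrium selection across the regime boundaries.
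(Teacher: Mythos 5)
There is a genuine gap, and it is the one you yourself flag at the end: your argument never uses the hypothesis that the conditions of Lemma \ref{h_increase} hold, and that hypothesis is exactly what the paper's proof rests on. Via Proposition \ref{unique-convergence}, it guarantees that the equilibrium dynamics is a \emph{single-valued, continuous, strictly increasing} map $G$ with $k_{t+1}=G(k_t)$ and that every orbit is monotone and convergent. Without it, the piecewise system (\ref{CES-dynamics}) may admit several admissible values of $k_{t+1}$ for a given $k_t$, lying in different regimes (cf.\ Proposition \ref{multiple-version2}), so your central claims are not established: the assertion that a path starting below $x_{\beta_1}$ ``never escapes regime 1'' is not proved (and is not needed), the uniqueness of the constant path when $k_0=x_{\beta_1}$ is asserted without justification, and the case $k_0>x_{\beta_1}$ is left open by your own admission (``the main obstacle I anticipate is precisely this last sign-chasing''). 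Your supporting comparison also points the wrong way: in regime 1 one has $R_{t+1}<\frac{1}{\beta_1}$ and hence $s_{\beta_1}(w,R)<\frac{w}{1+R}$ by Proposition \ref{proposition1}, so an intermediate-regime iterate is not ``dominated by'' $g_{\beta_1}(k_t)$; regime-by-regime sign chasing of this sort cannot close the argument.

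The paper's proof is short precisely because it does not track regimes. Proposition \ref{cesll} (Lemma \ref{CesFull} with $\beta$ replaced by $\beta_1$) gives the existence of $x_{\beta_1}$ under the stated inequality, and $f'(x_{\beta_1})<\frac{1}{\beta_1}$ makes $x_{\beta_1}$ a genuine fixed point of $G$ (it lies in regime 1). Lemma \ref{h_increase} together with Proposition \ref{unique-convergence} then yields uniqueness of the equilibrium and global monotonicity and continuity of $G$, while conditions (i) or (ii) exclude any smaller positive fixed point of $G$ coming from the intermediate regime (whose fixed points are the solutions associated with $H$), and fixed points of $g_{\beta_2}$ cannot occur below $x_{\beta_1}$ since $x_{\beta_1}\leq x_{\beta_2}$ (Lemma \ref{poverty-function-g}). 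Hence $x_{\beta_1}$ is the smallest positive fixed point of an increasing continuous map, and the three statements follow at once from standard one-dimensional monotone dynamics: below the smallest positive fixed point the orbit decreases to $0$, at it the orbit is constant, and an orbit started above it cannot cross below it. To repair your proposal, replace the regime-tracking by this global monotonicity argument; your cobweb conclusion for $k_0<x_{\beta_1}$ then becomes correct as stated.
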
 
\begin{proof}See Appendix \ref{intermidiatesection-proofs}.\end{proof}
Here, the set of poverty traps is $[0,x_{\beta_1})$. Observe that, the maximum poverty trap $x_{\beta_1}$ is decreasing in the wariness level $\gamma$ and $x_{\beta_1}< x_{\beta}$. So, the wariness $\gamma$  helps to reduce the set of povery traps, which is consistent with the insight in Propositions \ref{poverty2_1} and  \ref{cesll}.

\subsection{Collapsing economy}
So far we have explained why a poverty trap may exist. However, under strong conditions, the economy collapses whatever the initial condition.
\begin{proposition}[Collapsing economy]
\label{steady0}
Let the assumptions of Proposition \ref{unique-convergence} hold and one of the following conditions holds:
\begin{enumerate}
\item\label{part1} 		$f'(\infty)\leq \frac{1}{\beta_1}\equiv \frac{1}{\beta+\gamma}< \frac{1}{\beta_2}\equiv \frac{1+\gamma}{\beta}\leq f'(0)$, $  M_1< n$ and  $ M_3<n$.

\item $f'(\infty)\geq  \frac{1+\gamma}{\beta_2}$ and $M_2< n		$.
\item $f'(0)\leq  \frac{1}{\beta+\gamma}$ and $ M_1< n	$.
\item $ \frac{1}{\beta +\gamma} \leq f'(\infty)< f'(0)\leq  \frac{1+\gamma}{\beta}$ and $ M_3<n$.	
	\end{enumerate}
Then the economy collapses (i.e., $\lim\limits_{t\rightarrow \infty}k_t=0$) 	for any $k_0>0$.
    
\end{proposition}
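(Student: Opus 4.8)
The plan is to argue by contradiction, leveraging the convergence result already in hand. Under the assumptions of Proposition~\ref{unique-convergence} the equilibrium is unique and the capital path satisfies $k_{t+1}=G(k_t)$ for a continuous, strictly increasing map $G$; by Corollary~\ref{prop_convergence} the path $(k_t)$ converges monotonically to a steady state $k^*\ge 0$, which is a fixed point of $G$ by continuity. Hence it suffices to show $k^*=0$ in each of the four cases, so I would suppose instead that $k^*>0$ and derive a contradiction.

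The first step is to record the basic dichotomy for a positive steady state. Since $f'$ is continuous and strictly decreasing, $f'(k^*)$ falls into exactly one of the three regimes of Definition~\ref{regimedefinition}, with the two boundary values $f'(k^*)\in\{1/\beta_1,1/\beta_2\}$ placed in the middle regime because there the optimal saving equals $\omega(k^*)/(1+f'(k^*))$ (Proposition~\ref{proposition1}). Corollary~\ref{prop_convergence} then gives: $f'(k^*)<1/\beta_1\Rightarrow M_1\ge n$; $f'(k^*)>1/\beta_2\Rightarrow M_2\ge n$; and $1/\beta_1\le f'(k^*)\le 1/\beta_2\Rightarrow nk^*=\omega(k^*)/(1+f'(k^*))$, hence $M_3\ge\omega(k^*)/\big(k^*(1+f'(k^*))\big)=n$. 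I would also use the elementary inequality $M_2\le M_1$, which holds because $\beta_2=\beta/(1+\gamma)\le\beta+\gamma=\beta_1$ for $\gamma\ge 0$ and $t\mapsto t/(1+t)$ is increasing, so $\beta_2/(1+\beta_2)\le\beta_1/(1+\beta_1)$.

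The second step combines this with the range of $f'$ imposed in each case. In case~\ref{part1} all three regimes are a priori possible, but regime~$1$ forces $M_1\ge n$, regime~$2$ forces $n\le M_2\le M_1$, and the middle regime forces $M_3\ge n$ — each contradicting a hypothesis of case~\ref{part1}. In the second case the hypothesis gives $f'(k)>f'(\infty)\ge 1/\beta_2$ for every finite $k>0$, so $k^*$ is in regime~$2$ and $M_2\ge n$, a contradiction. In the third case $f'(k)<f'(0)\le 1/\beta_1$ for all $k>0$, so $k^*$ is in regime~$1$ and $M_1\ge n$, a contradiction. In the fourth case $1/\beta_1\le f'(\infty)<f'(k)<f'(0)\le 1/\beta_2$, so $f'(k^*)$ lies strictly inside the middle regime and $M_3\ge n$, a contradiction. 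Hence $k^*=0$ in every case, i.e.\ $\lim_{t\to\infty}k_t=0$ for every $k_0>0$.

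The argument is essentially bookkeeping, and I would expect the only substantive point to be the observation $M_2\le M_1$, which disposes of regime~$2$ in case~\ref{part1} (where $M_2<n$ is not assumed directly). The only delicate detail is the treatment of the regime boundaries $f'(k^*)\in\{1/\beta_1,1/\beta_2\}$, handled by noting that the saving there coincides with the middle-regime formula $\omega(k^*)/(1+f'(k^*))$, placing such a steady state under the $M_3\ge n$ branch. Everything else is a direct appeal to Proposition~\ref{unique-convergence} and Corollary~\ref{prop_convergence}.
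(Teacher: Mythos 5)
Your proof is correct, but it follows a shorter route than the paper's. The paper does not argue by contradiction from Corollary \ref{prop_convergence}; it first proves directly that $G(k)<k$ for every $k>0$ --- using $s_{\beta_i}\big(\omega(k),f'(G(k))\big)<\omega(k)$ together with the definitions of $M_1,M_2$, the inequality $M_2\le M_1$, and, for the middle regime, the monotonicity of $h(k)=k(1+f'(k))$ combined with $M_3<n$ --- so that the path is strictly decreasing from any $k_0$, and only then notes that a positive limit would have to solve one of the three steady-state equations, none of which admits a solution when the relevant $M_i<n$. Your argument keeps essentially only this second half and outsources the first to Proposition \ref{unique-convergence}/Corollary \ref{prop_convergence}: you let the path converge to a steady state $k^*$, classify $f'(k^*)$ by regime, and rule out $k^*>0$ case by case, using the same two ingredients the paper uses ($M_2\le M_1$ to dispose of regime 2 in case \ref{part1}, and the range of $f'$ in the other cases), plus a more careful treatment of the boundary values $f'(k^*)\in\{1/\beta_1,1/\beta_2\}$ via Proposition \ref{proposition1} than Corollary \ref{prop_convergence} itself spells out. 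The trade-off: the paper's direct route is self-contained --- $G(k)<k$ immediately yields a decreasing, hence bounded, path, so it never relies on the corollary's assertion that the limit is a finite steady state --- whereas your route is more economical but inherits that assertion (an increasing unbounded path is excluded only because you take Corollary \ref{prop_convergence} at face value). Since that corollary is stated under exactly the assumptions imposed here, your proof is a valid derivation within the paper's framework.
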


\begin{proof}See Appendix \ref{intermidiatesection-proofs}.\end{proof}

Recall that $M_i$ is defined by (\ref{notationMi}). To better understand Proposition \ref{steady0}, let us consider the production function $\tilde{f}\equiv Af(\cdot)$ instead of $f(\cdot)$, where $A$ represents the productivity.  With the obvious notations, we have $\tilde{M}_1=AM_1,\tilde{M}_2=AM_2$ and $\tilde{M}_2=A\sup_{k>0}\frac{\omega(k)}{k(1+Af'(k))}$. So, applying Proposition \ref{steady0} for the function $\tilde{f}$, we see that, in all cases, the economy collapses if the productivity $A$ is low enough.
		


\section{Conclusion}\label{conclu}
We have investigated the effects of wariness on poverty traps and equilibrium multiplicity. We have shown that whether wariness could reduce or increase the possibility of the poverty trap depends on the interaction between productivity, wariness, and factor substitutability.   Interestingly, under low productivity and low wariness, wariness has a positive effect on the dynamics of the economy and can reduce the set of poverty traps because it enhances investment. However, under high productivity and low wariness, wariness has a negative effect.

We have also proved that a high level of wariness can generate an equilibrium multiplicity. With the same fundamentals, there may exist an equilibrium that exhibits a collapsing behavior (i.e., the capital path converges to zero) and another equilibrium whose capital path converges to a stable steady state.

We have highlighted the interplay between wariness, factor substitution (capital intensity, elasticity of factor substitution), and poverty traps. Under standard specifications, we show that the lower the capital intensity, the smaller the set of poverty traps, and, by the way, the higher chance to avoid poverty traps. However, the effect of the elasticity of substitution on povery traps is not necessarily monotonic. 

\subsubsection*{Acknowledgement}
This research is funded by Vietnam National University Ho Chi Minh City (VNU-
HCM) under grant number C2024-28-11.

\appendix
\section*{Appendix: formal proofs}

\section{Proofs for Section \ref{section3}}
\label{section3-proofs}
\begin{proof}[{\bf Proof of Proposition \ref{unique-convergence}}]
We follow the strategy of \cite{pp24}. Proposition \ref{unique-convergence} is a direct consequence of the two following lemmas.
\begin{lemma}\label{k_increasing}
Assume that the function $f^{\prime}(k)u^{\prime}\big(nkf^{\prime}(k)\big)$ is decreasing  is increasing on the interval $(0,\infty)$. Given $k_t>0$, let $k_{t+1,1}$ be determined by the Euler equation 
$u'\left(\omega(k_t) - nk_{t+1,1}\right) = \beta_1 R_{t+1}u'\left(R_{t+1}nk_{t+1,1}\right)$ and $R_{t+1}=f'(k_{t+1,1}) < \frac{1}{\gamma+\beta}$, where recall that $\omega(k)\equiv f(k_t)-k_tf^{\prime}(k_t)$.   Then $k_{t+1,1}$ is a strictly increasing, continuous function of $k_t$. 


\end{lemma}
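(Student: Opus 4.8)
The plan is to invert the defining relation, writing $k_t$ as an explicit, visibly monotone function of $k_{t+1,1}$, and then read off the claim for the inverse map. First I would record two monotonicity facts. Under Assumption~\ref{assum2}, $\omega(k)=f(k)-kf'(k)$ satisfies $\omega'(k)=-kf''(k)\ge 0$ with $f$ strictly concave, so $\omega$ is strictly increasing and continuous on $(0,\infty)$, hence a bijection onto its image with strictly increasing, continuous inverse $\omega^{-1}$; also $f'>0$ on $(0,\infty)$ by strict monotonicity and concavity of $f$. Under Assumption~\ref{HU}, $u'>0$, $u'$ is continuous and strictly decreasing with $u'(0^+)=\infty$, so $(u')^{-1}$ is well defined, continuous, and strictly decreasing on the range of $u'$ (for the logarithmic specialization used later this range is $(0,\infty)$).

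Next I would solve the Euler equation $u'\big(\omega(k_t)-nk_{t+1,1}\big)=\beta_1 f'(k_{t+1,1})\,u'\big(nk_{t+1,1}f'(k_{t+1,1})\big)$ for $\omega(k_t)$. Since $f'>0$ and $u'>0$, the right-hand side is a positive real number, and applying $(u')^{-1}$ (legitimate along the relevant branch) yields
\[
\omega(k_t)=nk_{t+1,1}+(u')^{-1}\!\Big(\beta_1 f'(k_{t+1,1})\,u'\big(nk_{t+1,1}f'(k_{t+1,1})\big)\Big)=:\phi(k_{t+1,1}).
\]
I would then show $\phi$ is continuous and strictly increasing on the regime-1 set $\{k>0:\ f'(k)<1/\beta_1\}$, which is an interval because $f'$ is strictly decreasing. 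The term $nk_{t+1,1}$ is strictly increasing; and, by the hypothesis (that $f'(k)u'(nkf'(k))$ is decreasing, as in Proposition~\ref{unique-convergence}), the argument of $(u')^{-1}$ is a decreasing function of $k_{t+1,1}$, so composing it with the decreasing map $(u')^{-1}$ produces an increasing function; continuity is immediate from continuity of $f'$, $u'$, $(u')^{-1}$. Hence $\phi$ is strictly increasing and continuous.

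Finally, $k_t=\omega^{-1}\big(\phi(k_{t+1,1})\big)$ exhibits $k_t$ as a composition of strictly increasing continuous maps of $k_{t+1,1}$, hence itself strictly increasing and continuous; inverting, $k_{t+1,1}$ is a strictly increasing, continuous (and in particular uniquely determined) function of $k_t$ on the corresponding domain, which is the assertion. I expect the substance to lie entirely in the monotonicity of $\phi$; the rest is bookkeeping — checking that the argument of $(u')^{-1}$ lies in the range of $u'$, that $\omega(k_t)-nk_{t+1,1}>0$ along this branch (forced by the domain of $u'$), and pinning down the exact $k_t$-interval on which regime~1 is the relevant branch (read off from the monotonicity of $f'$ and of $\phi$). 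An essentially equivalent alternative is the implicit function theorem applied to $\Phi(k_t,k_{t+1,1})\equiv u'\big(\omega(k_t)-nk_{t+1,1}\big)-\beta_1 f'(k_{t+1,1})u'\big(nk_{t+1,1}f'(k_{t+1,1})\big)$, using $\partial_{k_t}\Phi=u''(\cdot)\,\omega'(k_t)<0$ and $\partial_{k_{t+1,1}}\Phi=-n\,u''(\cdot)-\beta_1\tfrac{d}{dk_{t+1,1}}\big[f'(k_{t+1,1})u'(nk_{t+1,1}f'(k_{t+1,1}))\big]>0$, the second summand being nonnegative precisely because $f'(k)u'(nkf'(k))$ is decreasing; this route uses the $C^2$ assumption the paper already imposes.
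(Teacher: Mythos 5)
Your proposal is correct and is essentially the paper's argument: both exploit the key hypothesis that $f'(k)u'\big(nkf'(k)\big)$ is decreasing to show that $\omega(k_t)$ is a strictly increasing continuous function of $k_{t+1,1}$, and then invert using the strict monotonicity of $\omega$; your implicit-function-theorem variant reproduces the paper's differentiation of the Euler equation almost verbatim. Your primary route, via explicit inversion with $(u')^{-1}$ and composition of monotone maps, is a slightly more elementary packaging of the same idea and is a bit more careful than the paper about strictness and continuity, but it is not a genuinely different proof.
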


\begin{proof}
It suffices to prove that $w_t$ is increasing in $k_{t+1}$. 
Taking the derivative of both sides of the equation $u'\left(\omega(k_t) - nk_{t+1}\right) = \beta_1 R_{t+1}u'\left(R_{t+1}nk_{t+1}\right)$ with respect to $k_{t+1}$ and noting that $R_{t+1}=f^{\prime}(k_{t+1})$, we have 
\begin{align*}
(\frac{\partial \omega(k_t)}{\partial k_{t+1}}-n)u^{\prime \prime}(c_t)=\beta_1\frac{\partial \big(R_{t+1}u'\left(R_{t+1}nk_{t+1}\right)\big)}{\partial k_{t+1}} \\
\text{or, equivalently, }\frac{\partial \omega(k_t)}{\partial k_{t+1}}u^{\prime \prime}(c_t)=nu^{\prime \prime}(c_t)+\beta_1\frac{\partial \big(R_{t+1}u'\left(R_{t+1}nk_{t+1}\right)\big)}{\partial k_{t+1}}
\end{align*}
Since $u^{\prime \prime}<0$ and $f^{\prime}(k)u^{\prime}\big(nkf^{\prime}(k)\big)$ is decreasing in $k$ for any $k>0$, we have $\frac{\partial \omega(k_t)}{\partial k_{t+1}}>0$. Note that $\omega(k_t)=f(k_t)-k_tf^{\prime}(k_t)$ is increasing in $k_t$ (because the function $f$ is strictly concave). By consequence, we get that $k_{t+1,1}$ is an increasing function of $k_t$.

\end{proof}

\begin{lemma}\label{prop3}
Assume that 
$f^{\prime}(k)u^{\prime}\big(nkf^{\prime}(k)\big)$ is decreasing and $h(k)\equiv k+kf'(k)$ is strictly increasing in $k$  for any $k>0$. 
 Then $k_{t+1}$ determined by (\ref{kt+1}) is a strictly increasing, continuous function, {\color{blue} denoted by $G(k_t)$,} of $k_t$. By consequence, the capital path $(k_t)$ converges.
 \end{lemma}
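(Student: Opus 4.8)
The plan is to show that the successor map $G$ defined by (\ref{kt+1}) is well defined, strictly increasing and continuous by treating the three regimes of (\ref{kt+1}) separately and then gluing the branches together at the two thresholds where $f'(k_{t+1})\in\{1/\beta_1,1/\beta_2\}$; the convergence of $(k_t)$ will then follow from the classical fact that a one-dimensional increasing continuous dynamics produces monotone orbits, together with a boundedness argument.

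First I would establish monotonicity \emph{within} each regime. In regime 1, Lemma \ref{k_increasing} already gives that the solution of $u'(\omega(k_t)-nk_{t+1})=\beta_1 f'(k_{t+1})u'(nk_{t+1}f'(k_{t+1}))$ is a strictly increasing, continuous function of $k_t$; the only inputs are that $f'(k)u'(nkf'(k))$ is decreasing and that $\omega(k)=f(k)-kf'(k)$ is strictly increasing (by strict concavity of $f$), so the very same computation, with $\beta_2$ in place of $\beta_1$, covers regime 2. For the intermediate regime 3 I would rewrite $nk_{t+1}=\omega(k_t)/(1+f'(k_{t+1}))$ as $n\,h(k_{t+1})=\omega(k_t)$ with $h(k)=k(1+f'(k))$; since $h$ is strictly increasing by hypothesis and $\omega$ is strictly increasing, this yields $k_{t+1}=h^{-1}(\omega(k_t)/n)$, again strictly increasing and continuous in $k_t$. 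Equivalently, on the three (ordered, possibly empty) intervals of $k_{t+1}$-values cut out by $f'(k_{t+1})<1/\beta_1$, $1/\beta_1\le f'(k_{t+1})\le 1/\beta_2$, $f'(k_{t+1})>1/\beta_2$, the inverse relation $k_{t+1}\mapsto k_t$ is a strictly increasing continuous function, call it $\Psi$.

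Next I would glue the branches. At a point $\hat k$ with $f'(\hat k)=1/\beta_1$, Proposition \ref{proposition1}(1)(a) gives $s_{\beta_1}(\omega,1/\beta_1)=\omega/(1+1/\beta_1)$, so the regime-1 and regime-3 definitions of $\Psi$ agree there; likewise at $f'(\hat k)=1/\beta_2$ the regime-3 and regime-2 definitions agree by Proposition \ref{proposition1}(1)(c). Hence the branches concatenate into a single strictly increasing continuous function $\Psi$. By the existence lemma every $k_t>0$ has a successor, so $\Psi$ is onto $(0,\infty)$, while strict monotonicity of $\Psi$ makes the successor unique; therefore $G:=\Psi^{-1}$ is well defined on $(0,\infty)$, and, being the inverse of a strictly increasing continuous map on an interval, is itself strictly increasing and continuous, which is exactly the claim, with $k_{t+1}=G(k_t)$. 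I expect this gluing step to be the main obstacle: one must check that the branch values match exactly at the two thresholds (precisely the content of the boundary identities in Proposition \ref{proposition1}), and must keep track of which regimes are actually present, since for a CES technology $f'$ need not map $(0,\infty)$ onto $(0,\infty)$, so one or two of the three intervals may be empty (which only simplifies matters); Steps above and below essentially reproduce arguments from \cite{pp24} and \cite{cm02}.

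Finally, for convergence: since $G$ is increasing, $k_1\ge k_0$ forces $k_{t+1}\ge k_t$ for all $t$ and $k_1\le k_0$ forces $k_{t+1}\le k_t$ for all $t$ (immediate induction), so $(k_t)$ is monotone. A nonincreasing sequence in $(0,\infty)$ converges. A nondecreasing sequence is bounded above: indeed $nk_{t+1}=s_t<\omega(k_t)$ because $c_t>0$ (by $u'(0)=\infty$), while $\omega(k)/k=f(k)/k-f'(k)\to 0$ as $k\to\infty$ (strict concavity makes both $f(k)/k$ and $f'(k)$ decrease to $f'(\infty)$), so $\omega(k)<nk$ for all large $k$, contradicting unbounded growth; a bounded monotone sequence converges.
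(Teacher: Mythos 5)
Your proof is correct and follows essentially the same route as the paper: the paper's own argument simply applies Lemma \ref{k_increasing} regime by regime (with the monotonicity of $h$ handling the intermediate regime) to conclude that $G$ is increasing and continuous, and then asserts convergence. Your write-up additionally makes explicit the boundary matching via Proposition \ref{proposition1}(1)(a) and (c) and the boundedness argument ensuring a finite limit, details the paper leaves implicit.
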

 \begin{proof}By using the same argument in Lemma \ref{k_increasing}, we can prove that $k_{t+1}$ determined by (\ref{kt+1}) is an increasing and continuous function of $k_t$ in each case in the formula (\ref{kt+1}). So, the function $G(k_t)$ is increasing and continuous.
\end{proof}

\end{proof}

\begin{proof}[{\bf Proof of Lemma \ref{h_increase}}]
			It is clear that for $0<\rho<1$, the function $kf'(k)$ is increasing in $k$ and so is $h(k)$.
			
We now consider the case $\rho<0$. We have
			\begin{align*}
				h'(k) =1 + kf''(k)  + f'(k) \text{ and }
				h''(k)= kf'''(k) + 2f''(k).
			\end{align*}
			It is easy to verify that  \begin{align*}
				\frac{f''(k)}{f'(k)}& = (\ln f'(k))' = -\frac{(1-\rho)(1-a)}{k(ak^\rho + 1-a)}\\
				h'(k) &= 1 - \frac{(1-\rho)(1-a)}{ak^\rho + 1-a}f'(k) + f'(k)\\
				h''(k)& = \frac{f''(k)}{ak^\rho + 1-a} \left[a(1-\rho)k^{\rho} + \rho(1-a)\right]
			\end{align*}
			The term  $\frac{f''(k)}{ak^\rho + 1-a}$ is always negative while $a(1-\rho)k^{\rho} + \rho(1-a)$ is decreasing with one root which is $x_c \equiv  \left(-\frac{\rho(1-a)}{a(1-\rho)}\right)^{\frac{1}{\rho}}$. Hence $h'(k)$ achieves minimum at $x_c$. The function $h(k)$ is increasing on $(0,\infty )$ if $h'(x_c)\geq 0$ which is equivalent to $1- Aa^{\frac{1}{\rho}} (-\rho)^{-\frac{1}{\rho}+3}(1-2\rho)^{\frac{1}{\rho}-2}\geq 0$.
		\end{proof}

\begin{proof}[{\bf Proof of Corollary \ref{prop_convergence}}]
Let us prove point 1 (we can use the same argument to prove points 2 and 3). It is clear that if $f'(k^*)<\frac{1}{\beta_1}$ then $k^*$ must satisfies the equation $nk =	s_{\beta_1}(\omega(k^*),  f'(k))$.  If $M_1<n  $ then the equation $nk =	s_{\beta_1}(\omega(k^*),  f'(k))$ has no positive solution. 

\end{proof}

\begin{proof}[{\bf Proof of Proposition \ref{wariness-multipleequilibria}}]
According to Proposition \ref{proposition1}'s part 2, we have $s_t=\frac{w_t}{1+R_{t+1}}$, $\forall t$. By consequence, according to Definition  \ref{def2}, $(k_t)$ is an equilibrium if and only if $
nk_{t+1}=\frac{\omega(k_t)}{1+f'(k_{t+1})}, \forall t\geq 0.$

Assume now that the equation  $\omega(k_0)=nk(1+f'(k))$ has at least two strictly positive solutions, denoted by $k_1^1, k_1^2$.

Consider the function $h(k)\equiv k+kf'(k)$. 

By the concavity of $f$, we have $kf'(k)\leq f(k)-f(0)$. So, $\lim_{k\to 0} kf'(k)=0$. Hence, $\lim_{k\to 0} k+kf'(k)=0$. It is easy to see that $\lim_{k\to 0} k+kf'(k)=\infty$. By consequence, for given $w>0$, there exists at least $k>0$ such that $nk=\frac{w}{1+f'(k)}$. 

Now, given $k_0$ and $k_1^i>0$ (i=1,2), there exists at least 1 equilibrium with $k_1=k_1^i$. Therefore, there are at least two equilibria.
\end{proof}

\begin{proof}[{\bf Proof of Proposition \ref{multiple-version2}}]
	If $nk = s_{\beta_1}(\omega(k_0), f'(k))$ has a solution $k^{(1)}_{1}$ which satisfies $f'(k^{(1)}_{1})<\frac{1}{\beta_1}$ then $(k_0, k_{1}^{(1)})$ is a temporal  equilibrium where the next capital $k_{1}^{(1)}$ belongs to the regime 1.  
	
	Similar if the condition 2 satisfies then $(k_0, k_{1}^{(2)})$ is a temporal  equilibrium where the next capital $k_{1}^{(2)}$ belongs to the regime 2.  If the last condition satisfies then we also have the same observation. 
	
	Due to the strictly decreasing property of $f'$, the $k^{(i)}_{1}$ for $i=1, 2, 3$ (if exists) must be different because the values of $f'$ at these points fall into non-overlapping intervals.  Hence if at least two of these conditions satisfies, there exists at least two equilibria.
\end{proof}

\section{Proofs for Section \ref{section4trap}}
\label{section4trap-proof}
\subsection{Proofs for Section \ref{poverty-nowariness}}
\label{poverty-nowariness-proof}

\begin{proof}[{\bf Proof of Lemma \ref{gbeta}}]
To make our paper self-contained, we present a simple proof.  By Definition \ref{sbeta} of the function $s_{\beta}(\omega, R)$, we have $u'\left(\omega- s\right) = \beta R  u'\left(R s\right).$ The function $s_{\beta}$ is continuously differentiable. Since the function $cu'(c)$ is  increasing in $c$, we have $\frac{\partial s_{\beta}}{\partial R}\geq 0$.

 Define the function $\Delta: \rr_+^{2}\to \rr$ by $\Delta(k,w)\equiv nk-s_{\beta}(\omega,f'(k))$.  We have 
\begin{align*}
\frac{\partial \Delta}{\partial k}(k,\omega)=n-\frac{\partial s_{\beta}}{\partial R}(\omega,f'(k))f^{''}(k)
\end{align*}

For $\omega>0$, let $k$ be the solution to the equation $\Delta(k,\omega)=0$. For any $\omega>0$, we have \begin{align*}
\frac{\partial \Delta}{\partial k}(k(\omega),\omega)=n-\frac{\partial s_{\beta}}{\partial R}(\omega,f'(k(\omega))f^{''}(k(\omega))\geq n>0
\end{align*}
because $\frac{\partial s_{\beta}}{\partial R}\geq 0$ and  $f^{''}(k)<0$.

So, by the implicit theorem, there exists a function $h: \rr_{++}\to \rr_{++}$  which is continuously differentiable so that $\Delta(k,\omega)=0 \Leftrightarrow k=h(\omega)$.

Then we define the function $g_{\beta}$ by $g_{\beta}(k)=h(\omega(k))$.
\end{proof}

\begin{proof}[{\bf Proof of Lemma \ref{poverty-function-g}}]

{\bf Step 1}. For $\beta>0$, denote $x_{\beta}\equiv \inf\{x\in \mathcal{S}\}.$
Obviously, we have $x_{\beta}=g_{\beta}(x_{\beta})$. By the assumption  $\lim_{x\to\infty}\frac{g_{\beta}(x)}{x}<1$, there exists an open set $(0,\epsilon)$ such that $g_{\beta}(x)<x$ for all $x\in (0,\epsilon)$. This implies that $x_{\beta}>0$. Indeed, if $x_{\beta}=0$, then, by the definition of $x_{\beta}$, there exists a sequence $x_n\in \mathcal{S}$ such that $x_n$ converges to zero. However, since $x_n$ converges to zero, there exists $n_0$ such that $x_n<\epsilon$ for any $n\geq n_0$. This is impossible because $g_{\beta}(x)<x$ for all $x\in (0,\epsilon)$. Therefore, we must have $x_{\beta}>0$.

{\bf Step 2}. we prove that $g_{\beta}(x)<x, \forall x \in (0,x_{\beta})$. Let $x\in (0,x_{\beta})$. By the definition of $x_{\beta}$, we cannot have $x=g_{\beta}(x)$ because $x\in (0,x_{\beta})$. If $x>g_{\beta}(x)$, then by the assumption $\lim_{x\to 0}\frac{g_{\beta}(x)}{x}<1$, there exists $x_1\in (0,x)$ such that $x_1=g_{\beta}(x_1)$. This is impossible by the definition of $x_{\beta}$. Therefore, we must have $x<g_{\beta}(x)$.

{\bf Step 3}. let $\beta_1 \geq \beta_2 >0$. We claim that $x_{\beta_{1}} \leq x_{\beta_{2}}$. Suppose $x_{\beta_{1}}>x_{\beta_{2}}$. Then, we have $x_{\beta_2}\in (0,x_{\beta_1})$. According to the step 2 above, we have $g_{\beta_1}(x_{\beta_2} )<x_{\beta_2}$. However, by Assumption \ref{assum-poverty} and $\beta_1>\beta_2$, we have $g_{\beta_1}(x_{\beta_2})>g_{\beta_2}(x_{\beta_2})=x_{\beta_2}$, a contradiction. By consequence, we have $x_{\beta_1}\leq x_{\beta_2}$.

\end{proof}

\begin{proof}[{\bf Proof of Proposition \ref{comparativebeta}}]
Let $k_0<x_{\beta}$. We have $k_1=g_{\beta}(k_0)<k_0<x_{\beta}$. By induction, we have $k_{t+1}<k_t<x_{\beta}$ for any $t$. So, $k_t$ converges to some value, say $k^*$. Since $k^*<k_0<x_{\beta}$ and $k^*=g_{\beta}(k^*)$, we must have $k^*=0$. Indeed, if $k^*>0$, we must have, by Assumption \ref{assum-poverty},  $g_{\beta}(k^*)<k^*$, a contradiction.

\end{proof}

	
	

\subsection{Proofs for Section \ref{trap-wariness}}
\label{trap-wariness-proof}

\begin{proof}[{\bf Proof of Proposition \ref{poverty2_1}}]
 If $f'(0)<\frac{1}{\beta_1}$,  Corollary \ref{basic1} implies that $ nk_{t+1} = s_{\beta_1}(\omega(k_t), f'(k_{t+1})),\forall t$. 
 
 According to Lemma \ref{gbeta}, this system is equivalent to $k_{t+1}=g_{\beta_1}(k_t)$ where $g_{\beta_1}$ is continuous, strictly increasing. By consequence, the equilibrium capital path $k_t$ is unique, monotonic, and converges.

By Lemma \ref{comparativebeta} and $\beta_1\equiv \beta+\gamma$, the threshold $x_{\beta_1}$ is a poverty trap and it is decreasing in the wariness level $\gamma$ and $x_{\beta_1}<x_{\beta}$.
\end{proof}

\begin{proof}[{\bf Proof of Proposition \ref{poverty2}}]
		
 If $f'(\infty)>\frac{1}{\beta_2}$,  Corollary \ref{basic1} implies that $ nk_{t+1} = s_{\beta_2}(\omega(k_t), f'(k_{t+1})),\forall t$. 
 
 According to Lemma \ref{gbeta}, this system is equivalent to $k_{t+1}=g_{\beta_2}(k_t)$ where $g_{\beta_2}$ is continuous, strictly increasing.
 
By Lemma \ref{comparativebeta} and $\beta_2\equiv \frac{\beta}{1+\gamma}$, the threshold $x_{\beta_2}$ is a poverty trap and it is increasing in the wariness level $\gamma$ and $x_{\beta_2}>x_{\beta}$.
\end{proof}

\begin{proof}[{\bf Proof of Proposition \ref{ces_l2}}]
    We also present some computations. \begin{align*}
	f(k)&=A(ak^{\rho}+1-a)^{\frac{1}{\rho}}+Bk, \quad 
	f'(k)=Aak^{\rho-1}(ak^{\rho}+1-a)^{\frac{1}{\rho}-1}+B\\	
		kf'(k)&=Aak^{\rho}(ak^{\rho}+1-a)^{\frac{1}{\rho}-1}+Bk\\
			\omega(k)&=f(k)-kf'(k)=A(1-a)(ak^{\rho}+1-a)^{\frac{1}{\rho}-1}.
	\end{align*}
We focus on the case $\rho<0$. We have $f(0^+) =0$, $f(\infty)=A(1-a)^{\frac{1}{\rho}}$ if $B=0$, $f(\infty)=\infty$ if $B>0$. $f'(0)=Aa^{\frac{1}{\rho}}+B$, $f'(\infty)=B$. Moreover,  $\omega(k)$ is increasing (because $f$ is concave). $\omega(0)=0,$ $\omega(\infty)=A(1-a)^{\frac{1}{\rho}}$ for $\rho<0$.

Since $\beta B>1+\gamma$, we have $f'(\infty)=B>\frac{1}{\beta_2}$, by Corollary \ref{basic1}, the equilibrium capital path satisfies $ nk_{t+1} = s_{\beta_2}(\omega(k_t), f'(k_{t+1}))$ $\forall t$.
However, under the logarithmic utility $u(c)=ln(c)$, we find that $s_{\beta_2}(\omega(k_t), f'(k_{t+1}))=\frac{\beta_2}{1+\beta_2}\omega(k_t)$. By consequence, $$k_{t+1}=\frac{\beta_2}{(1+\beta_2)n}A(1-a)(ak^{\rho}+1-a)^{\frac{1}{\rho}-1}=g_{\beta_2}(k_t).$$ Therefore, by applying Lemma \ref{CesFull} with $\beta$ is replaced by $\beta_2$, We obtain our results.
\end{proof}

\subsection{Proofs for Section \ref{highwariness}}
\label{highwarinessProofs}
\begin{proof}[{\bf Proof of Proposition \ref{poverty-strongcondition}}]
We have $\frac{k_{t+1}}{k_t}\leq \frac{\omega(k_t)}{nk_t(1+f'(k_{t+1}))}\leq \frac{\omega(k_t)}{nk_t(1+f'(0))}$ for all $k_t>0$. 
	
Condition $\lim_{k\to 0}\frac{\omega(k)}{k(1+f'(0))}<n$ implies that there exists $\bar k>0$ and $\gamma\in (0,1)$ such that for all $k<\bar k$, we have $\frac{\omega(k)}{nk(1+f'(0))}<\gamma$. This implies that $k_t$ converges to zero for any $k_0\in (0,\bar{k})$.
\end{proof}

\begin{proof}[{\bf Proof of Proposition \ref{lem_r3}}]
It is clear that the condition $\frac{1}{\beta_1}<f'(\infty)<f'(0)<\frac{1}{\beta_2}$ holds if $\gamma = \infty $. Thanks to point 4 in Corollary \ref{basic1}, the dynamics of capital is given by \begin{align*}
	nk_{t+1} =	\frac{A(1-a)(ak_t^{\rho}+1-a)^{\frac{1}{\rho}-1}}{1+f'(k_{t+1}) } , ~~~~ \forall t.
\end{align*}
	\begin{enumerate}
		\item If $\rho>0$ then $\frac{\omega(x)}{h(x)}$ is increasing. Hence the equation $\frac{\omega(x)}{h(x)}=n$ has a unique solution. 
		\item  Consider the case  $\rho<0$. Recall that   $           H(x)=\frac{A(1-a)}{x(ax^\rho +1-a) ^{1-\frac{1}{\rho}} + aA x^{\rho}}$.  We have $\lim\limits_{x\rightarrow \infty} H(x)= \lim\limits_{x\rightarrow 0} H(x) =0$.
Consider $b(x)\equiv aAx^{\rho}+x(ax^{\rho}+1-a)^{1-\frac{1}{\rho}}$. We have 
\begin{align*}
    b'(x)&=\rho Aax^{\rho-1}+(ax^{\rho}+1-a)^{1-\frac{1}{\rho}}+(\rho-1)ax^{\rho}(ax^{\rho}+1-a)^{1-\frac{1}{\rho}-1}\\
    &=(ax^{\rho}+1-a)^{\frac{-1}{\rho}}\Big((\rho-1)ax^{\rho}+ax^{\rho}+1-a+\rho Aax^{\rho-1} (ax^{\rho}+1-a)^{\frac{1}{\rho}}\Big)\\
    &=(ax^{\rho}+1-a)^{\frac{-1}{\rho}} \Big(1-a+\rho ax^{\rho}+\rho Aax^{\rho-1} (ax^{\rho}+1-a)^{\frac{1}{\rho}}\Big).
\end{align*}
Since $b(x)=\frac{A(1-a)}{H(x)}$, we have  $b'(x)=- H'(x)\frac{A(1-a)}{(H(x))^2}$. By consequence, 
		\begin{align*}
			H'(x)\frac{A(1-a)}{(H(x))^2}  =X^{-\frac{1}{\rho}}\left[-\rho a x^{\rho-1} \left(x+ AX^{\frac{1}{\rho}}\right)-(1-a)\right],
		\end{align*}
		where  $X = ax^\rho + 1-a$. Since $\rho<0$, 
we see that $-\rho a x^{\rho-1} \left(x+ A(ax^\rho + 1-a)^{\frac{1}{\rho}}\right)$ is increasing in $k$. Moreover, we observe that
		\begin{align*}
			&	\lim\limits_{x\rightarrow 0^+} \left[1-a + a\rho x^{\rho-1} \left(x+ A(ax^\rho + 1-a)^{\frac{1}{\rho}}\right )\right] = -\infty  \\
			&		\lim\limits_{x\rightarrow +\infty } \left[1-a + a\rho x^{\rho-1} \left(x+ A(ax^\rho + 1-a)^{\frac{1}{\rho}}\right )\right] =1-a>0
		\end{align*}
		Hence, there exists a unique $k^*>0 $ such that $H'(k^*) =0$, $H'(x)>0$ for $x<k^*$ and $H'(x)<0$ for $x>k^*$. In other words, $H$ achieves the maximum at $k^*$. 
 
        \begin{enumerate}
			\item If $M\equiv \max_{x\geq 0}\frac{\omega(x)}{h(x)}<n $, then  there is no positive steady state. Furthermore, 
			\begin{align*}
				\frac{h(k_{t+1})}{h(k_t)} = \frac{\omega(k_t)}{nh(k_t)}<\frac{M}{n}<1.
			\end{align*}
			So, $h(k_t)$ converges to zero and $h(k_{t+1})< h(k_{t})$ (which	implies that $k_{t+1} <{k_t}$ because $h$ is increasing). Thus $\lim\limits_{t\rightarrow \infty } k_t= 0$ for all initial capital $k_0$. 
			\item If $M=n$, then $\frac{\omega(k)}{h(k)} =n $ has a unique solution $k^*$. 
			\item If $M>n$, then the equation $\frac{\omega(k)}{h(k)}=n$ has two positive solutions $\bar k_1<\bar k _2$ with $\bar k_1<k^*<\bar k _2$.     

		\end{enumerate}
\item 
{\bf Comparative statics}. We focus on the case $\rho<0$ and $M>n$.

{\bf Role of $a$}. We claim that $\bar{k}_1$ is increasing in $a$.  Consider a fixed point $H(x)=n$, i.e., $b(x)-d(1-a)=0$, where $d\equiv \frac{A}{n}$. Taking the derivative with respect to $a$ of both sides, we have
\begin{align*}
    b'(x)x'(a)+\frac{\partial b(x)}{\partial a}+d=0 \Longleftrightarrow  x'(a)=\frac{\frac{\partial b(x)}{\partial a}+d}{-b'(x)}\\
    \Longleftrightarrow  x'(a)=\frac{d+Ax^{\rho}+x\frac{\rho-1}{\rho}(x^{\rho}-1)(ax^{\rho}+1-a)^{\frac{-1}{\rho}}}{-b'(x)}.
\end{align*}
Consider $N(x)\equiv  d+Ax^{\rho}+x\frac{\rho-1}{\rho}(x^{\rho}-1)(ax^{\rho}+1-a)^{\frac{-1}{\rho}}$.
Since $b(x)\equiv aAx^{\rho}+x(ax^{\rho}+1-a)^{1-\frac{1}{\rho}}=d(1-a)$, we have 
\begin{align}\notag
    N(x)\frac{}{}&=\frac{aAx^{\rho}+x(ax^{\rho}+1-a)^{1-\frac{1}{\rho}}}{1-a}+Ax^{\rho}+x\frac{\rho-1}{\rho}(x^{\rho}-1)(ax^{\rho}+1-a)^{\frac{-1}{\rho}}\\
\label{nx3}    &=\frac{Ax^{\rho}}{1-a}+(ax^{\rho}+1-a)^{\frac{-1}{\rho}}x\big( (\frac{1}{1-a}-\frac{1}{\rho})x^{\rho}+\frac{1}{\rho}\big).
\end{align}
Then, 
\begin{align}\label{nx}  
N(x)\frac{(ax^{\rho}+1-a)^{\frac{1}{\rho}}}{x} = \frac{Ax^{\rho-1}}{1-a}(ax^{\rho}+1-a)^{\frac{1}{\rho}}+\big(\frac{1}{1-a}-\frac{1}{\rho}\big)x^{\rho}+\frac{1}{\rho}
\end{align}

Since the lowest fixed point $\bar{k}_1$ is lower than $k^*$, we have $H'(\bar{k}_1)>0$. This implies that $-b'(\bar{k}_1)>0$ (recall that $b'(x)=- H'(x)\frac{A(1-a)}{(H(x))^2}$ for any $x$). By consequence, we have 
$$Sign((\bar{k}_1)'(a))=Sign(N(\bar{k}_1)).$$
Condition $b'(\bar{k}_1)<0$ is equivalent to $ 1-a+\rho a\bar{k}_1^{\rho}+\rho Aa\bar{k}_1^{\rho-1} (a\bar{k}_1^{\rho}+1-a)^{\frac{1}{\rho}}<0$, which implies that 
\begin{align*}
  A\bar{k}_1^{\rho-1} (a\bar{k}_1^{\rho}+1-a)^{\frac{1}{\rho}}>\bar{k}_1^{\rho}+\frac{1-a}{-\rho a}.
\end{align*}
Combining this with (\ref{nx}) and $\rho<0$, we have $N(\bar{k}_1)\frac{(a\bar{k}_1^{\rho}+1-a)^{\frac{1}{\rho}}}{\bar{k}_1}>0.$ So, $(\bar{k}_1)'(a)>0$.	

{\bf The role of $\rho$.}  Consider a fixed point $H(x)=n$, i.e.,  $b(x)\equiv aAx^{\rho}+x(ax^{\rho}+1-a)^{1-\frac{1}{\rho}}=\frac{A}{n}(1-a)$.
Then, we have
\begin{align*}
b'(x)x'(\rho)+\frac{\partial b(x)}{\partial \rho}=0. 
\end{align*}
Thus,
\begin{align*}
    b'(x) x'(\rho) =& -\rho a A x^{\rho - 1}- x (ax^{\rho}+1-a)^{1-\frac{1}{\rho}} \left(\frac{1}{\rho^2} \ln (ax^{\rho } + 1-a) 
    + \frac{ a(\rho-1) x^{\rho-1}}{ax^{\rho}+ 1-a}
    \right). 
\end{align*}
Recall that $b'(\bar k_1)<0$. Hence
\begin{align*}
   Sign(\bar{k}_1'(\rho))=Sign\Big(\rho a A \bar{k}_1^{\rho - 1}+ \bar{k}_1 (a\bar{k}_1^{\rho}+1-a)^{1-\frac{1}{\rho}} \left(\frac{1}{\rho^2} \ln (a\bar{k}_1^{\rho } + 1-a) 
    + \frac{ a(\rho-1) \bar{k}_1^{\rho-1}}{a\bar{k}_1^{\rho}+ 1-a}
    \right)\Big)
\end{align*}
Observe that if $k_1>1$ and $\rho<0$, then $Sign(\bar{k}_1'(\rho))<0$.
    \end{enumerate}
\end{proof}

\subsection{Proofs for Section \ref{intermidiatesection}}
\label{intermidiatesection-proofs}
\begin{proof}[{\bf Proof of Proposition \ref{poverty-middle}}]

 Recall that the dynamics of equilibrium capita path is given by \begin{align}
nk_{t+1}=s(\omega(k_t), f'(k_{t+1}))= \begin{cases}
			s_{\beta_1}(\omega(k_t),  f'(k_{t+1})) & \text{ if }  f'(k_{t+1}) < \frac{1}{\gamma+\beta}\\
			\frac{\omega(k_t)}{1+ f'(k_{t+1})} &\text{ if }\frac{1}{\gamma+\beta} \leq  f'(k_{t+1})\leq \frac{1+\gamma}{\beta}\\
			s_{\beta_2}(\omega(k_t),  f'(k_{t+1}))&\text{ if }  f'(k_{t+1}) >  \frac{1+\gamma}{\beta}.
		\end{cases}
\end{align}


Let $k_0$ satisfy $0<k_0<x_{\beta_1}\leq x_{\beta_2}$ where $x_{\beta_1},x_{\beta_2}$ are defined by Assumption \ref{assum-poverty}.

There are three cases.
\begin{enumerate}
\item If $nk_1=s_{\beta_1}(\omega(k_0),  f'(k_{1}))$ and $  f'(k_{1}) < \frac{1}{\gamma+\beta}$. By Lemma \ref{gbeta},  we have 
\begin{align}
k_1=g_{\beta_1}(k_0).
\end{align}
Since $k_0\in (0,x_{\beta_1})$ (where $x_{\beta_1}$ is defined in Assumption \ref{assum-poverty}), we have $g_{\beta_1}(k_0)<k_0$. So, $k_1<k_0$.

\item If $nk_1=s_{\beta_2}(\omega(k_0),  f'(k_{1}))$ and $  f'(k_{1}) > \frac{1+\gamma}{\beta}$. By Lemma \ref{gbeta},  we have 
\begin{align}
k_1=g_{\beta_2}(k_0).
\end{align}
Since $k_0\in (0,x_{\beta_2})$ (defined in Assumption \ref{assum-poverty}), we have $g_{\beta_2}(k_0)<k_0$. So, $k_1<k_0$.

\item If $nk_{1}=\frac{\omega(k_0)}{1+ f'(k_{1})} $ and $\frac{1}{\gamma+\beta} \leq  f'(k_{1})\leq \frac{1+\gamma}{\beta}$. We have $k_1\geq (f')^{-1}(\frac{1+\gamma}{\beta}).$ By consequence, we have 
\begin{align*}
\omega(k_0)=nk_1\big(1+ f'(k_{1})\big)\geq \left(1+\frac{1}{\gamma+\beta}\right)
\left(f'\right)^{-1}\big(\frac{1+\gamma}{\beta}\big)
\end{align*}
which is a contradiction to our assumption (\ref{omega(k0)}).

\end{enumerate}

To sum up, we obtain that $k_1<k_0$.

By induction, we have $k_{t+1}<k_t<\cdots <k_0$ for any $t\geq 1$. So, $k_t$ converges to some value say $k_*$. Of course, we have $k_*<{x_{poverty}}$. We claim that $k_*=0$. Suppose that $k_*>0$. There are three cases.

1. If $nk_*=\frac{\omega(k_*)}{1+f'(k_*)}$ and $\frac{1}{\gamma+\beta} \leq  f'(k_{*})\leq \frac{1+\gamma}{\beta}$. This cannot happen because of the assumption (\ref{omega(k0)}) and $k_*<k_0$.

2. If $nk_*=s_{\beta_1}(\omega(k_*),f'(k_*))$. This is impossible because $k_*\in (0,x_{\beta_1})$ and Assumption \ref{assum-poverty}.

3. If $nk_*=s_{\beta_2}(\omega(k_*),f'(k_*))$. This is also impossible because $k_*\in (0,x_{\beta_2})$ and Assumption \ref{assum-poverty}.

Finally, we get that $k_*=0$, i.e., the equilibrium capital converges to zero.

{\bf Comparative statics}.	On the one hand, it is known that $x_{\beta_1}$ is decreasing in wariness level $\gamma$. 

In the other hand, we have  $\frac{1+\gamma}{\beta}$ is increasing in $\gamma $ and so are  $\left(f'\right)^{-1} \left(\frac{1+\gamma}{\beta}\right)$ and $L= \left(1+\frac{1}{\gamma+\beta }\right)\left(f'\right)^{-1} \left(\frac{1+\gamma}{\beta}\right) $
because the function $f'$ is decreasing. The increasing property of $\omega$ leads to $\omega^{-1}(L)$ is decreasing in $\gamma$ too. In consequence $x_{poverty} = \min (x_{\beta_1}, \omega^{-1}(L))$ is decreasing in $\gamma$. 
\end{proof}

\begin{proof}[{\bf Proof of Proposition \ref{poverty_x1}}]
		Thanks to Proposition \ref{cesll}, $x_{\beta_1}$ exists if $\rho<0$ and $	-A\rho a^{\frac{1}{\rho}}(1-\rho)^{\frac{1}{\rho}-1} \geq \frac{n(1+\beta_1)}{\beta_1}$. The condition $f'(x_{\beta_1})<\frac{1}{\beta_1}$ guarantees that $x_{\beta_1}$ is a steady state. 

Thanks to Lemma \ref{h_increase} and Proposition \ref{unique-convergence}, there exists a unique equilibrium. Moreover, $k_{t+1}=G(k_t)$, and $k_t$ converges.

According to Corollary \ref{corollary-ces}, we have, for any $t$,
\begin{align*}
k_{t+1}= \begin{cases}
g_{\beta_1}(k_t)& \text{ if } f'(k_{t+1}) < \frac{1}{\gamma+\beta}\equiv \frac{1}{\beta_1}\\
\frac{A(1-a)(ak_t^{\rho}+1-a)^{\frac{1}{\rho}-1}}{n (1+f'(k_{t+1})) } &\text{ if }\frac{1}{\gamma+\beta} \leq f'(k_{t+1})\leq \frac{1+\gamma}{\beta}\\
 			g_{\beta_2}(k_t)&\text{ if } f'(k_{t+1}) >  \frac{1+\gamma}{\beta} \equiv \frac{1}{\beta_2}
 		\end{cases}.
 	\end{align*}
If  (i) $H(k)<k$ $\forall k>0$ or (ii) $\bar k_1>x_{\beta_1}$ where $\bar k_1$ is the smallest positive solution to the equation $H(k)=k$, then  $x_{\beta_1}$ is the smallest fixed value satisfying  $G(x)=x>0$. Then, we immediately have the three statements.
	\end{proof}

\begin{proof}[{\bf Proof of Proposition \ref{steady0}}]
We will prove that $G(k)<k$ for all $k>0$ where $G$ is the dynamic function defined in Proposition \ref{unique-convergence}; this implies that the capital path $(k_t)_{t\geq 0}$ decreasingly converges  to $0$ for all initial capital $k_0$. 
	
We prove a proof for part \ref{part1} of Proposition \ref{steady0}. Similar argument can be applied to the remaining parts. 

Since $\beta_2\leq \beta_1$, we have $\frac{\beta_1}{1+\beta_1}\geq  \frac{\beta_2}{1+\beta_2}$ and then $M_1\geq M_2$. To show that $G(k)<k$ for all $k>0$, we first observe that $s_{\beta_i}(\omega(k), f'(g(k)))< \omega(k)$ for all $k>0$. There are three possible cases.
	\begin{enumerate}
		\item If $f'(G(k))<\frac{1}{\beta_1}$ then we have
		\[
		G(k) = \frac{1}{n}s_{\beta_1}(\omega(k), f'(G(k))) < \frac{\omega(k)}{n} = \left(\frac{\omega(k)}{nk}\right)k \leq \frac{M_1}{n} k <k
		\]
		\item If $f'(G(k))>\frac{1}{\beta_2}$ then using the same argument as above, we have
		\[
		G(k) = \frac{1}{n}s_{\beta_2}(\omega(k), f'(G(k))) < \frac{M_2}{n}k <k
		\]
		\item  If $\frac{1}{\beta_1}\leq f'(G(k))\leq \frac{1}{\beta_2}$ then 
$
			G(k)= \frac{\omega(k)}{n(1+ f'(G(k)))}
$.  Remind that $h(k)= k(1+f'(k))$ is a decreasing function in $k$ and we also have
\begin{align*}
	\frac{h(G(k))}{h(k)} = \frac{\omega(k)}{nh(k)}\leq \frac{M_3}{n}<1.
\end{align*}
Thus $h(G(k))< h(k)$  which implies that $g(k)<k$.

	\end{enumerate}
		By induction, starting at any inital capital $k_0>0$, we have $k_{t+1} <k_t$ for all $t$ and then $(k_t)_{t\geq 0}$ must converge to a limit $k^*$. If $k^*$ is positive then it must satisfy one of the following equations
		\begin{align*}
			nk= s_{\beta_1}(\omega(k), f'(k)) \\
			nk= s_{\beta_2}(\omega(k), f'(k)) \\
			nk = \frac{\omega(k)}{1+ f'(k)}
		\end{align*}
		However under the conditions that $M_i<n$ for $i=1, 2, 3$,  none of these above equations has solution. Therefore the capital path $(k_t)$ must converge to $0$. 
\end{proof}

\section{Logarithmic utility and CES production functions}
\label{specialcase}
As in Corollary \ref{corollary-ces}, assume $u(c)=ln(c)$ and a CES production function: 
\begin{align}	\label{ces}
F(K,L)=A((aK^{\rho}+(1-a)L^{\rho})^{\frac{1}{\rho}}, \text{ where $A>0$, $a\in (0,1)$ and $\rho\not=0,\rho<1$.}\end{align}

 Recall that the elasticity of factor substitution is $\frac{1}{1-\rho}$.\footnote{If $\rho$ approaches 1, we have a linear or perfect substitutes function $(\frac{1}{1-\rho}$ tends to $+\infty)$: $F(K,L)=A((aK+(1-a)L)$.  If $\rho$ approaches zero in the limit, we get the Cobb-Douglas production function:  $F(K,L)=AK^{a}L^{1-a}$.  If $\rho$ approaches negative infinity we get the Leontief or perfect complements production function $(\frac{1}{1-\rho}$ tends to $0$): $F(K,L)=A\min(K,L)$.}

We also present some computations. \begin{align*}
	f(k)&=F(k,1)=A(ak^{\rho}+1-a)^{\frac{1}{\rho}}, \quad 
	f'(k)=Aak^{\rho-1}(ak^{\rho}+1-a)^{\frac{1}{\rho}-1}\\	
		kf'(k)&=Aak^{\rho}(ak^{\rho}+1-a)^{\frac{1}{\rho}-1}\\
			\omega(k)&=f(k)-kf'(k)=
            A(1-a)(ak^{\rho}+1-a)^{\frac{1}{\rho}-1}.
	\end{align*}
We state (without proofs) useful properties of the CES production function.
\begin{lemma}\label{rhopositive}
If $\rho\in (0,1)$, then we have the following properties.
\begin{enumerate}
\item $f(0)=A(1-a)^{\frac{1}{\rho}}$, $f(\infty)=\infty$.  $f'(0)=+\infty$, $f'(\infty)=Aa^{\frac{1}{\rho}}$. 
\item $kf'(k)$ is increasing in $k$. $\lim_{t\to 0}kf'(k)=0$. $\lim_{t\to \infty}kf'(k)=\infty$.
\item $\omega(k)$ is increasing in $k$ because $\rho<1$. $\omega(0)=A(1-a)^{\frac{1}{\rho}}$, $\omega(\infty)=+\infty$

\item Both  $\frac{\omega(k)}{k}$ and $\frac{\omega(k)}{k(1+f'(k))}$ are decreasing in $k$ for $k>0$.
\end{enumerate}

In this case $\rho\in (0,1)$, we can apply results in Proposition \ref{unique-convergence}.
\end{lemma}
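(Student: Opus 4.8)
The plan is to read everything off the closed-form expressions for $f$, $kf'$ and $\omega$ displayed immediately above the statement, and then apply elementary calculus: evaluation of limits, and sign analysis of derivatives (mostly via logarithmic derivatives). Throughout one uses $0<\rho<1$, so that $1/\rho-1>0$ and $\rho-1<0$, and that $k\mapsto ak^{\rho}+1-a$ is strictly increasing from $1-a$ at $k=0$ to $+\infty$ as $k\to\infty$.

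First I would dispose of the boundary values in parts 1 and 3. Substituting $k=0$ into $f(k)=A(ak^{\rho}+1-a)^{1/\rho}$ and $\omega(k)=A(1-a)(ak^{\rho}+1-a)^{1/\rho-1}$ gives $f(0)=\omega(0)=A(1-a)^{1/\rho}$; since both exponents $1/\rho$ and $1/\rho-1$ are positive, letting $k\to\infty$ gives $f(\infty)=\omega(\infty)=+\infty$. For $f'(k)=Aak^{\rho-1}(ak^{\rho}+1-a)^{1/\rho-1}$: as $k\to0^{+}$ the factor $k^{\rho-1}$ blows up while the bracket tends to $(1-a)^{1/\rho-1}>0$, so $f'(0)=+\infty$; as $k\to\infty$ the bracket is asymptotic to $a^{1/\rho-1}k^{1-\rho}$, whence $f'(k)\to Aa^{1/\rho}$. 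Monotonicity of $\omega$ I would obtain from $\omega'(k)=-kf''(k)>0$, valid because $f$ is strictly concave when $\rho<1$.

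For part 2 I would substitute $z=k^{\rho}$, strictly increasing in $k$ since $\rho>0$, which reduces the claim to showing $\varphi(z)\equiv z(az+1-a)^{1/\rho-1}$ is increasing; a one-line computation gives $\varphi'(z)=(az+1-a)^{1/\rho-2}\bigl(1-a+\tfrac{a}{\rho}z\bigr)>0$. The two limits of $kf'(k)$ then follow from $kf'(k)=Aa\,z\,(az+1-a)^{1/\rho-1}$ and the same asymptotics as for $f'$: $kf'(k)\to0$ as $k\to0^{+}$, and $kf'(k)\sim Aa^{1/\rho}k\to+\infty$ as $k\to\infty$. For part 4 I would take logarithmic derivatives: one finds $\tfrac{d}{dk}\ln\!\bigl(\omega(k)/k\bigr)=\tfrac{(1-\rho)ak^{\rho-1}}{ak^{\rho}+1-a}-\tfrac1k=-\tfrac{\rho ak^{\rho}+1-a}{k(ak^{\rho}+1-a)}<0$, which gives the first monotonicity; for $H(k)=\omega(k)/\bigl(k(1+f'(k))\bigr)$ one has $\tfrac{d}{dk}\ln H=\tfrac{d}{dk}\ln\!\bigl(\omega(k)/k\bigr)-\tfrac{f''(k)}{1+f'(k)}$, and after substituting the identity $\tfrac{f''(k)}{f'(k)}=-\tfrac{(1-\rho)(1-a)}{k(ak^{\rho}+1-a)}$ (already derived in the proof of Lemma \ref{h_increase}) and clearing the positive factor $k(ak^{\rho}+1-a)$, the sign of $\tfrac{d}{dk}\ln H$ is that of $-(\rho ak^{\rho}+1-a)+(1-\rho)(1-a)\tfrac{f'(k)}{1+f'(k)}$, which is negative since $f'(k)/(1+f'(k))<1$ forces the second term below $1-a\le\rho ak^{\rho}+1-a$. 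Finally, since $kf'(k)$ is increasing, Lemma \ref{h_increase} yields that $f'(k)u'(nkf'(k))$ is strictly decreasing, and its case $0<\rho<1$ yields that $h(k)=k+kf'(k)$ is strictly increasing; hence the hypotheses of Proposition \ref{unique-convergence} hold, justifying the last sentence of the lemma.

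I do not expect a serious obstacle: the statement is essentially a bookkeeping exercise on the CES formulas. The only place requiring a little care is the second monotonicity in part 4, since $\omega(k)/k$ is decreasing while $1/(1+f'(k))$ is increasing, so the naive ``product of monotone functions'' heuristic is inconclusive and one genuinely needs the logarithmic-derivative computation together with the crude bound $f'(k)/(1+f'(k))<1$.
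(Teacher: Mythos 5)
Your proof is correct. Note that the paper states Lemma \ref{rhopositive} without proof, so there is no official argument to compare against; your direct verification from the closed-form CES expressions is exactly the kind of bookkeeping the authors evidently had in mind, and your two key computations reproduce ingredients that do appear elsewhere in the paper: the sign analysis of $\frac{d}{dk}\ln(\omega(k)/k)$ coincides with the derivative $W'(x)$ computed in the proof of Lemma \ref{CesFull} (which is negative when $\rho>0$), and the identity $\frac{f''(k)}{f'(k)}=-\frac{(1-\rho)(1-a)}{k(ak^{\rho}+1-a)}$ is the one derived in the proof of Lemma \ref{h_increase}. You also correctly flag, and correctly resolve, the only step with genuine content: the monotonicity of $\omega(k)/\bigl(k(1+f'(k))\bigr)$, where the bound $(1-\rho)(1-a)\frac{f'(k)}{1+f'(k)}<1-a\leq \rho a k^{\rho}+1-a$ settles the sign. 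Your justification of the final sentence (increasing $kf'(k)$ gives the decreasing-return condition via Lemma \ref{h_increase}, and $0<\rho<1$ gives $h$ increasing, so Proposition \ref{unique-convergence} applies) is likewise sound.
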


\begin{lemma}\label{ces-rho<0}
The case $\rho<0$.

\begin{enumerate}
\item  $f(0^+) =0$, $f(\infty)=A(1-a)^{\frac{1}{\rho}}$. $f'(0)=Aa^{\frac{1}{\rho}}$, $f'(\infty)=0$.
\begin{align}
\lim_{t\to 0}\frac{f(k)}{k}=\lim_{t\to 0}\frac{A(ak^{\rho}+1-a)^{\frac{1}{\rho}}}{k}=\lim_{t\to 0}A(a+\frac{1-a}{k^{\rho}})^{\frac{1}{\rho}}= Aa^{\frac{1}{\rho}}.
\end{align}
\item   $\omega(k)$ is increasing (because $f$ is concave). $\omega(0)=0,$ $\omega(\infty)=A(1-a)^{\frac{1}{\rho}}$ for $\rho<0$.
\item Both $\frac{\omega(k)}{k}$ and $\frac{\omega(k)}{k(1+f'(k))}$ may be -non-monotonic in $k$. 
\end{enumerate}
\end{lemma}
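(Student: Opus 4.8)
The plan is to read everything off the explicit formulas recorded immediately above the statement, namely $f(k)=A(ak^{\rho}+1-a)^{1/\rho}$, $f'(k)=Aak^{\rho-1}(ak^{\rho}+1-a)^{1/\rho-1}$, and $\omega(k)=A(1-a)(ak^{\rho}+1-a)^{1/\rho-1}$, combined with one algebraic normalization: factor the dominant power of $k$ out of the bracket, $ak^{\rho}+1-a=k^{\rho}(a+(1-a)k^{-\rho})$. Because $\rho<0$ we have $k^{\rho}\to+\infty$ as $k\to0^{+}$ and $k^{\rho}\to0$ as $k\to+\infty$, the mirror image of the $\rho\in(0,1)$ situation in Lemma~\ref{rhopositive}; after that the whole argument is bookkeeping of easily signed exponents (e.g. $1/\rho<0$, $1/\rho-1<0$, $\rho-1<0$, $-\rho>0$).

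For part~(1): plugging the factorization into $f$ gives $f(k)=Ak(a+(1-a)k^{-\rho})^{1/\rho}$, hence $f(k)/k=A(a+(1-a)k^{-\rho})^{1/\rho}$; letting $k\to0^{+}$ (so $k^{-\rho}\to0$) yields $f(0^{+})=0$ together with the displayed identity $\lim_{k\to0}f(k)/k=Aa^{1/\rho}$, while for $k\to\infty$ one simply uses $ak^{\rho}+1-a\to1-a$ to get $f(\infty)=A(1-a)^{1/\rho}$. The same factorization collapses the derivative to $f'(k)=Aa(a+(1-a)k^{-\rho})^{1/\rho-1}$, which tends to $Aa\cdot a^{1/\rho-1}=Aa^{1/\rho}$ as $k\to0^{+}$ and to $0$ as $k\to\infty$ since $1/\rho-1<0$ and $a+(1-a)k^{-\rho}\to\infty$.

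For part~(2), monotonicity of $\omega$ is quickest from $\omega'(k)=-kf''(k)>0$, using that $f$ is strictly concave (Assumption~\ref{assum2}); equivalently, $k\mapsto ak^{\rho}+1-a$ is decreasing and $x\mapsto x^{1/\rho-1}$ is decreasing on $(0,\infty)$, so the composition defining $\omega$ is increasing. The boundary values then follow by inserting $ak^{\rho}+1-a\to\infty$ (which gives $\omega(0)=0$, because $1/\rho-1<0$) and $ak^{\rho}+1-a\to1-a$ (which gives $\omega(\infty)=A(1-a)(1-a)^{1/\rho-1}=A(1-a)^{1/\rho}$).

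For part~(3) I would exhibit the non-monotonicity rather than merely assert it (it is in fact unavoidable for every $\rho<0$). Factoring once more, $\omega(k)/k=A(1-a)k^{-\rho}(a+(1-a)k^{-\rho})^{1/\rho-1}$, which behaves like $A(1-a)a^{1/\rho-1}k^{-\rho}\to0$ as $k\to0^{+}$ and like $A(1-a)^{1/\rho}k^{-1}\to0$ as $k\to\infty$, while being strictly positive for every $k>0$; a strictly positive function with zero limits at both ends cannot be monotone (its single peak, attained at $x_{0}=((1-a)/(-a\rho))^{1/\rho}$ with value $-A\rho a^{1/\rho}(1-\rho)^{1/\rho-1}$, is exactly what Lemma~\ref{CesFull} records). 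For $\omega(k)/(k(1+f'(k)))$ I would note that part~(1) and the monotonicity of $f'$ give $0<f'(k)<Aa^{1/\rho}$, so this ratio lies between $\frac{1}{1+Aa^{1/\rho}}\cdot\frac{\omega(k)}{k}$ and $\frac{\omega(k)}{k}$; both bounds vanish at $0$ and at $\infty$, so by squeezing the ratio does too while staying positive, which again rules out monotonicity (the same conclusion appears in the analysis of $H$ in the proof of Proposition~\ref{lem_r3}). There is no real obstacle here; the only care needed is the systematic sign tracking when $\rho<0$ and, in part~(3), choosing the correct dominant term inside $a+(1-a)k^{-\rho}$ at each endpoint ($a$ near $k=0$, $(1-a)k^{-\rho}$ near $k=\infty$) so that the asymptotic orders come out right.
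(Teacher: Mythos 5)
Your proposal is correct. The paper states Lemma \ref{ces-rho<0} without proof (it is introduced as ``useful properties'' given without proofs), and your computations are exactly the routine verification it omits: the factorization $ak^{\rho}+1-a=k^{\rho}\bigl(a+(1-a)k^{-\rho}\bigr)$ plus sign-tracking of the exponents handles part (1), concavity of $f$ (or the composition-of-decreasing-maps argument) handles part (2), and your limits at $0$ and $\infty$ handle part (3). In fact your part (3) is slightly stronger than the statement: since $\omega(k)/k$ and $\omega(k)/\bigl(k(1+f'(k))\bigr)$ are positive with zero limits at both endpoints, non-monotonicity is automatic for every $\rho<0$ rather than merely possible, which is consistent with the explicit maximum of $\omega(x)/x$ at $x_0$ computed in the proof of Lemma \ref{CesFull} and with the analysis of $H$ in Proposition \ref{lem_r3}.
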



 \begin{proof}[{\bf Proof of Lemma \ref{CesFull}}]
 We denote 
 	\begin{align}
 		W(x)\equiv \frac{\omega(x)}{x}=\frac{A(1-a)(ax^{\rho}+1-a)^{\frac{1}{\rho}-1}}{x} 
 	\end{align}
 	where $A>0,a\in (0,1)$.  	We have \begin{align*}W'(x)=&A(1-a)(ax^{\rho}+1-a)^{\frac{1}{\rho}-2}\frac{1}{x^2}\Big( -\rho ax^{\rho}-(1-a)\Big).
 	\end{align*}
 	
 	\begin{enumerate}
 		\item [(A)] If $\rho>0$ then the function $W$ is strictly decreasing.  Moreover, 
 		\begin{align*}
 			\lim_{x\to 0^+}\frac{\omega(x)}{x}&=\lim_{x\to \infty}\frac{\omega(x)}{x}=\infty,& \quad  \lim_{x\to \infty}\frac{\omega(x)}{x}&=\lim_{x\to \infty}\frac{\omega(x)}{x}=0.
 		\end{align*}
 		By consequence, there exists a unique $x_{\beta}>0$ such that $x_{\beta}=g_{\beta}(x_{\beta})$,  $g_{\beta}(x)>x, \forall x<x_{\beta}$, $g_{\beta}(x)<x, \forall x>x_{\beta}$. (Single crossing property.)
 		
 		\item [(B)] Consider the case $\rho<0$. 
 		
 		Observe that the derivative $W'(x)=0$ for $x=x_0$, $W'(x)>0$ for $x<x_0$, and $W'(x)<0$ for $x>x_0$. So, we have
 		\begin{align*}
 			\max_{x\geq 0}\frac{\omega(x)}{x}&=-A\rho a^{\frac{1}{\rho}}(1-\rho)^{\frac{1}{\rho}-1} \text{ when } x=x_0\equiv \left(\frac{1-a}{-a\rho}\right)^{\frac{1}{\rho}}.
 		\end{align*}
 		Note that $ \lim_{x\to 0^+}\frac{\omega(x)}{x}=\lim_{x\to \infty}\frac{\omega(x)}{x}=0.$
 		
 		Denote $D\equiv D(\beta)\equiv \frac{n(1+\beta)}{\beta}$. 
 		
 		If $\max_{x\geq 0}W(x)<D$, then $\omega(x)<D x$ for any $x>0$.
 		
 		If $\max_{x\geq 0}W(x)=D$, then the equation $\omega(x)=D x$ has a unique  solutions $x_0$, and $\omega(x)<D x$ for any $x\not=x_0,x>0$.

 		If $\max_{x\geq 0}W(x)>D$, then the equation $\omega(x)=D x$ has two solutions $x_1, x_2$ with for $0<x_1<x_0<x_2$. Moreover,  $\omega(x)<D x$ for $x\in (0,x_1)\cup (x_2,\infty)$ and  $\omega(x)>D x$ for $x\in (x_1,x_2)$. 
 		
 	\end{enumerate}

 	We see that $W'(x)=0$ iff $x=x_0$. $W'(x)>0$ iff $x<x_0$. So, $\max_{x\geq 0}W(x)=W(x_0)=-A\rho a^{\frac{1}{\rho}}(1-\rho)^{\frac{1}{\rho}-1}$.

{\bf Comparative statics}. $x_1$ is decreasing in $\beta$ but increasing in $a$. By Lemma \ref{poverty-function-g}, $x_1$ is decreasing in $\beta$.

We have $0<x_1<x_0<x_2$, $x_i=g_{\beta_1}(x_i)$ for $i=1,2$. Consider a fixed point $x$
\begin{align}\label{fixedpointx}
x=g_{\beta}(x)= \frac{\beta A(1-a)(ax^{\rho}+1-a)^{\frac{1}{\rho}-1}}{n(1+\beta)}.
\end{align}
Denote $d\equiv \frac{n(1+\beta)}{A\beta}$. We have $(1-a)(ax^{\rho}+1-a)^{\frac{1}{\rho}-1}-dx=0$. Taking the derivative with respect to $a$ of both sides of this equation, we have
\begin{align*}
  -dx'(a)-  (ax^{\rho}+1-a)^{\frac{1}{\rho}-1}+(1-a)(\frac{1}{\rho}-1) \big(x^{\rho}-1+a\rho x^{\rho-1}x'(a)\big)(ax^{\rho}+1-a)^{\frac{1}{\rho}-2}&=0\\
\Leftrightarrow  x'(a) =\frac{
  (ax^{\rho}+1-a)^{\frac{1}{\rho}-1} -(1-a)(\frac{1}{\rho}-1)\big(x^{\rho}-1\big)(ax^{\rho}+1-a)^{\frac{1}{\rho}-2}}{\Big(-d+(1-a)(1-\rho)ax^{\rho-1}(ax^{\rho}+1-a)^{\frac{1}{\rho}-2}\Big)}.
\end{align*}
Recall that for $x_1$, we have $(1-a)(ax_1^{\rho}+1-a)^{\frac{1}{\rho}-1}-dx_1=0$ and $x_1<x_0\equiv \left(\frac{1-a}{-a\rho}\right)^{\frac{1}{\rho}}$.
Denote $X\equiv ax_1^{\rho}+1-a$. Look at the denominator in the formula of $x'(a)$, we have
\begin{align*}
&-d+(1-a)(1-\rho)ax_1^{\rho-1}(ax^{\rho}+1-a)^{\frac{1}{\rho}-2}=\frac{1-a}{x_1}X^{\frac{1}{\rho}-2}(a(1-\rho)x_1^{\rho}-X)\\
&=\frac{1-a}{x_1}X^{\frac{1}{\rho}-2}((-a\rho)x_1^{\rho}-(1-a))>0
\end{align*}
because $x_1<x_0\equiv \left(\frac{1-a}{-a\rho}\right)^{\frac{1}{\rho}}$ and $\rho<0$, which imply that $x_1^{\rho}>\frac{1-a}{-a\rho}$. 
Now, we look at the numerator.
\begin{align*}
  N&\equiv (ax_1^{\rho}+1-a)^{\frac{1}{\rho}-1} -(1-a)(\frac{1}{\rho}-1)\big(x_1^{\rho}-1\big)(ax_1^{\rho}+1-a)^{\frac{1}{\rho}-2}\\
  &=-X^{\frac{1}{\rho}-2}\big((1-a)(1-\rho)\frac{1}{\rho}(x_1^{\rho}-1)-(ax_1^{\rho}+1-a)\big)\\
  &= -X^{\frac{1}{\rho}-2}\frac{(1-a-\rho)x_1^{\rho}-(1-a)}{\rho}.
\end{align*}
Observe that $\frac{1-a}{1-a-\rho}<\frac{1-a}{-a\rho}$. Combining with $x_1<x_0\equiv \left(\frac{1-a}{-a\rho}\right)^{\frac{1}{\rho}}$, we get that $N>0$. By consequence, we have $x_1'(a)>0$.


{\bf  Role of $\rho$}. We have $(1-a)(ax^{\rho}+1-a)^{\frac{1}{\rho}-1}-dx=0$, or, equivalently, $\ln(\frac{d}{1-a})+\ln(x)=(\frac{1}{\rho}-1)\ln(ax^{\rho}+1-a)$. Taking the derivative with respect to $\rho$ of both sides,   we have
\begin{align*}
\frac{x'(\rho)}{x}=\frac{-1}{\rho^2}\ln(ax^{\rho}+1-a)+\frac{\frac{1}{\rho}-1}{ax^{\rho}+1-a}a\frac{\partial x^\rho}{\partial x}.
\end{align*}
Recall that $\frac{\partial x^\rho}{\partial x}=x^{\rho}(\ln(x)+\frac{\rho}{x}x'(\rho))$. Thus, we have
\begin{align*}
    x'(\rho)\Big(1-\frac{\frac{1}{\rho}-1}{ax^{\rho}+1-a}\rho ax^{\rho}\Big)=\frac{-1}{\rho^2}\ln(ax^{\rho}+1-a)+(\frac{1}{\rho}-1)\frac{ax^{\rho}\ln(x)}{ax^{\rho}+1-a}.
\end{align*}

Focus on the minimum fixed point $x_1$. We have $<_1<x_0\equiv \left(\frac{1-a}{-a\rho}\right)^{\frac{1}{\rho}}$. By consequence, $1-\frac{\frac{1}{\rho}-1}{ax_1^{\rho}+1-a}\rho ax_1^{\rho}<0$. Therefore, \begin{align}\label{sign1}
    Sign(x_1'(\rho))&=-Sign\Big(\frac{-1}{\rho^2}\ln(ax_1^{\rho}+1-a)+(\frac{1}{\rho}-1)\frac{ax_1^{\rho}\ln(x_1)}{ax_1^{\rho}+1-a}\Big).
\\
\label{sign2}
&=Sign\Big((ax_1^{\rho}+1-a)\ln(ax_1^{\rho}+1-a)-(1-\rho)ax_1^{\rho}\ln(x_1^{\rho})\Big).
\end{align}
Denote $y_1\equiv x_1^{\rho}$ and define the function $B(y)\equiv (ay+1-a)\ln(ay+1-a)-(1-\rho)ay\ln(y)$.

We have 
\begin{align}
B'(y)&=\rho a+a\big(\ln(ay+1-a)-(1-\rho)\ln(y)\big)\\
B^{\prime\prime}(y)&=\frac{a}{ay+(1-a)y}\big(\rho a y-(1-\rho)(1-a)\big).
\end{align}
Since $\rho<0$, we have $B^{\prime\prime}(y)<0$. 
Observe that 
\begin{align}
    B(0)&=(1-a)\ln(1-a)<0,& B(1)&=0, & B(\infty)&=-\infty,\\
    B'(0)&=\infty,& B'(1)&=\rho a<0,&B'(\infty)&=-\infty.
\end{align}
 So, there exists a unique $y_*\in (0,\infty)$ such that $B'(y_*)=0$, $B'(y)>0$ if $y<y_*$, $B'(y)<0$ if $y>y_*$.  Note that $y_*\in (0,1)$ (because $B'(0)=\infty, B'(1)=\rho a<0$) and it only depends on $\rho,a$.  Then, $B(y_*)$ is the maximum value of $B(y)$.  Recall that $B(1)=0$. Of course, $B(y_*)>0$. 
 
 Therefore, there exists a unique $y_s\not =1$ such that $B(y_s)=B(1)=0$. Moreover, $y_s\in (0,y_*).$ $B(y)>0$ iff $y\in (y_s,1)$. $B(y)<0$ iff $y\in (0,y_s)\cup (1,\infty)$.  By consequence, we obtain claims (\ref{cesrho1}) and (\ref{cesrho2}) in Lemma \ref{CesFull}.

We now prove a last statement in claim (\ref{cesrho1}) in Lemma \ref{CesFull}. Observe that if $x_1<1$, then $\frac{-1}{\rho^2}\ln(ax_1^{\rho}+1-a)+(\frac{1}{\rho}-1)\frac{ax_1^{\rho}\ln(x_1)}{ax_1^{\rho}+1-a}>0$, and hence $x_1'(\rho)<0$.

We now assume that  $x_0<1$,  which happens iff $\rho>-\frac{a}{1-a}$, i.e., $\rho$ is not so small (or, equivalently, the elasticity of factor  substitution $\frac{1}{1-\rho}$ is quite high, in the sense that $\frac{1}{1-\rho}>1-a$), then we have $x_1<1$ (because $x_1<x_0$). Consequently, $x_1'(\rho)<0$ if $x_0<1$.

\end{proof}

{\small

}
\end{document}